\definecolor{blueviolet}{rgb}{0.2, 0.2, 0.6}
\definecolor{webgreen}{rgb}{0,.5,0}
\definecolor{webbrown}{rgb}{.6,0,0}
\DeclareMathOperator{\Expect}{\mathbb{E}}
\DeclareFixedFont{\ttb}{T1}{txtt}{bx}{n}{9} 
\DeclareFixedFont{\ttm}{T1}{txtt}{m}{n}{9}  
\definecolor{deepblue}{rgb}{0,0,0.5}
\definecolor{deepred}{rgb}{0.6,0,0}
\definecolor{deepgreen}{rgb}{0,0.5,0}
\newcommand\pythonstyle{\lstset{
language=Python,
basicstyle=\ttm,
morekeywords={self},              
keywordstyle=\ttb\color{deepblue},
emph={MyClass,__init__},          
emphstyle=\ttb\color{deepred},    
stringstyle=\color{deepgreen},
frame=tb,                         
showstringspaces=false
}}
\newcommand\pythoninline[1]{{\pythonstyle\lstinline!#1!}}
\def\bra#1{\ensuremath{\mathinner{\langle{#1}|}}}
\def\ket#1{\ensuremath{\mathinner{|{#1}\rangle}}}
\newcommand{\norm}[1]{\left\lVert#1\right\rVert}
\newcommand{\expval}[1]{\langle #1\rangle}
\newcommand{\tr}{\text{tr}}
\newcommand{\Tr}{\text{tr}}
\newsavebox{\mstrut}
\newcommand{\bbra}[1]{%
    \sbox{\mstrut}{\(#1\)}%
    \mathinner{\left\langle\kern-0.5\ht\mstrut\left\langle{#1}\right|\mkern-2mu\right|}%
}
\newcommand{\kett}[1]{%
    \sbox{\mstrut}{\(#1\)}%
    \mathinner{\left|\mkern-2mu\left|{#1}\right\rangle\kern-0.5\ht\mstrut\right\rangle}%
}
\newtheorem{proposition}{Proposition}
\newtheorem{lemma}{Lemma}
\newtheorem*{rep@proposition}{\rep@title}
\newcommand{\newrepproposition}[2]{%
\newenvironment{rep#1}[1]{%
 \def\rep@title{#2 \ref{##1}}%
 \begin{rep@proposition}}%
 {\end{rep@proposition}}}
\newtheorem*{rep@theorem}{\rep@title}
\newcommand{\newreptheorem}[2]{%
\newenvironment{rep#1}[1]{%
 \def\rep@title{#2 \ref{##1}}%
 \begin{rep@theorem}}%
 {\end{rep@theorem}}}
\newtheorem*{rep@definition}{\rep@title}
\newcommand{\newrepdefinition}[2]{%
\newenvironment{rep#1}[1]{%
 \def\rep@title{#2 \ref{##1}}%
 \begin{rep@definition}}%
 {\end{rep@definition}}}
\newtheorem{theorem}{Theorem}
\newtheorem{corollary}{Corollary}
\newtheorem{definition}{Definition}
\newcommand{\equref}[1]{Eq.~\eqref{#1}}
\algrenewcommand\alglinenumber[1]{\sf\scriptsize\color{black}{#1}}
\algrenewcommand\algorithmicrequire{\textbf{Input:}}
\algrenewcommand\algorithmicensure{\textbf{Output:}}
\newif\ifptitle
\newif\ifpnumber
\newcounter{para}
\begin{document}
\doparttoc 
\faketableofcontents 


\title{Derandomized shallow shadows: Efficient Pauli learning with bounded-depth circuits}
\date{\today}
\author{Katherine Van Kirk}
\affiliation{Department of Physics, Harvard University, Cambridge, MA 02138, USA}

\author{Christian Kokail}
\affiliation{Department of Physics, Harvard University, Cambridge, MA 02138, USA}
\affiliation{ITAMP, Harvard-Smithsonian Center for Astrophysics, Cambridge, Massachusetts, 02138, USA}

\author{Jonathan Kunjummen}
\affiliation{Joint Quantum Institute, University of Maryland, College Park, Maryland 20742, USA}
\affiliation{Joint Center for Quantum Information and Computer Science, University of Maryland, College Park,
Maryland 20742, USA}
\affiliation{National Institute of Standards and Technology, Gaithersburg, Maryland 20899, USA}

\author{Hong-Ye Hu}
\affiliation{Department of Physics, Harvard University, Cambridge, MA 02138, USA}

\author{Yanting Teng}
\affiliation{Department of Physics, Harvard University, Cambridge, MA 02138, USA}
\affiliation{Institute of Physics, Ecole Polytechnique Fédéderale de Lausanne (EPFL), CH-1015, Lausanne, Switzerland}

\author{Madelyn Cain}
\affiliation{Department of Physics, Harvard University, Cambridge, MA 02138, USA}

\author{Jacob Taylor}
\affiliation{Joint Quantum Institute, University of Maryland, College Park, Maryland 20742, USA}
\affiliation{Joint Center for Quantum Information and Computer Science, University of Maryland, College Park,
Maryland 20742, USA}
\affiliation{National Institute of Standards and Technology, Gaithersburg, Maryland 20899, USA}

\author{Susanne F. Yelin}
\affiliation{Department of Physics, Harvard University, Cambridge, MA 02138, USA}

\author{Hannes Pichler}
\affiliation{Institute for Theoretical Physics, University of Innsbruck, Innsbruck, 6020, Austria}
\affiliation{
Institute for Quantum Optics and Quantum Information of the Austrian Academy of Sciences, Innsbruck, 6020, Austria}

\author{Mikhail Lukin}
\email{lukin@physics.harvard.edu}
\affiliation{Department of Physics, Harvard University, Cambridge, MA 02138, USA}

\begin{abstract}

Efficiently estimating large numbers of non-commuting observables is an important subroutine of many quantum science tasks. 
We present the derandomized shallow shadow (DSS) algorithm for efficiently learning a large set of non-commuting observables, using shallow circuits to rotate into measurement bases.
Exploiting tensor network techniques to ensure polynomial scaling of classical resources, our algorithm outputs a set of shallow measurement circuits that approximately minimizes the sample complexity of estimating a given set of Pauli strings.
We numerically demonstrate systematic improvement, in comparison with state-of-the-art techniques, for energy estimation of quantum chemistry benchmarks and verification of quantum many-body systems, and 
we observe DSS's performance consistently improves as one allows deeper measurement circuits.
These results indicate that in addition to being an efficient, low-depth, stand-alone algorithm, DSS can also benefit many larger quantum algorithms requiring estimation of multiple non-commuting observables.

\end{abstract}

\maketitle

Recent advances in quantum science have led to the development of programmable quantum devices \cite{jurcevic2021demonstration,ryan2022implementing,ebadi2021quantum}, opening new avenues for applications in quantum chemistry \cite{hempel2018quantum,kandala2017hardware,google2020hartree}, 
materials science \cite{cerezo2021variational,bharti2022noisy}, and 
quantum optimization \cite{huang2021power,huang2019near, patel2024variational}. 
A significant milestone in this field is the emergence of digital hardware, including early fault-tolerant quantum devices \cite{anderson_2021, Postler2022, ryananderson2022implementing, Sivak2023, bluvstein_2023, acharya2024quantumerrorcorrectionsurface}, capable of implementing complex circuits with built-in error detection mechanisms. The increasing capability of these systems to prepare highly-complex quantum states necessitates the development of efficient protocols that probe their properties. 
Specifically, efficient estimation, of a large number of non-commuting observables, represents an essential subroutine for many near-term algorithms such as variational quantum optimization \cite{mcclean2018barren,commeau2020variational,cirstoiu2020variational,cerezo2021variational}, while at same time being vital for verification of quantum devices \cite{PRXQuantum.2.010102}. 

For measuring a given set of observables $\{P_k \}_{k = 1}^M$, our task is to find a set of efficiently implementable unitary transformations $\{U_i \}_{i = 1}^N$ such that each $U_i$ diagonalizes a subset of the observables $\{P_k\}$, enabling their direct measurement. In this Letter, we focus on Pauli observables, which are relevant for a large variety of applications \cite{li2020hamiltonian, huang2023learning, kokail2021quantum,cao2019quantum,colless2018computation,hempel2018quantum,huang2021near,mcardle2020quantum,mcclean2018barren,commeau2020variational,cirstoiu2020variational,cerezo2021variational} 
and can be efficiently diagonalized using Clifford circuits. While one could directly measure each $P_k$ one-at-a-time, various recent algorithms group Pauli strings into a small number of commuting sets 
\cite{yen2023deterministic, veltheim2024multiset, dalfavero2023k, wu2023overlapped},
i.e. finding some small set of unitary circuits $U_i$ that collectively diagonalize all observables. However, the depth of these circuits generally scales linearly with the system size \cite{aaronson2004improved}, making them challenging to implement with high fidelity due to the large number of two-qubit gates and their associated error rates \cite{evered2023high, google2023suppressing, ionq}.

Many modern strategies, that have proven practical on near term devices, are based on randomized measurements \cite{elben2022randomized, huang2020predicting,bertoni2022shallow,ippoliti2023operator, van2022hardware,king2024triply,hu2023classical,Bu24,matchgate,PhysRevResearch.6.043118,Ippoliti2024classicalshadows,contrastive_shallow,Bringewatt2024randomized}. Within the framework of classical shadow tomography \cite{huang2020predicting}, a set of snapshots $\{ \hat{\rho}_i \}$ of the quantum state $\rho$ is produced by processing the data acquired from sampling the state after applying a unitary $U_i$ randomly-drawn from an ensemble $\mathcal{U}$. Classical shadows can utilize these snapshots to estimate the expectation values of $M$ observables using only $\mathcal{O}(\log M)$ measurements \cite{huang2020predicting, king2024triply, chen2024optimal}. 
Moreover, recent investigations, into classical shadows protocols utilizing shallow quantum circuits $U_i$ \cite{bertoni2022shallow, ippoliti2023operator, hu2023classical}, demonstrate that shallow circuits can provide significant improvements in efficiency over single-qubit methods. 

\begin{figure*}[t]
\centering
\includegraphics[scale = 0.43]{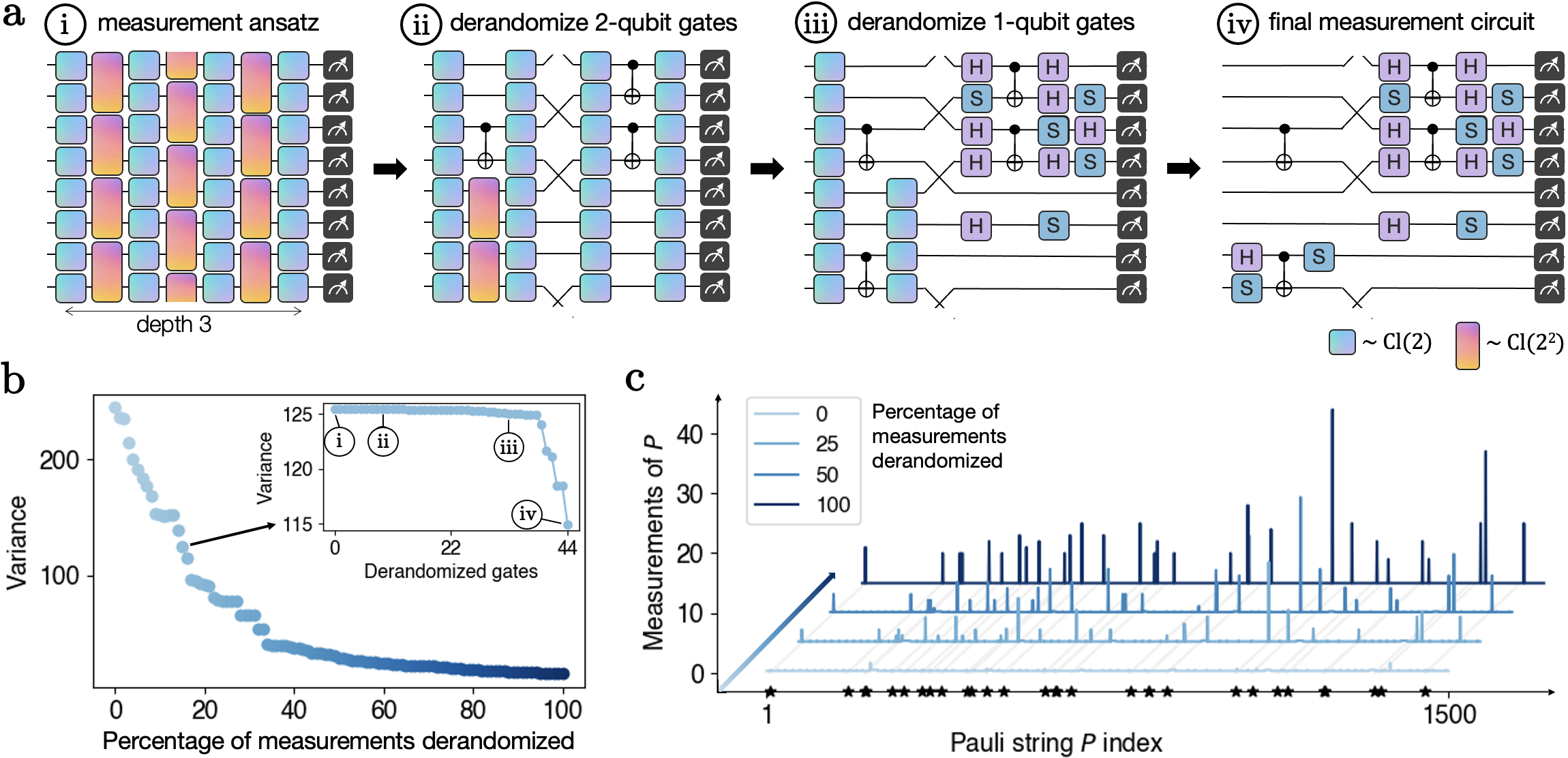}
\caption{\emph{Derandomized shallow shadows (DSS) algorithm.}
Given a measurement budget of $100$ measurements of depth $d=3$ and $30$ Pauli strings to learn, the classical DSS algorithm specifies the measurements we should make. 
(a) Derandomization of the $17$th measurement circuit:
each measurement circuit is initially $d$ layers of random Clifford gates (circuit i). The algorithm goes gate-by-gate, first fixing each random two-qubit gate (circuit ii), then fixing each single-qubit gate (circuit iii), and ending with a fully deterministic circuit (circuit iv) where all single qubit rotations are decomposed into S and H gates \cite{nielsen_chuang_2010}.  
(b) As this derandomization subroutine is performed on each of our $100$ measurement circuits, the variance of learning our $30$ Pauli strings decreases. 
The inset shows how the variance decreases with each gate that is fixed in the $17$th measurement. The greatest improvement comes when the final layer of gates is derandomized: suddenly we no longer have random gates and thus learn a fixed set of Paulis. 
(c) We show how many times, out of our 100 measurements, we probabilistically learn each of our 30 Paulis of interest (indices labeled with a star).  As we derandomize our circuit, our probability of learning each starred Pauli peaks.
}
\label{fig:Figure1}
\end{figure*}

While classical shadows can exploit short-range entanglement generated by shallow circuits, their inherent randomness causes them to probe the entire space of operators \cite{van2022hardware}, even though only a subset of observables is typically relevant. For many important applications, such as estimating the energy of many-body electronic Hamiltonians in quantum chemistry \cite{PhysRevResearch.4.033154,PhysRevX.8.011044,Cao} or Hamiltonian learning \cite{bairey2019learning, li2020hamiltonian}, the relevant Pauli strings are known in advance. In such cases, schemes tailored to specific observables would enhance efficiency and are thus highly desired. Although tailored approaches have been developed for schemes based on single-qubit rotations \cite{huang2021efficient,van2022hardware}, it remains an open challenge how to optimally leverage the short-range entanglement \cite{2024arXiv240217911H, schuster2024random,2023arXiv231211840D} produced by shallow circuits.

In this work
we propose a derandomization procedure for shallow shadows, demonstrating systematic improvements with increasing circuit depth across various Pauli estimation tasks. Our approach is based on optimizing shallow circuits to maximize the probability of learning a specific set of Pauli strings. We find effective, shallow circuits by tracking the Pauli probability distributions using tensor network techniques \cite{2024arXiv240217911H, bertoni2022shallow} that mimic classical Markovian processes.  
With the ablility to calculate these probabilities, we can assess a set of circuits $\{U_i\}_{i=1}^N$ with a cost function quantifying how often each of the Pauli strings is learned.  
We use this cost function to identify a good set of circuits. More 
specifically, our derandomization procedure begins with a view of the entire landscape of all possible $\{U_i\}_{i=1}^N$ circuit choices, and then iteratively hones in on regions that are effective for learning the specific set of Pauli strings. 
As discussed below, we find our efficient procedure outperforms all previous state-of-the-art bounded-depth learning strategies.

\vspace{1mm}
\textit{Derandomized Shallow Shadows Algorithm.}
Our derandomized shallow shadows (DSS) algorithm systematically determines what shallow circuits efficiently learn a chosen set of Pauli strings. 
Given a measurement budget $N$, it outputs shallow measurement circuits $\{U^\text{DSS}_i\}_{i=1}^N$. 
To illustrate the key idea,
consider the following example learning problem: estimate $30$ Pauli strings $\{P\}$ on $n=8$ qubits using $N=100$ measurements with depth at most $d=3$. 
The $30$ Pauli strings we consider are randomly chosen and listed in Appendix \ref{app:DSSalgorithm}. 
For each measurement $i \in \{1,...,N\}$, we start with a shallow shadows ensemble $\mathcal{U}_i$ \cite{bertoni2022shallow, ippoliti2023operator, hu2023classical}. 
These ensembles are defined by a circuit ansatz with $d=3$ layers of two-qubit gates, interleaved with single-qubit rotations, and each gate is uniformly sampled from the local Clifford group 
(see circuit (i) in Figure \ref{fig:Figure1}(a)).
We derandomize each measurement by  
replacing the ensemble $\mathcal{U}_i$ with a properly-chosen, fixed unitary $U_i^\text{DSS}$. 
To choose $U_i^\text{DSS}$ we sequentially fix each Clifford gate in measurement $i$’s ansatz such that it is no longer randomly sampled.
The two-qubit gates become \textsf{Identity}, \textsf{CNOT} or \textsf{SWAP}, and the single qubit rotations become one of the 6 independent rotations -- up to a phase \cite{sixindependentchoices} -- in the group of single-qubit Cliffords, $\text{Cl}(2)$. 
From these options, we choose the gate that minimizes our cost function, 
\begin{equation} \label{eqn:costfunction_maintext}
    \textnormal{\textsf{COST}}_\epsilon 
    \left(\{\mathcal{U}_i\}_{i=1}^N\right)
    = \sum_P w_P \prod_{i=1}^{N} \exp{\left[ \hspace{1mm} - \frac{\epsilon^2}{2} 
    p_i\left(P\right)
    \hspace{1mm} \right]}.
\end{equation}
Here, $\epsilon$ is a hyperparameter controlling the magnitude of our \textsf{COST} function landscape. 
The weights $w_P$ indicate relative importance of each Pauli $P$ ($w_P = 1$ for all $P$ in Figure \ref{fig:Figure1}), and $p_i(P)$ is the probability that the $i$th measurement circuit $U \sim \mathcal{U}_i$ diagonalizes $P$,
\begin{equation}
    p_i\left(P\right) 
    = \frac{1}{2^n} \Expect_{U\sim \mathcal{U}_i} \sum_{b \in \{0,1\}^n}  \bra{b} U P U^\dagger \ket{b}^2.
\end{equation}
Minimizing \textsf{COST} corresponds to maximizing the  probability each Pauli string is learned, weighted by their importance $w_P$.
Figure \ref{fig:Figure1}(a) shows snapshots of the $i=17$th measurement circuit’s derandomization; as expected, the final circuit (circuit (iv)) no longer contains randomly-sampled gates.  
We apply the derandomization procedure sequentially to each of our 100 measurement ensembles $\{\mathcal{U}_i\}_{i=1}^{100}$, ending with 100 delta distributions that determine the final $100$ measurement circuits $\{U^\text{DSS}_i\}_{i=1}^{100}$. 
Each measurement circuit corresponds to a single shot of our protocol, and $\{U^\text{DSS}_i\}_{i=1}^{100}$ could contain the same circuit multiple times. 
We then implement these measurement circuits on our quantum simulator and obtain bit strings $\{b_i\}_{i=1}^{100}$, where $b_i \in \{0,1\}^n$. 
Since our measurement circuits are Clifford rotations, we can use these results to efficiently estimate our Pauli strings with the empirical average 
\begin{equation}
\label{eqn:estimator_maintext}
    \hat{o}(P) = \frac{1}{N_P} \sum_i \bra{b_i} U^\text{DSS}_i P U^{\text{DSS}\dagger}_i \ket{b_i},
\end{equation}
which provides an estimator for the exact expectation value $\expval{P}$.
Here, $N_P$ is the number of DSS unitaries that diagonalize $P$.

\textit{Precision and Efficiency.}
Minimizing the \textsf{COST} function of \equref{eqn:costfunction_maintext} allows us to find effective measurement circuits $\{U_i^\text{DSS}\}_{i=1}^N$ for our learning problem. 
The \textsf{COST} function value bounds the confidence with which we estimate our $30$ Pauli strings of interest and, therefore, leads to precision guarantees in learning Pauli expectation values (see Appendix \ref{app:Performance_guarantees}).
This was first pioneered in Ref \cite{huang2021efficient} for local Pauli rotations, and here we generalize it to shallow circuits, in the form of the following: 
\begin{theorem}
    \textbf{(informal)}
    Given Pauli strings $\{P\}$, fix some desired precision $\epsilon \in (0,1)$.  
    Using the measurement circuits from DSS's final delta distributions $\{\mathcal{U}_i\}_{i=1}^N$,
    the estimates $\hat{o}(P)$ 
    all achieve precision
     \begin{equation}
          |\expval{P} - \hat{o}(P)| \leq \epsilon \hspace{5mm} \forall P \in \{P\}
     \end{equation}
     with probability (at least) $1-\delta$, where the final unweighted cost function $\textnormal{\textsf{COST}}_\epsilon
     \left(\{\mathcal{U}_i\}_{i=1}^N\right) \leq \delta/2$.

\end{theorem}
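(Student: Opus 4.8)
The plan is to prove the bound by a union bound over the target Paulis, a Hoeffding-type concentration inequality for each individual estimator, and the observation that the resulting failure-probability bound is \emph{exactly} twice the unweighted \textsf{COST} function evaluated on the final delta distributions.

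First I would fix a single $P\in\{P\}$ and analyze $\hat o(P)$ directly. Let $\mathcal{I}_P=\{i: U_i^\text{DSS}\text{ diagonalizes }P\}$ and $N_P=|\mathcal{I}_P|$. For $i\in\mathcal{I}_P$ the Clifford conjugate $U_i^\text{DSS} P\,U_i^{\text{DSS}\dagger}$ is a diagonal matrix with $\pm1$ entries, so the single-shot quantity $\langle b_i|U_i^\text{DSS} P\,U_i^{\text{DSS}\dagger}|b_i\rangle$ is a $\{\pm1\}$-valued random variable; a one-line Born-rule computation gives $\Expect_{b_i}\langle b_i|U_i^\text{DSS} P\,U_i^{\text{DSS}\dagger}|b_i\rangle=\tr\!\big(U_i^\text{DSS}\rho\,U_i^{\text{DSS}\dagger}\,U_i^\text{DSS} P\,U_i^{\text{DSS}\dagger}\big)=\tr(\rho P)=\expval{P}$. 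These single-shot terms are independent across shots and lie in $[-1,1]$, so Hoeffding's inequality gives $\Pr\!\big[\,|\hat o(P)-\expval{P}|>\epsilon\,\big]\le 2\exp(-N_P\epsilon^2/2)$, provided $N_P\ge1$ so that $\hat o(P)$ is defined.

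Next I would take a union bound over all $P\in\{P\}$, so the probability that some estimate deviates by more than $\epsilon$ is at most $\sum_P 2\exp(-N_P\epsilon^2/2)$. The key step tying this to the claim is that, for a delta distribution, $p_i(P)=\tfrac1{2^n}\sum_b\langle b|U_i^\text{DSS} P\,U_i^{\text{DSS}\dagger}|b\rangle^2\in\{0,1\}$: it equals $1$ precisely when $U_i^\text{DSS}$ diagonalizes $P$ and $0$ otherwise, since Clifford conjugation sends $P$ to $\pm$ a Pauli whose diagonal is all $\pm1$ (if it is a $Z$-type string) or identically zero (otherwise). Hence $N_P=\sum_{i=1}^N p_i(P)$ exactly and $\exp(-N_P\epsilon^2/2)=\prod_{i=1}^N\exp(-\tfrac{\epsilon^2}{2}p_i(P))$, so the union bound collapses to $2\,\textsf{COST}_\epsilon(\{\mathcal{U}_i\}_{i=1}^N)$ with all $w_P=1$, which is $\le\delta$ by hypothesis. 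Moreover $\textsf{COST}_\epsilon\le\delta/2<1$ forces each summand $\exp(-N_P\epsilon^2/2)<1$, i.e. $N_P\ge1$, so every estimator is indeed well defined; this closes the argument.

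Two points need care in a full write-up, and the second is the only genuinely delicate one. (i) The reduction from general randomized ensembles to deterministic circuits: at intermediate stages of derandomization $p_i(P)\in[0,1]$, and following Ref.~\cite{huang2021efficient} one shows that $\textsf{COST}_\epsilon$ upper bounds the \emph{expected} failure probability at every stage and that fixing each further gate to the cost-minimizing choice never increases it (a conditional-expectation / derandomization argument), so the final cost is both meaningful and the quantity the algorithm actually controls. The theorem as stated is conditional on this final cost, so this monotonicity is not logically needed, only motivational. (ii) Justifying the exponential form of the summand in general: for a nontrivial ensemble $\Expect_{U\sim\mathcal{U}_i}\exp(-\tfrac{\epsilon^2}{2}\indicator[U\text{ diagonalizes }P])=1-p_i(P)\big(1-e^{-\epsilon^2/2}\big)$, which agrees with $\exp(-\tfrac{\epsilon^2}{2}p_i(P))$ \emph{only} when $p_i(P)\in\{0,1\}$. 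This is exactly why the theorem is phrased for the final delta distributions; I would flag that the bound for intermediate circuits should be stated with the $1-p_i(P)(1-e^{-\epsilon^2/2})$ factors, which coincide with the product in \equref{eqn:costfunction_maintext} once the ensembles become deterministic.
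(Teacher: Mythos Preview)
Your proof is correct and follows essentially the same route as the paper: per-Pauli Hoeffding on the $\{\pm1\}$-valued single-shot estimates, a union bound over $\{P\}$, and the identification of the resulting $\sum_P 2\exp(-N_P\epsilon^2/2)$ with $2\,\textsf{COST}_\epsilon$ via $p_i(P)\in\{0,1\}$ for the final delta distributions (the paper packages this as a ``Confidence Bound'' lemma followed by the theorem). Your observation that $\textsf{COST}_\epsilon<1$ forces $N_P\ge1$ for every $P$, and your point~(ii) that the product form $\prod_i\exp(-\tfrac{\epsilon^2}{2}p_i(P))$ only coincides with the exact expected Hoeffding factor at $p_i(P)\in\{0,1\}$, are both sharper than what the paper makes explicit; the paper simply restricts the theorem to the final deterministic circuits and handles the $N_P=0$ edge case separately by convention rather than by your argument.
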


\noindent Crucially, the protocol's final \textsf{COST} value immediately provides a precision guarantee across our estimates. 
While this is recapitulated with rigorous analysis in Appendix~\ref{app:Performance_guarantees}, the guarantee  follows from the confidence’s relationship with variance.  
A central metric of any learning protocol,
the variance of an estimated observable represents how efficiently we learn it 
-- for example, a smaller variance indicates that, across the measurements made, the observable is often learned. 
Moreover, we can see the variance improve in our example learning problem: in Figure \ref{fig:Figure1}(b) the variance 
decreases as we derandomize each of our 100 measurements.
The variance plotted is the convex combination of the variance of our $30$ individual Pauli strings, and since we do not assume prior knowledge of the underlying state, each observable’s variance is maximized over all states.

Moreover, consider the $\textsf{COST}$ function landscape over all final $N$ measurement circuit choices $\{U^\text{DSS}_i\}_{i=1}^{N}$. 
Each ensemble $\mathcal{U}_i$ starting in a statistical mixture over depth-$d$ Clifford rotations (Figure \ref{fig:Figure1}(a)) allows the algorithm to ``see’’ this entire landscape. Then, as we derandomize each ensemble, our statistical mixture restricts to a smaller and smaller region of the landscape. This continues until all gates in all ensembles have been fixed, and we converge to one point in the landscape.  
See Figure \ref{fig:Figure1}(c) to affirm this intuition. 
When each ensemble begins as a statistical mixture over Clifford rotations, we can learn any Pauli string with low but nonzero probability (lightest blue line). 
Then, as we derandomize each measurement, we become probabilistically more likely to learn our $30$ Paulis of interest. 
Once all gates in all ensembles are derandomized, we then have $N$ fixed measurement circuits, which often learn our chosen Pauli strings (darkest blue line). 
In Appendix A we give examples where DSS finds the global minimum of the $\textsf{COST}$ landscape, and
in Appendix~\ref{app:Performance_guarantees} 
we formalize the improvement on shallow shadows:  
we show a slightly modified version of DSS will always perform at least as well as shallow shadows \cite{bertoni2022shallow, ippoliti2023operator, hu2023classical}.

Finally, in order to implement DSS at scale, we must be able to efficiently evaluate our \textsf{COST} function, which requires computing the probabilities $p_i(P)$. 
This proves nontrivial: one must track how our (sometimes probabilistic) circuits $U\sim \mathcal{U}_i$ transform $P$, which via direct simulation requires space exponential in the system size. 
However, we bypass this overhead and utilize resources only polynomial in the system size via tensor network techniques, which mimic random Markovian classical processes and track how our probabilistic Clifford rotations transform our Paulis. 
See Appendix~\ref{app:TensorNetwork_for_cost_function} for a pedagogical description of our technique, which builds on ideas introduced in 
Refs \cite{2024arXiv240217911H,bertoni2022shallow}.
We also prove this technique's correctness and efficiency, giving the following guarantee: 
\begin{theorem} \textbf{(informal)}
    Given a set $\{P\}$ of Pauli strings on $n$ qubits and a measurement budget of $N$ at most $\mathcal{O}(\textnormal{poly}\log(n))$ depth circuits, the DSS algorithm requires time $\mathcal{O}(|\{P\}| \times N \times \textnormal{poly}(n))$. 
\end{theorem}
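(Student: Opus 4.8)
\emph{Proof proposal.} The plan is to split the running time into (i) the number of elementary ``gate-fixing'' trials that the derandomization performs, and (ii) the cost of each trial, which is dominated by re-evaluating the \textsf{COST} function of \equref{eqn:costfunction_maintext}. For (i): DSS iterates over the $N$ measurement ensembles, and for each one it sweeps through the gates of a depth-$d$ brickwork ansatz on $n$ qubits, which contains $O(nd)$ gates (roughly $nd/2$ two-qubit gates interleaved with $O(nd)$ single-qubit Clifford rotations). Each gate is fixed by testing a constant number of replacements -- three for a two-qubit gate (\textsf{Identity}, \textsf{CNOT}, \textsf{SWAP}) and six for a single-qubit Clifford rotation. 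Hence the total number of candidate configurations ever scored is $O(Nnd)$.

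For (ii), the key observation is that fixing a gate inside the $i$-th ensemble changes only the probabilities $p_i(P)$; every $p_j(P)$ with $j\neq i$ is untouched. So I would maintain, for each $P\in\{P\}$, the running partial sum $\Sigma_{\neq i}(P):=\sum_{j\neq i}p_j(P)$, which is constant throughout the sweep over measurement $i$ and can be refreshed cheaply when moving to measurement $i{+}1$. A single trial then amounts to recomputing $p_i(P)$ for each $P$ and evaluating $\textsf{COST}=\sum_P w_P\exp[-\tfrac{\epsilon^2}{2}(\Sigma_{\neq i}(P)+p_i(P))]$, a sum of $|\{P\}|$ terms. Thus each trial costs $O(|\{P\}|\cdot(T_p+1))$, where $T_p$ is the time to evaluate one probability $p_i(P)$ for a given ansatz of measurement $i$ (part of which may already be fixed, part still random).

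The crux is bounding $T_p$, which is precisely the efficiency half of the tensor-network construction of Appendix~\ref{app:TensorNetwork_for_cost_function}. The argument I would run: since each $U\sim\mathcal{U}_i$ is Clifford, $UPU^\dagger$ is a single Pauli and $\sum_b\langle b|UPU^\dagger|b\rangle^2=2^n\,\indicator[UPU^\dagger\text{ is diagonal}]$, so $p_i(P)=\Pr_{U\sim\mathcal{U}_i}[\,UPU^\dagger\in\{I,Z\}^{\otimes n}\,]$. One evolves the ``Pauli type'' of $P$ layer by layer through the brickwork: twirling an as-yet-random local Clifford collapses the relevant bookkeeping onto an $O(1)$-dimensional label per site (a single-qubit Clifford twirl identifies all non-identity Paulis, a two-qubit Clifford twirl acts transitively on non-identity Paulis of the pair), while a gate that has already been fixed acts as a deterministic, bond-dimension-one map on types. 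This produces an $n\times d$ tensor network; contracting it column by column leaves an open boundary carrying $O(d)$ legs, so the cost is $n\cdot\mathrm{poly}(\chi_d)$ with $\chi_d$ the boundary bond dimension, and the technical content is the bound showing $\chi_d$, hence $T_p$, remains $\mathrm{poly}(n)$ once $d=O(\mathrm{poly}\log n)$. I expect this bond-dimension control to be the main obstacle: one must show that the correlations generated by shallow Clifford twirls do not inflate the intermediate tensors past polynomial size (leveraging the shallow depth together with the low-rank, permutation-symmetric structure of the twirled gates), and one must separately verify correctness -- that the contraction really outputs $p_i(P)$ -- which is the other half of that appendix.

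Assembling the pieces, the total runtime is $O(Nnd)\times O(|\{P\}|\,T_p)=O\!\big(|\{P\}|\cdot N\cdot nd\cdot\mathrm{poly}(n)\big)$. Since $d=O(\mathrm{poly}\log n)$, the factor $nd\cdot\mathrm{poly}(n)$ is itself $\mathrm{poly}(n)$, yielding the claimed bound $O(|\{P\}|\times N\times\mathrm{poly}(n))$. The remaining bookkeeping -- decomposing the final single-qubit rotations into $S$ and $H$ gates, drawing the measured bit strings, and forming the estimator $\hat o(P)$ of \equref{eqn:estimator_maintext} -- is of lower order and absorbed into $\mathrm{poly}(n)$.
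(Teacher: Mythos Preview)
Your proposal is correct and follows essentially the same route as the paper: the same $O(Nnd)$ count of gate-fixing trials, the same reduction of each trial to $|\{P\}|$ Pauli-weight evaluations (noting only the active ensemble's weights change), and the same tensor-network argument that each $p_i(P)$ costs $n\cdot\mathrm{poly}(4^{d})$ by contracting a brickwork network whose open boundary carries $O(d)$ legs. The paper's explicit contraction schedule is a ``staircase'' slicing (Supplementary Figure~\ref{fig:contraction}) that yields $n/2$ matrices of size $4^{d-1}\times 4^{d-1}$ (or $2^{d-1}\times 2^{d-1}$ in the signature basis when all single-qubit gates are still twirled), which is exactly the bond-dimension control you flag as the main obstacle; your column-by-column description is an equivalent phrasing.
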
 
\noindent 
Crucially, when our measurements have bounded depth $d=\mathcal{O}(\textnormal{poly}\log(n))$, the DSS algorithm's time complexity scales polynomially with system size, enabling DSS to be used at scale.

\begin{figure}[t]
\centering
\includegraphics[scale = 0.40]{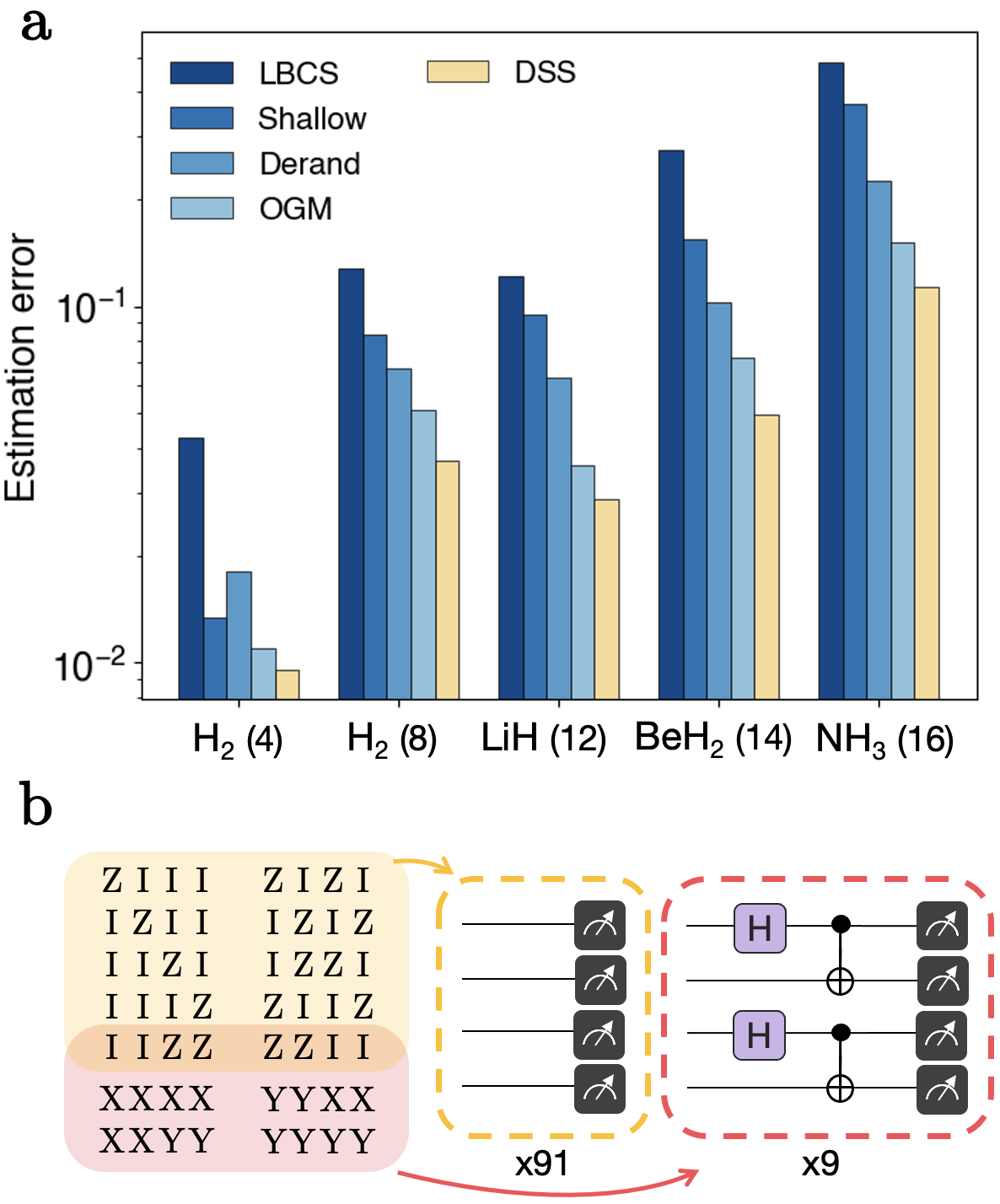}
\caption{\emph{Efficient ground state energy estimation under various quantum chemistry Hamiltonians.}  (a) With only one layer of two-qubit gates ($d=1$), DSS already outperforms previous state-of-the-art techniques for estimating the ground state energy of various molecules (parentheses indicate number of qubits). We plot the estimation error after 1000 measurements, and to avoid fluctuations due to non-trivial variance, the estimation errors reported are averaged over many simulations.  We compare to locally-biased classical shadows (“LBCS”) \cite{hadfield2022measurements}, Random Pauli derandomization (“Derand”) \cite{huang2021efficient}, shallow shadows (“Shallow”) \cite{bertoni2022shallow, ippoliti2023operator, hu2023classical}, and overlapped grouping measurement (“OGM”) \cite{wu2023overlapped}. 
Notice that we consider estimating the ground state energy of $H_2$ with two different representations: 4 and 8 qubits. This corresponds to how one chooses different sets of molecular orbitals when discretizing real space. 
(b) Measurement circuits ($d =1$) for 100 measurements on $H_2$ (4 qubits). We find that  groupings naturally emerge: our DSS algorithm groups Paulis which are simultaneously-diagonalizable under shallow circuits. For 100 measurements, DSS  suggests taking $91$ measurements in the $Z$ basis (yellow) and $9$ in the bell basis (pink). 
}
\label{fig:Figure2}
\end{figure}

\vspace{1mm}
\textit{Efficient energy estimation of quantum chemistry Hamiltonians.}
In quantum chemistry and material science, the electronic structure problem's goal is to estimate the energy of many-body electronic Hamiltonians, whose electronic structure is unknown.
These complicated Hamiltonians with $\mathcal{O}(K)$ fermionic orbitals usually contain $M = \mathcal{O}(K^4)$ terms \cite{PhysRevX.8.011044,Cao}. 
This can manifest as a large number of terms in practice, making direct measurement of each term time-consuming and inefficient \cite{PhysRevResearch.4.033154}. 
Moreover, many quantum algorithms, especially those tailored for near-term quantum devices, require estimating the Hamiltonian -- among other complicated observables -- as a frequent subroutine \cite{2023VQLS,PhysRevX.8.011021,cerezo2021variational,2019arXiv190713623G,adaptiveVQE,Cao,2024arXiv240803987R}. 
Therefore, if not done efficiently, this task can become a bottleneck for not only the electronic structure problem, but also many near-term algorithms. 
We apply DSS to this problem. 

In particular, we estimate the ground state energy of many quantum chemistry Hamiltonians. 
Multiple molecules \cite{charlesgitrepo} have been commonly used in the literature to benchmark different learning strategies \cite{hadfield2022measurements, huang2021efficient,wu2023overlapped}. 
While the ultimate goal is to address molecules with \textit{unknown} ground state energies, the small, classically-solvable Hamiltonians serve as effective benchmarks, allowing us to estimate our procedure's precision under a fixed budget. Therefore, we examine these molecules to compare DSS with previous state-of-the-art strategies.
First, we apply the Jordan-Wigner transformation \cite{JWtransform} to map the second-quantized electronic Hamiltonian to a qubit Hamiltonian $H = \sum_P c_P P$. We then input both the Pauli strings and their corresponding coefficients into the DSS algorithm. These coefficients are used to weight the \textsf{COST} function: $w_p = |c_P|$ in \equref{eqn:costfunction_maintext}. 
As illustrated in Figure \ref{fig:Figure2} (a), our DSS protocol demonstrates state-of-the-art performance in accurately estimating the energy of quantum chemistry Hamiltonians across all benchmarked molecules. With just a single layer of two-qubit gates in our measurement circuits, DSS achieves the smallest estimation error under a fixed measurement budget of $M = 1000$ measurements. A comprehensive description of the previous algorithms with bounded circuit depth can be found in Appendix \ref{app:QuantumChemistry}.

To build intuition on how the protocol assigns measurement bases, consider estimating the terms of $H_2$. 
In Figure \ref{fig:Figure2}(b), we list all the terms in $H_2$'s 4-qubit representation. 
With a measurement budget of $N=100$ depth-$1$ measurements, our DSS technique identifies two measurement settings: 91 measurements using the yellow circuit and 9 using the pink one. 
This distribution of measurements arises because the Pauli $Z$ strings have much larger coefficients than the strings containing $X$ and $Y$ (see Appendix \ref{app:QuantumChemistry}). 
Many previous strategies in the literature \cite{wu2023overlapped, verteletskyi2020measurement, yen2023deterministic, veltheim2024multiset} focus on identifying ``good groups'' of Pauli operators that locally mutually commute. While DSS is not explicitly a grouping strategy, this example illustrates its similarity to such approaches. 
The shaded regions indicate which Pauli strings are simultaneously measured by each circuit.
Moreover, for a budget of $N=100$ depth $d=1$ measurements, the $100$ measurement settings chosen by DSS turn out to be optimal (see Appendix \ref{app:QuantumChemistry}). 
Note that, while we empirically show this for the $H_2$ example, in general finding the optimal measurement strategy is NP-hard \cite{jena2019pauli}.

\vspace{3mm}
\textit{Quantum verification by estimating the energy variance.}
Most verification strategies such as process tomography \cite{PhysRevLett.130.150402}, cross-platform verification \cite{PhysRevLett.124.010504} or Hamiltonian learning \cite{Qi2019determininglocal, PhysRevLett.122.020504, eisert2020quantum} rely on estimating a set of non-commuting observables, whose number rapidly grows with increasing system size.
In this section, we explore combinations of  quantum verification \cite{PRXQuantum.2.010102} strategies with DSS, aiming to improve the sampling complexity with increasing system size. We will focus specifically on measuring the energy variance in ground states of quantum many-body systems. As shown in Ref.~\cite{kokail2019self}, the Hamiltonian variance $\text{var}[H] = \braket{H^2} - \braket{H}^2$ can serve as an algorithmic error bar to quantify the success of the state preparation, vanishing as the state approaches an eigenstate of $H$. Ref.~\cite{huang2020predicting, huang2021efficient} has demonstrated scaling improvement, over naive grouping strategies, by derandomizing classical shadows built from single-qubit measurements. Here we build on these developments, investigating the effect of incorporating finite correlations via DSS. We note that the variance bounds the spectral distance to the closest energy eigenstate $\ket{\psi_{\ell}}$: $|\braket{H} - \varepsilon_{\ell}|^2 \le \text{var}[H]$, and thus serves for verification when measured to a precision below the energy gap. This is especially challenging in systems with small gaps, such as near quantum phase transitions, demanding highly precise schemes.

\begin{figure}[t]
\centering
\includegraphics[scale = 0.234]{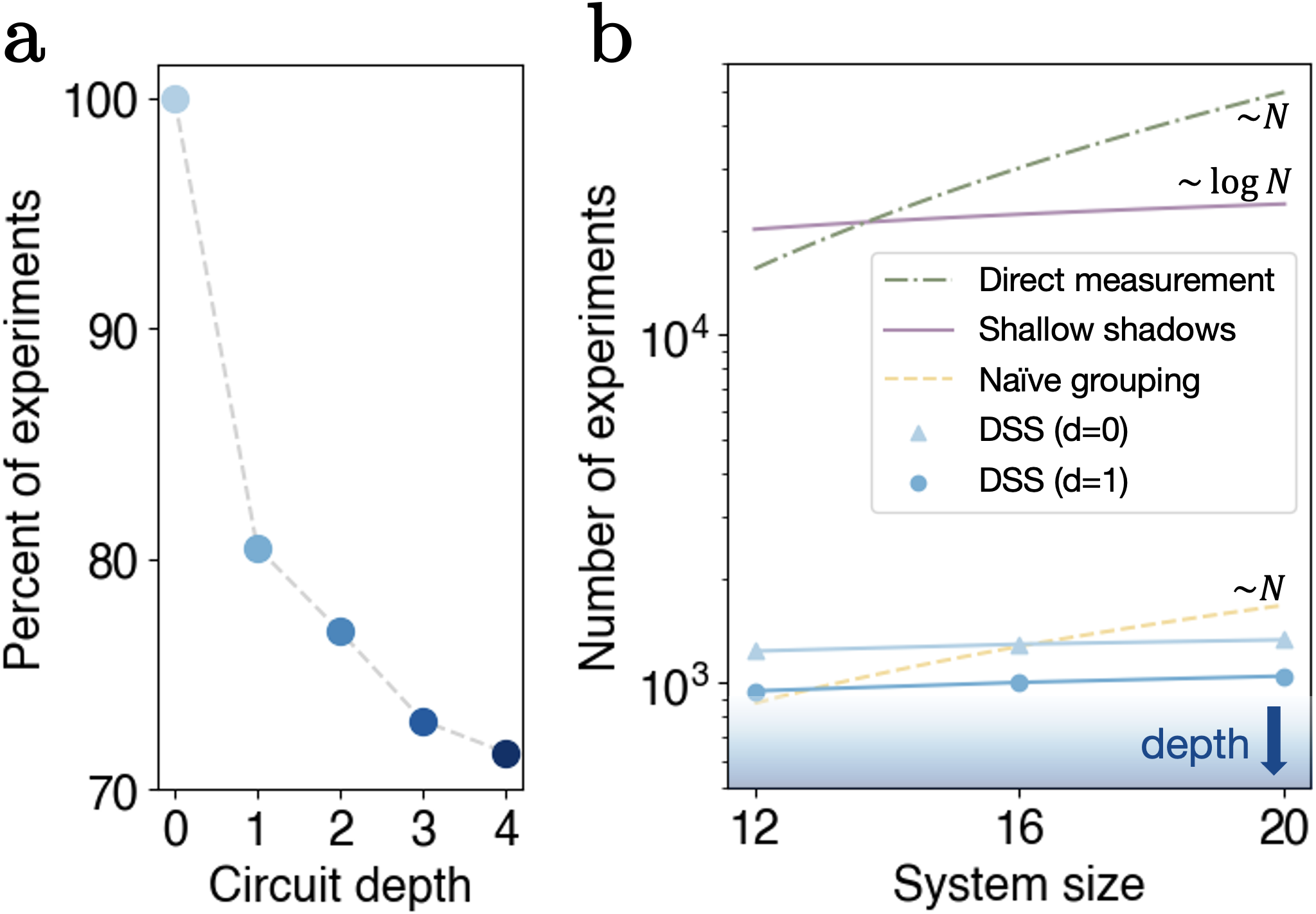}
\caption{\emph{Variational quantum simulation of the Hubbard model.} 
We consider estimating the variance of the Hubbard Hamiltonian at various system sizes, which requires estimating the terms of $H^2$.  In this figure we plot the number of measurements required to learn each term in $H^2$ at least 25 times. 
(a) Here, we consider the Hubbard model on 12 qubits. We normalize the number of experiments by the number required at $d=0$ (1236 experiments). Each point is therefore the percentage, of the number of measurements required at $d=0$, we need as we increase our measurement ansatz depth. 
As we increase the depth, we require fewer measurements, as expected, because we can simultaneously learn more of our Paulis in $H^2$.
(b) We also fix the depth and modulate the system size. We find that at large system sizes our DSS protocol is most efficient -- in particular, DSS with only one layer of two-qubit gates ($d=1$) already outperforms all other strategies at 13 qubits. 
We expect that as we increase the depth of our ansatz, the required number of experiments will continue to decrease -- see blue gradient.}
\label{fig:Figure3}
\end{figure}

We numerically test our approach on a 1D Hubbard chain of spinful fermions, described by the Hamiltonian
\begin{align} \label{eq:hubbard}
    H = -J \sum_{j = 1}^{L-1} \sum_{\sigma} \left( c_{j \sigma}^{\dagger} c_{j+1\, \sigma} + \text{H.c.} \right) + U \sum_{j=1}^L n_{j \uparrow} n_{j \downarrow}.
\end{align}
Here $c_{j \sigma}^{\dagger}$ ($c_{j \sigma}$) denote creation (annihilation) operators of fermions with spin $\sigma$ on lattice site $j$, and $n_{j \sigma} = c_{j \sigma}^{\dagger}c_{j \sigma}$ denotes the local particle density. The system is described by the nearest-neighbor tunneling amplitude $J$ and onsite interaction energy $U$ for two fermions occupying the same lattice site. Mapping the fermion chain to qubits by doubling the lattice and encoding the different spin components on even (odd) sites, the hopping term maps to a Pauli of weight 3:  $\left(\sigma_{j-1}^+ Z_j  \sigma_{j+1}^- + \text{H.c}\right) $. The squared Hamiltonian $H^2$ thus contains weight-6 Pauli strings and the number of Paulis scales quadratically with system size, i.e. for 100 sites $H^2$ consists of $2.4\cdot 10^5$ terms.

We start by studying the performance of DSS for estimating the energy-variance of a 12-qubit Hubbard chain by modulating the depth of the measurement circuit. Figure \ref{fig:Figure3}(a) shows the reduction in the number of required measurements as a function of circuit depth. As we increase the depth, we can simultaneously diagonalize larger sets of Pauli strings, leading to a ~30 \% reduction in the number of required experiments to estimate the Pauli strings up to a given precision.
On the contrary, shallow shadows has some optimal depth, after which recovering small weight observables becomes exponentially costly. 
Due to shallow shadows' randomized layers, the desired information gets scrambled across a larger system from which it has to be recovered probabilistically. 
Therefore, DSS's improvement with depth demonstrates that we utilize the extra layers only when they actually facilitate measuring more observables in parallel.

In Figure \ref{fig:Figure3}(b) we compare the performance of DSS to optimal-depth shallow shadows (see Appendix~\ref{app:HubbardModel} for details) \cite{bertoni2022shallow, ippoliti2023operator, hu2023classical,2024arXiv240217911H}, a na\"ive grouping strategy akin to what was done in Ref \cite{kokail2019self}, and the standard approach that directly measures all observables independently. Again we plot, for the methods considered, the number of experiments 
needed to measure the expectation value of all Pauli observables with an error equivalent to measuring each of these observables at least 25 times. We find that after N=12, our DSS algorithm at depth $d=1$ is already significantly more efficient than all other strategies (note: $d=0$ is equivalent to the Random Pauli derandomization strategy of Ref \cite{huang2021efficient}), and its performance should only improve as we allow higher-depth measurement circuits.

\textit{Discussion.}
Ideal for near-term learning applications, our bounded-depth DSS approach is both efficient and tailorable to the constraints of a wide variety of quantum simulation platforms.
It already outperforms previous state-of-the-art Pauli learning techniques such as OGM \cite{wu2023overlapped}, with just a single layer of two-qubit gates, and we observe that performance consistently improves with access to more layers.  
Moreover, DSS is not only valuable for near-term quantum processors, but also in the fault-tolerant regime. Our measurement circuits only require a Clifford gate set, which can be implemented with low overhead using transversal gates in many quantum error correction codes \cite{steane1996quantum, Dennis2002} such as the surface code and two-dimensional color codes~\cite{bombin2006topological}.
Such transversal gates can be naturally implemented in reconfigurable architectures such as neutral atom arrays~\cite{bluvstein_2023} and trapped ions~\cite{Postler2022, ryananderson2022implementing, menendez2023implementing, wang2023faulttolerant, dasilva2024demonstrationlogicalqubitsrepeated, anderson_teleportation_2024}.
Relevant in both the noisy intermediate-scale quantum (NISQ) era and beyond, DSS can be utilized for any application -- from materials science \cite{cerezo2021variational,bharti2022noisy} to quantum chemistry \cite{hempel2018quantum,kandala2017hardware,google2020hartree} to optimization \cite{huang2021power,huang2019near} -- that requires learning a set of Pauli observables as a subroutine.

Our work can be extended in a number of directions. First, one could design better optimization algorithms for the derandomization procedure. By replacing our greedy optimization with simulated annealing, we could both escape local minima and ensure guaranteed convergence to the optimal measurement strategy in the infinite iteration limit \cite{kirkpatrick1983optimization}. 
One could also consider improving the optimization using machine learning. With generative or language models \cite{wu2024optimization}, instead of fixing individual gates, we would define a probability distribution over all possible circuits \cite{van2022hardware} and refine this distribution to our learning problem. 
Each Clifford gate could be viewed as a ``word'' and each measurement circuit as a ``sentence.'' 
The language model could learn the optimal probability of each ``word'' and ``sentence'' by minimizing the cost function \cite{2024arXiv241016041H}. 
Second, one could modify the ansatz of our derandomization technique by interleaving ancillas among the ``data qubits’’ of our original ansatz. Entanglement of these ancillas with the data qubits -- and then subsequent measurement -- models deeper measurement circuits on just the data qubits \cite{iqbal2024topological}.  
To build intuition for this, consider that preparing a GHZ state on $n$ qubits traditionally requires a depth O($n$) circuit using only nearest-neighbor interactions \cite{nielsen_chuang_2010}. 
Yet, incorporating $n$ ancillas facilitates this in only constant depth \cite{watts2019exponential}. Applied to DSS this trick allows us to consider deeper measurement circuits, while bounding the classical overhead for derandomizing these circuits.

\vspace{3mm}
\textit{Acknowledgements} 
We would like to thank Robert Huang, Simon Evered, Yi-Zhuang You and Jorge Garcia Ponce for useful discussions, and we thank Victor Albert, Srilekha Gandhari, Michael Gullans, and Eric Benck for helpful comments on the manuscript. 
We acknowledge financial support from and the U.S. Department of Energy [DE-SC0021013 and DOE Quantum Systems Accelerator Center (Contract No.: DE-AC02-05CH11231)], the National Science Foundation (grant number PHY-2012023), the Center for Ultracold Atoms (an NSF Physics Frontiers Center), the DARPA ONISQ program (grant number W911NF2010021), and  the DARPA IMPAQT program (grant number HR0011-23-3-0030). 
KVK acknowledges support from the Fannie and John Hertz Foundation and the National Defense Science and Engineering Graduate (NDSEG) fellowship. 
CK acknowledges support from the NSF through a grant for ITAMP at Harvard University.
Funding for JK provided by the NSF-funded NQVL:QSTD: Pilot: Deep Learning on Programmable Quantum Computers. 
HYH acknowledges the support from Harvard Quantum Initiative. 
HP acknowledges funding from European Union’s Horizon Europe research and innovation program under Grant Agreement No.~101113690 (PASQuanS2.1), the ERC Starting grant QARA (Grant No.~101041435), the EU-QUANTERA project TNiSQ (N-6001), the Austrian Science Fund (FWF) (Grant No.~DOI 10.55776/COE1).

\newpage
\bibliography{references}

\begin{thebibliography}{10}

\bibitem{jurcevic2021demonstration}
P.~Jurcevic, A.~Javadi-Abhari, L.~S. Bishop, I.~Lauer, D.~F. Bogorin, M.~Brink,
  L.~Capelluto, O.~G{\"u}nl{\"u}k, T.~Itoko, N.~Kanazawa, {\em et~al.},
  ``Demonstration of quantum volume 64 on a superconducting quantum computing
  system,'' {\em Quantum Science and Technology}, vol.~6, no.~2, p.~025020,
  2021.

\bibitem{ryan2022implementing}
C.~Ryan-Anderson, N.~Brown, M.~Allman, B.~Arkin, G.~Asa-Attuah, C.~Baldwin,
  J.~Berg, J.~Bohnet, S.~Braxton, N.~Burdick, {\em et~al.}, ``Implementing
  fault-tolerant entangling gates on the five-qubit code and the color code,''
  {\em arXiv preprint arXiv:2208.01863}, 2022.

\bibitem{ebadi2021quantum}
S.~Ebadi, T.~T. Wang, H.~Levine, A.~Keesling, G.~Semeghini, A.~Omran,
  D.~Bluvstein, R.~Samajdar, H.~Pichler, W.~W. Ho, {\em et~al.}, ``Quantum
  phases of matter on a 256-atom programmable quantum simulator,'' {\em
  Nature}, vol.~595, no.~7866, pp.~227--232, 2021.

\bibitem{hempel2018quantum}
C.~Hempel, C.~Maier, J.~Romero, J.~McClean, T.~Monz, H.~Shen, P.~Jurcevic,
  B.~P. Lanyon, P.~Love, R.~Babbush, {\em et~al.}, ``Quantum chemistry
  calculations on a trapped-ion quantum simulator,'' {\em Physical Review X},
  vol.~8, no.~3, p.~031022, 2018.

\bibitem{kandala2017hardware}
A.~Kandala, A.~Mezzacapo, K.~Temme, M.~Takita, M.~Brink, J.~M. Chow, and J.~M.
  Gambetta, ``Hardware-efficient variational quantum eigensolver for small
  molecules and quantum magnets,'' {\em nature}, vol.~549, no.~7671,
  pp.~242--246, 2017.

\bibitem{google2020hartree}
G.~A. Quantum, Collaborators*†, F.~Arute, K.~Arya, R.~Babbush, D.~Bacon,
  J.~C. Bardin, R.~Barends, S.~Boixo, M.~Broughton, B.~B. Buckley, {\em
  et~al.}, ``Hartree-fock on a superconducting qubit quantum computer,'' {\em
  Science}, vol.~369, no.~6507, pp.~1084--1089, 2020.

\bibitem{cerezo2021variational}
M.~Cerezo, A.~Arrasmith, R.~Babbush, S.~C. Benjamin, S.~Endo, K.~Fujii, J.~R.
  McClean, K.~Mitarai, X.~Yuan, L.~Cincio, {\em et~al.}, ``Variational quantum
  algorithms,'' {\em Nature Reviews Physics}, vol.~3, no.~9, pp.~625--644,
  2021.

\bibitem{bharti2022noisy}
K.~Bharti, A.~Cervera-Lierta, T.~H. Kyaw, T.~Haug, S.~Alperin-Lea, A.~Anand,
  M.~Degroote, H.~Heimonen, J.~S. Kottmann, T.~Menke, {\em et~al.}, ``Noisy
  intermediate-scale quantum algorithms,'' {\em Reviews of Modern Physics},
  vol.~94, no.~1, p.~015004, 2022.

\bibitem{huang2021power}
H.-Y. Huang, M.~Broughton, M.~Mohseni, R.~Babbush, S.~Boixo, H.~Neven, and
  J.~R. McClean, ``Power of data in quantum machine learning,'' {\em Nature
  communications}, vol.~12, no.~1, p.~2631, 2021.

\bibitem{huang2019near}
H.-Y. Huang, K.~Bharti, and P.~Rebentrost, ``Near-term quantum algorithms for
  linear systems of equations,'' {\em arXiv preprint arXiv:1909.07344}, 2019.

\bibitem{patel2024variational}
D.~Patel, P.~J. Coles, and M.~M. Wilde, ``Variational quantum algorithms for
  semidefinite programming,'' {\em Quantum}, vol.~8, p.~1374, 2024.

\bibitem{anderson_2021}
C.~Ryan-Anderson, J.~G. Bohnet, K.~Lee, D.~Gresh, A.~Hankin, J.~P. Gaebler,
  D.~Francois, A.~Chernoguzov, D.~Lucchetti, N.~C. Brown, T.~M. Gatterman,
  S.~K. Halit, K.~Gilmore, J.~A. Gerber, B.~Neyenhuis, D.~Hayes, and R.~P.
  Stutz, ``Realization of real-time fault-tolerant quantum error correction,''
  {\em Phys. Rev. X}, vol.~11, p.~041058, Dec 2021.

\bibitem{Postler2022}
L.~Postler, S.~Heu$\beta$en, I.~Pogorelov, M.~Rispler, T.~Feldker, M.~Meth,
  C.~D. Marciniak, R.~Stricker, M.~Ringbauer, R.~Blatt, P.~Schindler,
  M.~M{\"u}ller, and T.~Monz, ``Demonstration of fault-tolerant universal
  quantum gate operations,'' {\em Nature}, vol.~605, pp.~675--680, May 2022.

\bibitem{ryananderson2022implementing}
C.~Ryan-Anderson, N.~C. Brown, M.~S. Allman, B.~Arkin, G.~Asa-Attuah,
  C.~Baldwin, J.~Berg, J.~G. Bohnet, S.~Braxton, N.~Burdick, J.~P. Campora,
  A.~Chernoguzov, J.~Esposito, B.~Evans, D.~Francois, {\em et~al.},
  ``Implementing fault-tolerant entangling gates on the five-qubit code and the
  color code,'' 2022.

\bibitem{Sivak2023}
V.~V. Sivak, A.~Eickbusch, B.~Royer, S.~Singh, I.~Tsioutsios, S.~Ganjam,
  A.~Miano, B.~L. Brock, A.~Z. Ding, L.~Frunzio, S.~M. Girvin, R.~J.
  Schoelkopf, and M.~H. Devoret, ``Real-time quantum error correction beyond
  break-even,'' {\em Nature}, vol.~616, pp.~50--55, Apr 2023.

\bibitem{bluvstein_2023}
D.~Bluvstein, S.~J. Evered, A.~A. Geim, S.~H. Li, H.~Zhou, T.~Manovitz,
  S.~Ebadi, M.~Cain, M.~Kalinowski, D.~Hangleiter, J.~P.~B. Ataides,
  N.~Maskara, I.~Cong, X.~Gao, P.~S. Rodriguez, T.~Karolyshyn, G.~Semeghini,
  M.~J. Gullans, M.~Greiner, V.~Vuletić, and M.~D. Lukin, ``Logical quantum
  processor based on reconfigurable atom arrays,'' {\em Nature}, Dec. 2023.

\bibitem{acharya2024quantumerrorcorrectionsurface}
R.~Acharya, L.~Aghababaie-Beni, I.~Aleiner, T.~I. Andersen, M.~Ansmann,
  F.~Arute, K.~Arya, A.~Asfaw, N.~Astrakhantsev, J.~Atalaya, R.~Babbush,
  D.~Bacon, B.~Ballard, J.~C. Bardin, J.~Bausch, A.~Bengtsson, A.~Bilmes,
  S.~Blackwell, S.~Boixo, G.~Bortoli, A.~Bourassa, J.~Bovaird, L.~Brill,
  M.~Broughton, D.~A. Browne, B.~Buchea, B.~B. Buckley, D.~A. Buell, T.~Burger,
  B.~Burkett, N.~Bushnell, A.~Cabrera, J.~Campero, H.-S. Chang, Y.~Chen,
  Z.~Chen, B.~Chiaro, D.~Chik, C.~Chou, J.~Claes, A.~Y. Cleland, J.~Cogan,
  R.~Collins, P.~Conner, W.~Courtney, A.~L. Crook, B.~Curtin, S.~Das,
  A.~Davies, L.~D. Lorenzo, D.~M. Debroy, S.~Demura, M.~Devoret, A.~D. Paolo,
  P.~Donohoe, I.~Drozdov, A.~Dunsworth, C.~Earle, T.~Edlich, A.~Eickbusch,
  A.~M. Elbag, M.~Elzouka, C.~Erickson, L.~Faoro, E.~Farhi, V.~S. Ferreira,
  L.~F. Burgos, E.~Forati, A.~G. Fowler, B.~Foxen, S.~Ganjam, G.~Garcia,
  R.~Gasca, Élie Genois, W.~Giang, C.~Gidney, D.~Gilboa, R.~Gosula, A.~G. Dau,
  D.~Graumann, A.~Greene, J.~A. Gross, S.~Habegger, J.~Hall, M.~C. Hamilton,
  M.~Hansen, M.~P. Harrigan, S.~D. Harrington, F.~J.~H. Heras, S.~Heslin,
  P.~Heu, O.~Higgott, G.~Hill, J.~Hilton, G.~Holland, S.~Hong, H.-Y. Huang,
  A.~Huff, W.~J. Huggins, L.~B. Ioffe, S.~V. Isakov, J.~Iveland, E.~Jeffrey,
  Z.~Jiang, C.~Jones, S.~Jordan, C.~Joshi, P.~Juhas, D.~Kafri, H.~Kang, A.~H.
  Karamlou, K.~Kechedzhi, J.~Kelly, T.~Khaire, T.~Khattar, M.~Khezri, S.~Kim,
  P.~V. Klimov, A.~R. Klots, B.~Kobrin, P.~Kohli, A.~N. Korotkov, F.~Kostritsa,
  R.~Kothari, B.~Kozlovskii, J.~M. Kreikebaum, V.~D. Kurilovich, N.~Lacroix,
  D.~Landhuis, T.~Lange-Dei, B.~W. Langley, P.~Laptev, K.-M. Lau, L.~L. Guevel,
  J.~Ledford, K.~Lee, Y.~D. Lensky, S.~Leon, B.~J. Lester, W.~Y. Li, Y.~Li,
  A.~T. Lill, W.~Liu, W.~P. Livingston, A.~Locharla, E.~Lucero, D.~Lundahl,
  A.~Lunt, S.~Madhuk, F.~D. Malone, A.~Maloney, S.~Mandrá, L.~S. Martin,
  S.~Martin, O.~Martin, C.~Maxfield, J.~R. McClean, M.~McEwen, S.~Meeks,
  A.~Megrant, X.~Mi, K.~C. Miao, A.~Mieszala, R.~Molavi, S.~Molina,
  S.~Montazeri, A.~Morvan, R.~Movassagh, W.~Mruczkiewicz, O.~Naaman, M.~Neeley,
  C.~Neill, A.~Nersisyan, H.~Neven, M.~Newman, J.~H. Ng, A.~Nguyen, M.~Nguyen,
  C.-H. Ni, T.~E. O'Brien, W.~D. Oliver, A.~Opremcak, K.~Ottosson, A.~Petukhov,
  A.~Pizzuto, J.~Platt, R.~Potter, O.~Pritchard, L.~P. Pryadko, C.~Quintana,
  G.~Ramachandran, M.~J. Reagor, D.~M. Rhodes, G.~Roberts, E.~Rosenberg,
  E.~Rosenfeld, P.~Roushan, N.~C. Rubin, N.~Saei, D.~Sank, K.~Sankaragomathi,
  K.~J. Satzinger, H.~F. Schurkus, C.~Schuster, A.~W. Senior, M.~J. Shearn,
  A.~Shorter, N.~Shutty, V.~Shvarts, S.~Singh, V.~Sivak, J.~Skruzny, S.~Small,
  V.~Smelyanskiy, W.~C. Smith, R.~D. Somma, S.~Springer, G.~Sterling,
  D.~Strain, J.~Suchard, A.~Szasz, A.~Sztein, D.~Thor, A.~Torres, M.~M.
  Torunbalci, A.~Vaishnav, J.~Vargas, S.~Vdovichev, G.~Vidal, B.~Villalonga,
  C.~V. Heidweiller, S.~Waltman, S.~X. Wang, B.~Ware, K.~Weber, T.~White,
  K.~Wong, B.~W.~K. Woo, C.~Xing, Z.~J. Yao, P.~Yeh, B.~Ying, J.~Yoo, N.~Yosri,
  G.~Young, A.~Zalcman, Y.~Zhang, N.~Zhu, and N.~Zobrist, ``Quantum error
  correction below the surface code threshold,'' 2024.

\bibitem{mcclean2018barren}
J.~R. McClean, S.~Boixo, V.~N. Smelyanskiy, R.~Babbush, and H.~Neven, ``Barren
  plateaus in quantum neural network training landscapes,'' {\em Nature
  communications}, vol.~9, no.~1, p.~4812, 2018.

\bibitem{commeau2020variational}
B.~Commeau, M.~Cerezo, Z.~Holmes, L.~Cincio, P.~J. Coles, and A.~Sornborger,
  ``Variational hamiltonian diagonalization for dynamical quantum simulation,''
  {\em arXiv preprint arXiv:2009.02559}, 2020.

\bibitem{cirstoiu2020variational}
C.~Cirstoiu, Z.~Holmes, J.~Iosue, L.~Cincio, P.~J. Coles, and A.~Sornborger,
  ``Variational fast forwarding for quantum simulation beyond the coherence
  time,'' {\em npj Quantum Information}, vol.~6, no.~1, p.~82, 2020.

\bibitem{PRXQuantum.2.010102}
J.~Carrasco, A.~Elben, C.~Kokail, B.~Kraus, and P.~Zoller, ``Theoretical and
  experimental perspectives of quantum verification,'' {\em PRX Quantum},
  vol.~2, p.~010102, Mar 2021.

\bibitem{li2020hamiltonian}
Z.~Li, L.~Zou, and T.~H. Hsieh, ``Hamiltonian tomography via quantum quench,''
  {\em Physical review letters}, vol.~124, no.~16, p.~160502, 2020.

\bibitem{huang2023learning}
H.-Y. Huang, Y.~Tong, D.~Fang, and Y.~Su, ``Learning many-body hamiltonians
  with heisenberg-limited scaling,'' {\em Physical Review Letters}, vol.~130,
  no.~20, p.~200403, 2023.

\bibitem{kokail2021quantum}
C.~Kokail, B.~Sundar, T.~V. Zache, A.~Elben, B.~Vermersch, M.~Dalmonte, R.~van
  Bijnen, and P.~Zoller, ``Quantum variational learning of the entanglement
  hamiltonian,'' {\em Physical review letters}, vol.~127, no.~17, p.~170501,
  2021.

\bibitem{cao2019quantum}
Y.~Cao, J.~Romero, J.~P. Olson, M.~Degroote, P.~D. Johnson, M.~Kieferov{\'a},
  I.~D. Kivlichan, T.~Menke, B.~Peropadre, N.~P. Sawaya, {\em et~al.},
  ``Quantum chemistry in the age of quantum computing,'' {\em Chemical
  reviews}, vol.~119, no.~19, pp.~10856--10915, 2019.

\bibitem{colless2018computation}
J.~I. Colless, V.~V. Ramasesh, D.~Dahlen, M.~S. Blok, M.~E. Kimchi-Schwartz,
  J.~R. McClean, J.~Carter, W.~A. de~Jong, and I.~Siddiqi, ``Computation of
  molecular spectra on a quantum processor with an error-resilient algorithm,''
  {\em Physical Review X}, vol.~8, no.~1, p.~011021, 2018.

\bibitem{huang2021near}
H.-Y. Huang, K.~Bharti, and P.~Rebentrost, ``Near-term quantum algorithms for
  linear systems of equations with regression loss functions,'' {\em New
  Journal of Physics}, vol.~23, no.~11, p.~113021, 2021.

\bibitem{mcardle2020quantum}
S.~McArdle, S.~Endo, A.~Aspuru-Guzik, S.~C. Benjamin, and X.~Yuan, ``Quantum
  computational chemistry,'' {\em Reviews of Modern Physics}, vol.~92, no.~1,
  p.~015003, 2020.

\bibitem{yen2023deterministic}
T.-C. Yen, A.~Ganeshram, and A.~F. Izmaylov, ``Deterministic improvements of
  quantum measurements with grouping of compatible operators, non-local
  transformations, and covariance estimates,'' {\em npj Quantum Information},
  vol.~9, no.~1, p.~14, 2023.

\bibitem{veltheim2024multiset}
O.~Veltheim and E.~Keski-Vakkuri, ``Multiset tomography: Optimizing quantum
  measurements by partitioning multisets of observables,'' {\em arXiv preprint
  arXiv:2403.07068}, 2024.

\bibitem{dalfavero2023k}
B.~DalFavero, R.~Sarkar, D.~Camps, N.~Sawaya, and R.~LaRose, ``$ k
  $-commutativity and measurement reduction for expectation values,'' {\em
  arXiv preprint arXiv:2312.11840}, 2023.

\bibitem{wu2023overlapped}
B.~Wu, J.~Sun, Q.~Huang, and X.~Yuan, ``Overlapped grouping measurement: A
  unified framework for measuring quantum states,'' {\em Quantum}, vol.~7,
  p.~896, 2023.

\bibitem{aaronson2004improved}
S.~Aaronson and D.~Gottesman, ``Improved simulation of stabilizer circuits,''
  {\em Physical Review A—Atomic, Molecular, and Optical Physics}, vol.~70,
  no.~5, p.~052328, 2004.

\bibitem{evered2023high}
S.~J. Evered, D.~Bluvstein, M.~Kalinowski, S.~Ebadi, T.~Manovitz, H.~Zhou,
  S.~H. Li, A.~A. Geim, T.~T. Wang, N.~Maskara, {\em et~al.}, ``High-fidelity
  parallel entangling gates on a neutral-atom quantum computer,'' {\em Nature},
  vol.~622, no.~7982, pp.~268--272, 2023.

\bibitem{google2023suppressing}
G.~Q. AI, ``Suppressing quantum errors by scaling a surface code logical
  qubit,'' {\em Nature}, vol.~614, no.~7949, pp.~676--681, 2023.

\bibitem{ionq}
IonQ, ``Introducing ionq forte,'' {\em IonQ}, 2023.

\bibitem{elben2022randomized}
A.~Elben, S.~T. Flammia, H.-Y. Huang, R.~Kueng, J.~Preskill, B.~Vermersch, and
  P.~Zoller, ``The randomized measurement toolbox,'' {\em arXiv:2203.11374},
  2022.

\bibitem{huang2020predicting}
H.-Y. Huang, R.~Kueng, and J.~Preskill, ``Predicting many properties of a
  quantum system from very few measurements,'' {\em Nature Physics}, vol.~16,
  no.~10, pp.~1050--1057, 2020.

\bibitem{bertoni2022shallow}
C.~Bertoni, J.~Haferkamp, M.~Hinsche, M.~Ioannou, J.~Eisert, and H.~Pashayan,
  ``Shallow shadows: Expectation estimation using low-depth random clifford
  circuits,'' {\em arXiv:2209.12924}, 2022.

\bibitem{ippoliti2023operator}
M.~Ippoliti, Y.~Li, T.~Rakovszky, and V.~Khemani, ``Operator relaxation and the
  optimal depth of classical shadows,'' {\em Physical Review Letters},
  vol.~130, no.~23, p.~230403, 2023.

\bibitem{van2022hardware}
K.~Van~Kirk, J.~Cotler, H.-Y. Huang, and M.~D. Lukin, ``Hardware-efficient
  learning of quantum many-body states,'' {\em arXiv preprint
  arXiv:2212.06084}, 2022.

\bibitem{king2024triply}
R.~King, D.~Gosset, R.~Kothari, and R.~Babbush, ``Triply efficient shadow
  tomography,'' {\em arXiv preprint arXiv:2404.19211}, 2024.

\bibitem{hu2023classical}
H.-Y. Hu, S.~Choi, and Y.-Z. You, ``Classical shadow tomography with locally
  scrambled quantum dynamics,'' {\em Physical Review Research}, vol.~5, no.~2,
  p.~023027, 2023.

\bibitem{Bu24}
K.~Bu, D.~E. Koh, R.~J. Garcia, and A.~Jaffe, ``Classical shadows with
  pauli-invariant unitary ensembles,'' {\em npj Quantum Information}, vol.~10,
  no.~1, p.~6, 2024.

\bibitem{matchgate}
K.~Wan, W.~J. Huggins, J.~Lee, and R.~Babbush, ``Matchgate shadows for
  fermionic quantum simulation,'' {\em Communications in Mathematical Physics},
  vol.~404, no.~2, pp.~629--700, 2023.

\bibitem{PhysRevResearch.6.043118}
Z.~Liu, Z.~Hao, and H.-Y. Hu, ``Predicting arbitrary state properties from
  single hamiltonian quench dynamics,'' {\em Phys. Rev. Res.}, vol.~6,
  p.~043118, Nov 2024.

\bibitem{Ippoliti2024classicalshadows}
M.~Ippoliti, ``Classical shadows based on locally-entangled measurements,''
  {\em {Quantum}}, vol.~8, p.~1293, Mar. 2024.

\bibitem{contrastive_shallow}
Y.~{Wu}, C.~{Wang}, J.~{Yao}, H.~{Zhai}, Y.-Z. {You}, and P.~{Zhang},
  ``{Contractive Unitary and Classical Shadow Tomography},'' {\em arXiv
  e-prints}, p.~arXiv:2412.01850, Nov. 2024.

\bibitem{Bringewatt2024randomized}
J.~Bringewatt, J.~Kunjummen, and N.~Mueller, ``Randomized measurement protocols
  for lattice gauge theories,'' {\em {Quantum}}, vol.~8, p.~1300, Mar. 2024.

\bibitem{chen2024optimal}
S.~Chen, W.~Gong, and Q.~Ye, ``Optimal tradeoffs for estimating pauli
  observables,'' {\em arXiv preprint arXiv:2404.19105}, 2024.

\bibitem{nielsen_chuang_2010}
M.~A. Nielsen and I.~L. Chuang, {\em Quantum Computation and Quantum
  Information: 10th Anniversary Edition}.
\newblock Cambridge University Press, 2010.

\bibitem{PhysRevResearch.4.033154}
J.~F. Gonthier, M.~D. Radin, C.~Buda, E.~J. Doskocil, C.~M. Abuan, and
  J.~Romero, ``Measurements as a roadblock to near-term practical quantum
  advantage in chemistry: Resource analysis,'' {\em Phys. Rev. Res.}, vol.~4,
  p.~033154, Aug 2022.

\bibitem{PhysRevX.8.011044}
R.~Babbush, N.~Wiebe, J.~McClean, J.~McClain, H.~Neven, and G.~K.-L. Chan,
  ``Low-depth quantum simulation of materials,'' {\em Phys. Rev. X}, vol.~8,
  p.~011044, Mar 2018.

\bibitem{Cao}
Y.~Cao, J.~Romero, J.~P. Olson, M.~Degroote, P.~D. Johnson, M.~Kieferov{\'a},
  I.~D. Kivlichan, T.~Menke, B.~Peropadre, N.~P.~D. Sawaya, S.~Sim, L.~Veis,
  and A.~Aspuru-Guzik, ``Quantum chemistry in the age of quantum computing,''
  {\em Chemical Reviews}, vol.~119, pp.~10856--10915, 10 2019.

\bibitem{bairey2019learning}
E.~Bairey, I.~Arad, and N.~H. Lindner, ``Learning a local hamiltonian from
  local measurements,'' {\em Physical review letters}, vol.~122, no.~2,
  p.~020504, 2019.

\bibitem{huang2021efficient}
H.-Y. Huang, R.~Kueng, and J.~Preskill, ``{Efficient estimation of Pauli
  observables by derandomization},'' {\em Physical review letters}, vol.~127,
  no.~3, p.~030503, 2021.

\bibitem{2024arXiv240217911H}
H.-Y. {Hu}, A.~{Gu}, S.~{Majumder}, H.~{Ren}, Y.~{Zhang}, D.~S. {Wang}, Y.-Z.
  {You}, Z.~{Minev}, S.~F. {Yelin}, and A.~{Seif}, ``{Demonstration of Robust
  and Efficient Quantum Property Learning with Shallow Shadows},'' {\em arXiv
  e-prints}, p.~arXiv:2402.17911, Feb. 2024.

\bibitem{schuster2024random}
T.~Schuster, J.~Haferkamp, and H.-Y. Huang, ``Random unitaries in extremely low
  depth,'' {\em arXiv preprint arXiv:2407.07754}, 2024.

\bibitem{2023arXiv231211840D}
B.~{DalFavero}, R.~{Sarkar}, D.~{Camps}, N.~{Sawaya}, and R.~{LaRose},
  ``{$k$-commutativity and measurement reduction for expectation values},''
  {\em arXiv e-prints}, p.~arXiv:2312.11840, Dec. 2023.

\bibitem{sixindependentchoices}
While the size of the single-qubit Clifford group is 24, we only need to
  consider 6 independent rotations in $\text{Cl}(2)$ because we do not care
  about the phase. Up to a phase (4 choices for phase: $+1$, $-1$, $+i$, and
  $-i$), there are only 6 independent rotations.

\bibitem{hadfield2022measurements}
C.~Hadfield, S.~Bravyi, R.~Raymond, and A.~Mezzacapo, ``Measurements of quantum
  hamiltonians with locally-biased classical shadows,'' {\em Communications in
  Mathematical Physics}, vol.~391, no.~3, pp.~951--967, 2022.

\bibitem{2023VQLS}
C.~{Bravo-Prieto}, R.~{LaRose}, M.~{Cerezo}, Y.~{Subasi}, L.~{Cincio}, and
  P.~J. {Coles}, ``{Variational Quantum Linear Solver},'' {\em Quantum},
  vol.~7, p.~1188, Nov. 2023.

\bibitem{PhysRevX.8.011021}
J.~I. Colless, V.~V. Ramasesh, D.~Dahlen, M.~S. Blok, M.~E. Kimchi-Schwartz,
  J.~R. McClean, J.~Carter, W.~A. de~Jong, and I.~Siddiqi, ``Computation of
  molecular spectra on a quantum processor with an error-resilient algorithm,''
  {\em Phys. Rev. X}, vol.~8, p.~011021, Feb 2018.

\bibitem{2019arXiv190713623G}
P.~{Gokhale}, O.~{Angiuli}, Y.~{Ding}, K.~{Gui}, T.~{Tomesh}, M.~{Suchara},
  M.~{Martonosi}, and F.~T. {Chong}, ``{Minimizing State Preparations in
  Variational Quantum Eigensolver by Partitioning into Commuting Families},''
  {\em arXiv e-prints}, p.~arXiv:1907.13623, July 2019.

\bibitem{adaptiveVQE}
H.~R. Grimsley, S.~E. Economou, E.~Barnes, and N.~J. Mayhall, ``An adaptive
  variational algorithm for exact molecular simulations on a quantum
  computer,'' {\em Nature Communications}, vol.~10, no.~1, p.~3007, 2019.

\bibitem{2024arXiv240803987R}
M.~{Robbiati}, E.~{Pedicillo}, A.~{Pasquale}, X.~{Li}, A.~{Wright}, R.~M.~S.
  {Farias}, K.~{Uyen Giang}, J.~{Son}, J.~{Kn{\"o}rzer}, S.~{Thye Goh}, J.~Y.
  {Khoo}, N.~H.~Y. {Ng}, Z.~{Holmes}, S.~{Carrazza}, and M.~{Gluza},
  ``{Double-bracket quantum algorithms for high-fidelity ground state
  preparation},'' {\em arXiv e-prints}, p.~arXiv:2408.03987, Aug. 2024.

\bibitem{charlesgitrepo}
C.~Hadfield, ``Charleshadfield github / variances: Measurement in noisy quantum
  computers.'' \url{https://github.com/charleshadfield/variances/tree/master}.
\newblock Accessed: 2024-08-28.

\bibitem{JWtransform}
P.~Jordan and E.~Wigner, ``{\"U}ber das paulische {\"a}quivalenzverbot,'' {\em
  Zeitschrift f{\"u}r Physik}, vol.~47, no.~9, pp.~631--651, 1928.

\bibitem{verteletskyi2020measurement}
V.~Verteletskyi, T.-C. Yen, and A.~F. Izmaylov, ``Measurement optimization in
  the variational quantum eigensolver using a minimum clique cover,'' {\em The
  Journal of chemical physics}, vol.~152, no.~12, 2020.

\bibitem{jena2019pauli}
A.~Jena, S.~Genin, and M.~Mosca, ``Pauli partitioning with respect to gate
  sets,'' {\em arXiv preprint arXiv:1907.07859}, 2019.

\bibitem{PhysRevLett.130.150402}
S.~Ahmed, F.~Quijandr\'{\i}a, and A.~F. Kockum, ``Gradient-descent quantum
  process tomography by learning kraus operators,'' {\em Phys. Rev. Lett.},
  vol.~130, p.~150402, Apr 2023.

\bibitem{PhysRevLett.124.010504}
A.~Elben, B.~Vermersch, R.~van Bijnen, C.~Kokail, T.~Brydges, C.~Maier, M.~K.
  Joshi, R.~Blatt, C.~F. Roos, and P.~Zoller, ``Cross-platform verification of
  intermediate scale quantum devices,'' {\em Phys. Rev. Lett.}, vol.~124,
  p.~010504, Jan 2020.

\bibitem{Qi2019determininglocal}
X.-L. Qi and D.~Ranard, ``Determining a local {H}amiltonian from a single
  eigenstate,'' {\em {Quantum}}, vol.~3, p.~159, July 2019.

\bibitem{PhysRevLett.122.020504}
E.~Bairey, I.~Arad, and N.~H. Lindner, ``Learning a local hamiltonian from
  local measurements,'' {\em Phys. Rev. Lett.}, vol.~122, p.~020504, Jan 2019.

\bibitem{eisert2020quantum}
J.~Eisert, D.~Hangleiter, N.~Walk, I.~Roth, D.~Markham, R.~Parekh, U.~Chabaud,
  and E.~Kashefi, ``Quantum certification and benchmarking,'' {\em Nature
  Reviews Physics}, vol.~2, no.~7, pp.~382--390, 2020.

\bibitem{kokail2019self}
C.~Kokail, C.~Maier, R.~van Bijnen, T.~Brydges, M.~K. Joshi, P.~Jurcevic, C.~A.
  Muschik, P.~Silvi, R.~Blatt, C.~F. Roos, {\em et~al.}, ``Self-verifying
  variational quantum simulation of lattice models,'' {\em Nature}, vol.~569,
  no.~7756, pp.~355--360, 2019.

\bibitem{steane1996quantum}
A.~Steane, ``Quantum {Reed-Muller} codes,'' 1996.

\bibitem{Dennis2002}
E.~Dennis, A.~Kitaev, A.~Landahl, and J.~Preskill, ``{Topological quantum
  memory},'' {\em Journal of Mathematical Physics}, vol.~43, pp.~4452--4505,
  sep 2002.

\bibitem{bombin2006topological}
H.~Bombin and M.~A. Martin-Delgado, ``Topological quantum distillation,'' {\em
  Physical review letters}, vol.~97, no.~18, p.~180501, 2006.

\bibitem{menendez2023implementing}
D.~Honciuc~Menendez, A.~Ray, and M.~Vasmer, ``Implementing fault-tolerant
  non-clifford gates using the [[8,3,2]] color code,'' {\em Phys. Rev. A},
  vol.~109, p.~062438, Jun 2024.

\bibitem{wang2023faulttolerant}
Y.~Wang, S.~Simsek, T.~M. Gatterman, J.~A. Gerber, K.~Gilmore, D.~Gresh,
  N.~Hewitt, C.~V. Horst, M.~Matheny, T.~Mengle, B.~Neyenhuis, and B.~Criger,
  ``Fault-tolerant one-bit addition with the smallest interesting color code,''
  {\em Sci. Adv.}, vol.~10, no.~29, 2024.

\bibitem{dasilva2024demonstrationlogicalqubitsrepeated}
M.~P. da~Silva, C.~Ryan-Anderson, J.~M. Bello-Rivas, A.~Chernoguzov, J.~M.
  Dreiling, C.~Foltz, F.~Frachon, J.~P. Gaebler, T.~M. Gatterman,
  L.~Grans-Samuelsson, D.~Hayes, N.~Hewitt, J.~Johansen, D.~Lucchetti,
  M.~Mills, S.~A. Moses, B.~Neyenhuis, A.~Paz, J.~Pino, P.~Siegfried,
  J.~Strabley, A.~Sundaram, D.~Tom, S.~J. Wernli, M.~Zanner, R.~P. Stutz, and
  K.~M. Svore, ``Demonstration of logical qubits and repeated error correction
  with better-than-physical error rates,'' 2024.

\bibitem{anderson_teleportation_2024}
C.~Ryan-Anderson, N.~C. Brown, C.~H. Baldwin, J.~M. Dreiling, C.~Foltz, J.~P.
  Gaebler, T.~M. Gatterman, N.~Hewitt, C.~Holliman, C.~V. Horst, J.~Johansen,
  D.~Lucchetti, T.~Mengle, M.~Matheny, Y.~Matsuoka, K.~Mayer, M.~Mills, S.~A.
  Moses, B.~Neyenhuis, J.~Pino, P.~Siegfried, R.~P. Stutz, J.~Walker, and
  D.~Hayes, ``High-fidelity teleportation of a logical qubit using transversal
  gates and lattice surgery,'' {\em Science}, vol.~385, no.~6715,
  pp.~1327--1331, 2024.

\bibitem{kirkpatrick1983optimization}
S.~Kirkpatrick, C.~D. Gelatt~Jr, and M.~P. Vecchi, ``Optimization by simulated
  annealing,'' {\em science}, vol.~220, no.~4598, pp.~671--680, 1983.

\bibitem{wu2024optimization}
B.~Wu, Y.~Yan, F.~Wei, and Z.~Liu, ``Optimization for expectation value
  estimation with shallow quantum circuits,'' {\em arXiv preprint
  arXiv:2407.19499}, 2024.

\bibitem{2024arXiv241016041H}
I.~L. {Huidobro-Meezs}, J.~{Dai}, G.~{Rabusseau}, and R.~A.
  {Vargas-Hern{\'a}ndez}, ``{GFlowNets for Hamiltonian decomposition in groups
  of compatible operators},'' {\em arXiv e-prints}, p.~arXiv:2410.16041, Oct.
  2024.

\bibitem{iqbal2024topological}
M.~Iqbal, N.~Tantivasadakarn, T.~M. Gatterman, J.~A. Gerber, K.~Gilmore,
  D.~Gresh, A.~Hankin, N.~Hewitt, C.~V. Horst, M.~Matheny, {\em et~al.},
  ``Topological order from measurements and feed-forward on a trapped ion
  quantum computer,'' {\em Communications Physics}, vol.~7, no.~1, p.~205,
  2024.

\bibitem{watts2019exponential}
A.~B. Watts, R.~Kothari, L.~Schaeffer, and A.~Tal, ``Exponential separation
  between shallow quantum circuits and unbounded fan-in shallow classical
  circuits,'' in {\em Proceedings of the 51st Annual ACM SIGACT Symposium on
  Theory of Computing}, pp.~515--526, 2019.

\bibitem{jonforthcoming}
'Upcoming work from the authors to appear soon. September, 2024.'.

\bibitem{chow2012universal}
J.~M. Chow, J.~M. Gambetta, A.~D. C\'orcoles, S.~T. Merkel, J.~A. Smolin,
  C.~Rigetti, S.~Poletto, G.~A. Keefe, M.~B. Rothwell, J.~R. Rozen, M.~B.
  Ketchen, and M.~Steffen, ``Universal quantum gate set approaching
  fault-tolerant thresholds with superconducting qubits,'' {\em Phys. Rev.
  Lett.}, vol.~109, p.~060501, Aug 2012.

\bibitem{aharonov2023polynomial}
D.~Aharonov, X.~Gao, Z.~Landau, Y.~Liu, and U.~Vazirani, ``A polynomial-time
  classical algorithm for noisy random circuit sampling,'' in {\em Proceedings
  of the 55th Annual ACM Symposium on Theory of Computing}, pp.~945--957, 2023.

\bibitem{cerezo2023does}
M.~Cerezo, M.~Larocca, D.~Garc{\'\i}a-Mart{\'\i}n, N.~L. Diaz, P.~Braccia,
  E.~Fontana, M.~S. Rudolph, P.~Bermejo, A.~Ijaz, S.~Thanasilp, {\em et~al.},
  ``Does provable absence of barren plateaus imply classical simulability? or,
  why we need to rethink variational quantum computing,'' {\em arXiv preprint
  arXiv:2312.09121}, 2023.

\bibitem{rall2019simulation}
P.~Rall, D.~Liang, J.~Cook, and W.~Kretschmer, ``Simulation of qubit quantum
  circuits via pauli propagation,'' {\em Physical Review A}, vol.~99, no.~6,
  p.~062337, 2019.

\bibitem{Gray2021hyperoptimized}
J.~Gray and S.~Kourtis, ``Hyper-optimized tensor network contraction,'' {\em
  {Quantum}}, vol.~5, p.~410, Mar. 2021.

\bibitem{Gray2018}
J.~Gray, ``quimb: A python package for quantum information and many-body
  calculations,'' {\em Journal of Open Source Software}, vol.~3, no.~29,
  p.~819, 2018.

\bibitem{fishman2022itensor}
M.~Fishman, S.~White, and E.~M. Stoudenmire, ``The itensor software library for
  tensor network calculations,'' {\em SciPost Physics Codebases}, p.~004, 2022.

\bibitem{roberts2019tensornetworklibraryphysicsmachine}
C.~Roberts, A.~Milsted, M.~Ganahl, A.~Zalcman, B.~Fontaine, Y.~Zou, J.~Hidary,
  G.~Vidal, and S.~Leichenauer, ``Tensornetwork: A library for physics and
  machine learning,'' {\em arXiv:1905.01330}, 2019.

\bibitem{Sachdev_1999}
S.~Sachdev, ``Quantum phase transitions,'' {\em Physics World}, vol.~12, p.~33,
  apr 1999.

\bibitem{2024arXiv240419105C}
S.~{Chen}, W.~{Gong}, and Q.~{Ye}, ``{Optimal tradeoffs for estimating Pauli
  observables},'' {\em arXiv e-prints}, p.~arXiv:2404.19105, Apr. 2024.

\end{thebibliography}
\bibliographystyle{ieeetr} 



\onecolumngrid  
\appendix

\begin{appendix}
\clearpage 

\vspace{2.0em}
\begin{center}
\textbf{\Large Derandomized Shallow Shadows Appendices}
\end{center}

\renewcommand{\appendixname}{APPENDIX}
\renewcommand{\thesubsection}{\MakeUppercase{\alph{section}}.\arabic{subsection}}
\renewcommand{\thesubsubsection}{\MakeUppercase{\alph{section}}.\arabic{subsection}.\alph{subsubsection}}
\makeatletter
\renewcommand{\p@subsection}{}
\renewcommand{\p@subsubsection}{}
\makeatother

\renewcommand{\figurename}{Supplementary Figure}
\setcounter{secnumdepth}{3}
\makeatletter
     \@addtoreset{figure}{section}
\makeatother

\vspace{3mm}
\begin{center}
    \textbf{Summary}
\end{center}

In this work we present a protocol for the following learning task: given a set of $n$-qubit Pauli strings $\{P\}$ to estimate and a budget of $N$ depth-$d$ measurements, what are the best measurements to make?  
Crucially, this novel learning scheme is designed for the regime where we want to implement non-trivial -- but still bounded-depth -- measurement circuits. 
For example, DSS allows one to find effective measurements for devices that can implement $d \sim O(\log(n))$ layers of multi-qubit gates but where a high-fidelity circuits of depth $d \sim O(n)$ are still intractable. 
Here we define a ``measurement circuit'' as the unitary that rotates our system into some desired measurement basis, and we assume these bounded-depth measurement circuits can have \textit{at most} $d$ layers of two-qubit gates.   

Our derandomized shallow shadows (DSS) protocol can be decomposed into two steps. First, given the specified maximum measurement circuit depth $d$ and set of Pauli strings $\{P\}$ whose expectation values we want to estimate, we determine what $N$ measurements to make. In the main text, we refer to this as our DSS ``algorithm'', and Figure 1 shows how the DSS algorithm determines what measurements to make for the task of estimating $30$ (randomly-chosen) Pauli strings.  Once this algorithm outputs what measurements we should make, the second step is to actually implement these measurements on the state we've prepared on our quantum computer. 
Therefore, this protocol has two computations -- one classical, one quantum -- where the classical one determines what measurements to implement and the quantum one implements the measurements.

In these appendices we first describe our DSS algorithm (appendix \ref{app:DSSalgorithm}). 
This first appendix recapitulates much of what was discussed in the main text but does so with full technical rigor. 
We then describe how we estimate the expectation values of our desired Pauli strings (appendix \ref{app:estimating_exp_vals_with_DSS}), derive rigorous performance guarantees for our learning protocol (appendix \ref{app:Performance_guarantees}), and discuss our tensor network approach to efficiently evaluating our algorithm's \textsf{COST} function (appendix \ref{app:TensorNetwork_for_cost_function}). 
In the remaining appendices, we provide background and further context on the two applications we explored in the main text. First, we describe our quantum chemistry application and how we benchmark our DSS protocol against other schemes, finding it already outperforms previous state-of-the-art at depth $d=1$ (appendix \ref{app:QuantumChemistry}). Then, we provide further details our second application: estimating the variance of the Fermi-Hubbard Hamiltonian (appendix \ref{app:HubbardModel}). The learning problem setup for this application requires a modification to the original DSS algorithm. We discuss this modification in detail and describe the other protocols with which we compare DSS's performance.

\vspace{5mm}

\begin{center}
    \textbf{Table of Contents}
\end{center}

\smallskip

\noindent \textbf{\ref{app:DSSalgorithm}.~~\hyperref[app:DSSalgorithm]{THE DSS ALGORITHM: DETERMINING THE MEASUREMENT CIRCUITS}} \dotfill\textbf{\pageref{app:DSSalgorithm}}
\medskip

\noindent \qquad \begin{minipage}{\dimexpr\textwidth-1.3cm}
  \hyperref[app:DSSalgorithm]{How the DSS algorithm works}
 $\bullet$ 
\hyperref[app:DSSalgorithm]{The DSS \textsf{COST} function}
 $\bullet$ 
\hyperref[app:DSSalgorithm]{Example from the main text: 30 randomly-chosen Pauli strings}
 \end{minipage}
\medskip

\noindent \textbf{\ref{app:estimating_exp_vals_with_DSS}.~~\hyperref[app:estimating_exp_vals_with_DSS]{ESTIMATING EXPECTATION VALUES WITH DSS MEASUREMENTS}} \dotfill\textbf{\pageref{app:estimating_exp_vals_with_DSS}}
\medskip

\noindent \textbf{\ref{app:Performance_guarantees}.~~\hyperref[app:Performance_guarantees]{DSS PERFORMANCE GUARANTEES}} \dotfill\textbf{\pageref{app:Performance_guarantees}}
\medskip

\noindent \qquad \begin{minipage}{\dimexpr\textwidth-1.3cm}
  \hyperref[app:Performance_guarantees]{Deriving our DSS \textsf{COST} function and its guarantees}
 $\bullet$ 
\hyperref[app:bounding_precision_DSS]{Bounding the precision of each Pauli string under the DSS protocol}
 $\bullet$ 
\hyperref[app:COSTrelationship_with_VARIANCE]{The \textsf{COST} function’s relationship with variance}
 $\bullet$ 
\hyperref[app:asgoodas_shallow_shadows]{DSS performance at least as good as equivalent-depth shallow shadows}
 \end{minipage}
\medskip

\noindent \textbf{\ref{app:TensorNetwork_for_cost_function}.~~\hyperref[app:TensorNetwork_for_cost_function]{TENSOR NETWORKS FOR EFFICIENT COST FUNCTION EVALUATION}} \dotfill\textbf{\pageref{app:TensorNetwork_for_cost_function}}
\medskip

\noindent \qquad \begin{minipage}{\dimexpr\textwidth-1.3cm}
 \hyperref[app:TensorNetwork_for_cost_function]{Tensor network technique for estimating Pauli weights}
 $\bullet$ 
 \hyperref[app:TensorNetwork_for_cost_function]{The DSS algorithm is efficient}
\end{minipage}
\medskip

\noindent \textbf{\ref{app:QuantumChemistry}.~~\hyperref[app:QuantumChemistry]{QUANTUM CHEMISTRY NUMERICAL SIMULATIONS}} \dotfill\textbf{\pageref{app:QuantumChemistry}}
\medskip

\noindent \qquad \begin{minipage}{\dimexpr\textwidth-1.3cm}
\hyperref[app:QuantumChemistry]{DSS for estimating ground state energy of quantum chemistry molecules}
 $\bullet$ 
 \hyperref[app:converging_est_error]{Repeated simulations to show converging estimation error}
 $\bullet$ 
 \hyperref[app:optimal_h2]{DSS optimal on $H_2$}
\end{minipage}
\medskip

\noindent \textbf{\ref{app:HubbardModel}.~~\hyperref[app:HubbardModel]{HUBBARD MODEL NUMERICAL SIMULATIONS}} \dotfill\textbf{\pageref{app:HubbardModel}}
\medskip

\noindent \qquad \begin{minipage}{\dimexpr\textwidth-1.3cm}
\hyperref[app:ModifiedDSS_25mmts_perPauli]{DSS for estimating energy variance during quantum verification}
 $\bullet$ 
 \hyperref[app:ShallowShadowsBackground]{Shallow shadows}
 $\bullet$ 
 \hyperref[app:NaieveGroupingBackground]{Naieve grouping scheme}
\end{minipage}
\medskip

\newpage

\section{\label{app:DSSalgorithm} THE DSS ALGORITHM: DETERMINING THE MEASUREMENT CIRCUITS}

In this appendix we present our Derandomized Shallow Shadows (DSS) classical algorithm, determining the $N$ depth-$d$ measurement circuits we should apply on our quantum simulator.
The measurements more efficiently estimate the set of Pauli strings $\{P\}$ defined on a system of $n$ qubits compared to existing protocols. 
We outline the detailed steps of our DSS algorithm in Algorithm \hyperref[alg:DSS]{ 1} and provide insights into why this approach results in more effective measurements. 
One of the key contributions of our work is to prove that that the classical algorithm for finding good measurement circuits can be performed efficiently. A naive evaluation of the DSS cost function would require exponential time and memory, making it impractical for large system sizes. However, we overcome this by leveraging tensor network techniques to reduce the computational overhead to resources only polynomial in the system size. This section focuses on the DSS algorithm, and Appendix \ref{app:TensorNetwork_for_cost_function} is dedicated to explaining our efficient tensor network approach for evaluating the cost function.

\vspace{3mm}
\begin{center}
    \textit{How the DSS algorithm works}
\end{center}

Given a measurement budget of $N$ depth-$d$ measurements for estimating the expectation value of 
some Pauli strings $\{P\}$, the DSS algorithm will output  $N$ circuits of at most depth $d$ to use for this learning problem.
As such, our DSS algorithm starts with $N$ ensembles, $\{\mathcal{U}_i\}_{i=1}^N$, each representing a measurement shot.
Each ensemble is a distribution over a circuit ansatz that contains $d$ layers of two-qubit gates interleaved with single-qubit rotations -- see the measurement ansatz circuit (i) in Figure \ref{fig:Figure1}(a). And each gate in this ansatz is sampled from the Clifford group: each two-qubit (single-qubit) gate is randomly sampled from $\text{Cl}(2^2)$ ($\text{Cl}(2)$). 
See Definition \ref{def:initial_ansatz}.
\begin{definition}
 [\textbf{DSS Initial Ansatz}]\label{def:initial_ansatz}
Given depth $d$ and system size $n$, each measurement circuit ensemble $\mathcal{U}_i$ in the DSS algorithm is initialized with the following ansatz:
\begin{itemize}
    \item $\mathcal{U}_i$ contains $d$ layers of two-qubit gates interleaved with $d+1$ layers of single-qubit rotations (see circuit i in Figure \ref{fig:Figure1}a). 
\item The final two-qubit gate layer always couples the $1$st and $2$nd qubits, the $3$rd and $4$th, and so on. Note that if $n$ is odd, we only have $\lfloor n/2 \rfloor$ two-qubits gates per layer, and we define the final layer such that the $n$th qubit has no gate acting on it. The second to last layer is offset by one, coupling the $2$nd and $3$rd qubits, the $4$th and $5$th qubits, and so on. If $n$ even, we assume periodic boundary conditions and couple the $n$th and $1$st qubits. 
\item Each gate in this ansatz is sampled from the Clifford group: each two-qubit (single-qubit) gate is randomly sampled from $\textnormal{Cl}(2^2)$ ($\textnormal{Cl}(2)$). 
\end{itemize}
\end{definition}
Since each starting $\mathcal{U}_i$ has all one- and two-qubit gates sampled from $\text{Cl}(2)$ and $\text{Cl}(2^2)$ respectively, it is akin to being randomly sampled from an ensemble of depth-$d$ Clifford rotations.
In fact, this ensemble of rotations is equivalent to that of a shallow shadows ensemble \cite{bertoni2022shallow, ippoliti2023operator, hu2023classical} containing $d$ layers of two-qubit gates. 
These ensembles traditionally do not contain layers of single qubit rotations because sampling two-qubit gates from $\text{Cl}(2^2)$ already encapsulates the randomness generated by the single-qubit rotations. 
However, in our case we include the single-qubit rotations because our two-qubit gate options during derandomization do not generate the full two-qubit Clifford group (we will comment more on this below). 
Since the initial measurement circuit ansatz is equivalent to shallow shadows, the variance before derandomization is the variance of the shallow shadows protocol.
Crucially, the variance in shallow shadows is computed under the assumption that each gate in the circuit is sampled uniformly from the local Clifford ensemble.

While we start with the randomized ansatzes discussed above, our derandomization protocol will output a set of $N$ fixed circuits, which we defnine as $\{U^{DSS}_i\}_i$. In other words, the probability density function associated with the $i$th measurement ensemble $\mathcal{U}_i$ will become a delta distribution. Sampling from the final $\mathcal{U}_i$ will return the same unitary $U^{DSS}_i$ every time with probability 1. 
This is what the derandomization achieves. 
After starting with $N$ measurement ensembles initialized in the ansatz of Definition \ref{def:initial_ansatz}, the derandomization transitions each ansatz's gates from statistical mixtures over Clifford rotations (randomized) to fixed rotations (deterministic). 
Our algorithm goes gate-by-gate through each measurement ansatz, fixing each gate such that it is no longer randomly sampled from the Clifford group. 
The two-qubit gates become either \textsf{Identity}, \textsf{CNOT} or \textsf{SWAP}, and the single qubit rotations become one of the 6 independent rotations in $\text{Cl}(2)$ (see table \ref{tab:GateValues}). A reader well-versed in the Clifford group might point out that $|\text{Cl}(2)| = 24$. While this is true, notice that the Clifford group unitaries contain global phases in $\{1,-1,i,-i\}$, which do not matter when making a measurement. As such we only really have 6 independent rotations to consider, and these rotations define how our Paulis transform. 
We can fully define each measurement $i$ ansatz $\mathcal{U}_i = \mathcal{U}_i(\boldsymbol{t}^{(i)},\boldsymbol{s}^{(i)})$ by vectors $\boldsymbol{t}^{(i)}$ and $\boldsymbol{s}^{(i)}$, which specify the two- and single-qubit gates, respectively. 
For example, each entry of $\boldsymbol{t}^{(i)}$ corresponds to a different two-qubit rotation in the $i$th measurement ansatz, and this entry takes on one of the following numbers: 0 ($\sim \text{Cl}(2^2)$), 1(\textsf{Identity}), 2 (\textsf{CNOT}), or 3(\textsf{SWAP}). 
Notice that the directionality of the \textsf{CNOT} does not matter: a \textsf{CNOT} can be decomposed into a \textsf{CZ} with Hadamards on the target qubit \cite{nielsen_chuang_2010}. Therefore, when we set the single qubit rotation layers, we can always effectively swap the directionality of the \textsf{CNOT}.   
In fact we label all our rotations with numbers -- see table \ref{tab:GateValues}. 
We begin with each vector as the all-zero vector, $\boldsymbol{t}^{(i)}=\Vec{0}$ and $\boldsymbol{s}^{(i)}=\Vec{0}$, representing that each gate is sampled from the local Clifford group.  Then we derandomize, finding our final set of measurement ansatzes $\mathcal{U}_i$ by updating the entries of these vectors to be non-zero.
See Algorithm \hyperref[alg:DSS]{1} for a formal description of how we derandomize. Since each measurement ansatz $\mathcal{U}_i$ is fully specified by its vectors $\boldsymbol{t}^{(i)}$ and $\boldsymbol{s}^{(i)}$, this task is reduced to finding a good set of vectors $\{ \boldsymbol{t}^{(i)}, \boldsymbol{s}^{(i)}\}_i$ minimizing the cost function, \textsf{COST}, which is related to the sample complexity and will be defined in Definition \ref{def:DSScost_appendix}.

\begin{table}
[t]
  \begin{center}
    \renewcommand{\arraystretch}{1.5}
    \begin{tabular}{p{1.5cm}||p{1.5cm}p{1.5cm}p{1.5cm}p{1.5cm}p{1.5cm}p{1.5cm}p{1.5cm}} 
       & \multicolumn{1}{c}{0} & \multicolumn{1}{c}{1} & \multicolumn{1}{c}{2} & \multicolumn{1}{c}{3} & \multicolumn{1}{c}{4} & \multicolumn{1}{c}{5} & \multicolumn{1}{c}{6}
      \\
      \hline \hline
      \multicolumn{1}{r||}{\textit{Two-qubit gate options} }
      & 
      \multicolumn{1}{c}{$\hspace{3mm}$$\sim $Cl$(2^2)$$\hspace{3mm}$}
      & 
      \multicolumn{1}{c}{$\hspace{3mm}$\textsf{Identity}$\hspace{3mm}$}
      & 
      \multicolumn{1}{c}{$\hspace{5mm}$\textsf{CNOT}$\hspace{5mm}$}
      & 
      \multicolumn{1}{c}{$\hspace{4mm}$\textsf{SWAP}$\hspace{4mm}$}
      & 
      \multicolumn{1}{c}{\cellcolor{gray!15}$\hspace{19mm}$}
      & 
      \multicolumn{1}{c}{\cellcolor{gray!15}$\hspace{19mm}$}
      & 
      \multicolumn{1}{c}{\cellcolor{gray!15}$\hspace{19mm}$}
      \\
      \multicolumn{1}{r||}{\textit{Single-qubit gate options}}
      & 
      \multicolumn{1}{c}{$\sim $Cl$(2)$}
      & 
      \multicolumn{1}{c}{\textsf{Identity}}
      & 
     \multicolumn{1}{c}{$X\leftrightarrow Z$}
      & 
     \multicolumn{1}{c}{$Y\leftrightarrow X$}
      & 
      \multicolumn{1}{c}{$Z\leftrightarrow Y$}
      & 
      \multicolumn{1}{c}{$X\rightarrow Z\rightarrow Y$}
      & 
     \multicolumn{1}{c}{$X\rightarrow Y\rightarrow Z$}
      \\
    \end{tabular}
    
    \vspace{1mm}

    \caption{\textit{Gate assignments before and after derandomization.} 
    We can fully define each measurement $i$ ansatz $\mathcal{U}_i$ by vectors $\boldsymbol{t}$ and $\boldsymbol{s}$, which specify the two- and single-qubit gates in the measurement circuit. This table depicts the rotations these gates can be -- here, we label our rotations with numbers (columns). 
    Notice that the vectors $\boldsymbol{t}$ and $\boldsymbol{s}$ are always initialized as the all-zero vector $\Vec{0}$ by definition. 
    The two-qubit gates become either \textsf{Identity}, \textsf{CNOT} or \textsf{SWAP}, and the single qubit rotations become one of the 6 independent rotations in $\text{Cl}(2)$. We identify these based on how they permute the Pauli group. }
    \label{tab:GateValues}
  \end{center}
\end{table}


\noindent\rule{17.5cm}{0.8pt}

\noindent
\textbf{Algorithm 1} Derandomized Shallow Shadows (DSS) 
\vspace{-2mm}

\noindent\rule{17.5cm}{0.4pt}
    \begin{algorithmic}[1]\label{alg:DSS}
    \Require Measurements $N$, number of qubits $n$, Pauli strings to estimate $\{P\}$, max measurement circuit depth $d$
    \Ensure Measurement ansatzes $\{\mathcal{U}_i (\boldsymbol{t}^{(i)}, \boldsymbol{s}^{(i)})\}_{i=1}^N$, which have at most $d$ layers of two-qubit gates. The vectors $\boldsymbol{t}^{(i)}, \boldsymbol{s}^{(i)}$ efficiently represent the $i$th circuit by specifying its two-qubit and single-qubit gates. 
    \vspace{4mm}
    \State {\bf Setup Ansatzes:} $\mathcal{U}_i = \mathcal{U}_i\left(\boldsymbol{t}^{(i)}, \boldsymbol{s}^{(i)}\right)$ as the depth $d$ random gate ansatz (i) of Figure \ref{fig:Figure1}(a) and Definition~\ref{def:initial_ansatz}. The vector $\boldsymbol{t}^{(i)}$ specifies the $d\times \lfloor n/2 \rfloor$ two-qubit gates, and the second vector $\boldsymbol{s}^{(i)}$ specifies the $(d+1)\times n$ single-qubit gates. 
    Each entry takes on an integer (initially all are $0$) defined in table \ref{tab:GateValues}, which corresponds to a fixed gate.
    \State 
    \State $\forall i,$ $\boldsymbol{t}^{(i)} = \Vec{0}$ and $\boldsymbol{s}^{(i)} = \Vec{0}$
    \For{$j=1$ to $N$} \Comment{loop over measurements}
        \For{$g=1$ to $d\times \lfloor n/2 \rfloor$} \Comment{loop over two-qubit gates}
        \For{$V=$ $1$ (\textsf{Identity}), $2$ (\textsf{SWAP}), $3$ (\textsf{CNOT})} \Comment{two-qubit gate assignment options}
        \State $f(V) =  \text{\textsf{COST}}\left(\{\boldsymbol{t}^{(i)}, \boldsymbol{s}^{(i)}\}_i \text{ |  $t^{(j)}_g = V$} \right)$
        \EndFor
        \State $t^{(j)}_g\leftarrow \text{argmin}_V f(V) $
        \EndFor

        \vspace{1mm}
        \For{$g=1$ to $(d+1)\times n$} \Comment{loop over single-qubit gates}
        \For{$W=$ $1$, $2$, $3$, $4$, $5$, $6$} \Comment{single-qubit gate assignment options}
        \State $f(W) =  \text{\textsf{COST}}\left(\{\boldsymbol{t}^{(i)}, \boldsymbol{s}^{(i)}\}_i \text{ |  $s^{(j)}_g = W$} \right)$
        \EndFor
        \State $s^{(j)}_g\leftarrow \text{argmin}_W f(W) $
        \EndFor
    \EndFor
    \State \vspace{2mm}output $\{\boldsymbol{t}^{(i)}, \boldsymbol{s}^{(i)}\}_i$
    \end{algorithmic}
\vspace{-2mm}
\noindent\rule{17.5cm}{0.4pt}

\vspace{3mm}

Once the vectors $\{ \boldsymbol{t}^{(i)}, \boldsymbol{s}^{(i)}\}_i$ are fully nonzero at the end of the algorithm, we obtain a fixed set of circuits $\{U^\textnormal{DSS}_i\}_i$ that can be directly implemented to measure the target Pauli strings. We find $U^\textnormal{DSS}_i$ from
\begin{equation}
    \{U^\textnormal{DSS}_i\}_i = \{\mathcal{U}(\boldsymbol{t}_\text{final}^{(i)}, \boldsymbol{s}_\text{final}^{(i)})\}_i,
\end{equation}
where each $(\boldsymbol{t}_\text{final}^{(i)}, \boldsymbol{s}_\text{final}^{(i)})$ is the final vector that specifies a deterministic circuit, i.e. $U_i^{DSS}$.
The DSS strategy is a greedy algorithm and, as such, is not guaranteed to give a global minimum. 
However, as is demonstrated by multiple examples in the main text, the strategy still outperforms previous state-of-the-art bounded depth techniques and sometimes finds the optimal set of bounded-depth measurements. We can consider a simple example. Imagine wanting to learn the $n=8$ qubit Pauli strings $IIIXIIIX$,  $IIIYIIIY$, and  $IIIZIIIZ$ and allowing circuits up to depth $d=3$. 
Since the Bell basis simultaneously diagonalizes $XX$, $YY$, and $ZZ$, the optimal measurement strategy is clearly to rotate into the Bell basis of the 1st and 5th sites. We put these Paulis into our algorithm and allow $N=100$ measurments.
The DSS algorithm finds this optimal solution, outputting $100$ copies of the Bell basis measurement circuit. See Supplementary Figure \ref{supfig:Bell_example}.
We believe the DSS algorithm does so well because derandomizing the two-qubit gates first allows our measurement circuits to explore a variety of geometric structures \cite{jonforthcoming}.
This ability to explore comes from the possible two-qubit gate choices during derandomization. 
In particular, the two-qubit \textsf{SWAP} gate allows the algorithm to modulate the geometric structure of each measurement circuit: after a few layers of \textsf{SWAP}s, two-qubit \textsf{CNOT} gates can couple non-local pairs of qubits. 
Moreover, in addition to \textsf{SWAP}, we only include \textsf{Identity} and \textsf{CNOT} as options because, when interleaved by single qubit Cliffords, these gates can generate the Clifford group  \cite{nielsen_chuang_2010}.

\begin{figure*}[h]
\centering
\includegraphics[scale = 0.43]{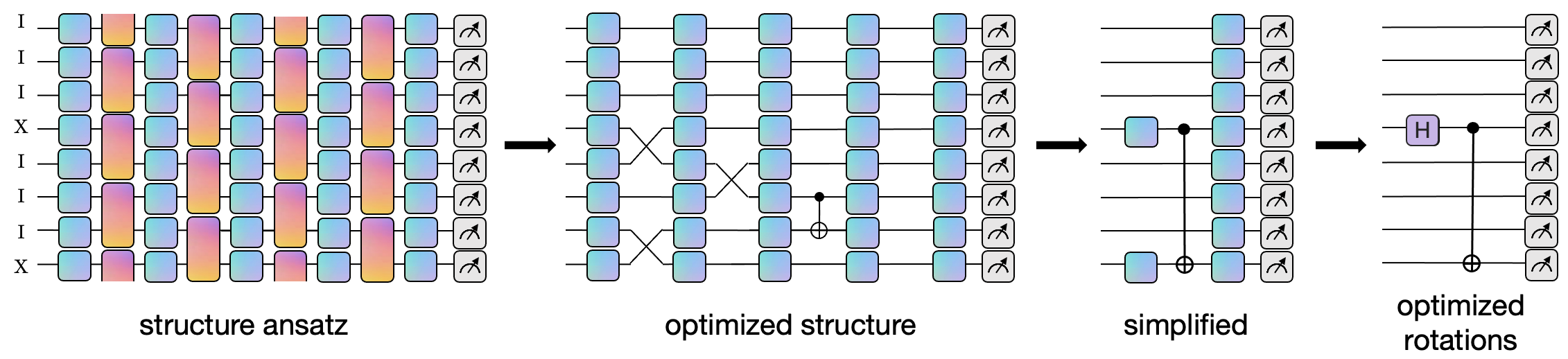}
\caption{
\emph{Bell basis example.}  
Imagine wanting to learn the $n=8$ qubit Pauli strings $IIIXIIIX$,  $IIIYIIIY$, and  $IIIZIIIZ$ and allowing $N=100$ measurement circuits up to depth $d=4$. The DSS algorithm outputs $100$ copies of the measurement circuit, which rotates into the Bell basis on the fourth and final qubits. The bell basis simultaneously diagonalizes $XX$, $YY$, and $ZZ$, and therefore DSS finds the optimal outcome. This figure depicts what a single measurement circuit looks like as it is being derandomized. First the two qubit gates are fixed and then the single qubit gates are fixed. It is impressive that even though we allow depth $d=3$, DSS can find the optimal, lower-depth solution.
}\label{supfig:Bell_example}
\end{figure*}

Finally, notice in Algorithm \hyperref[alg:DSS]{1} that the order in which we assign gates (i.e. in our $\boldsymbol{t}^{(i)}$ and $\boldsymbol{s}^{(i)}$ vectors) determines which gates are fixed first -- and thus can affect the outcome measurements of our DSS protocol. In general, we start with derandomizing the final layer of two qubit gates (final = directly before measurement) and end with derandomizing the first layer. And for the single qubit gates, we generally go the opposite direction: derandomizing the first layer of gates, then the second layer and so on, ending with the last layer of gates before measurement. All of this being said, we want to stress that the best ordering will depend on the underlying learning problem, and we recommend attempting different orderings.

\vspace{3mm}
\begin{center}
    \textit{The DSS \textsf{COST} function}
\end{center}

When our DSS protocol derandomizes each random gate, it chooses the most effective gate -- from the options listed in Table \ref{tab:GateValues} above -- for our learning problem.
In other words, each gate is chosen to minimize the confidence with which we estimate our Pauli strings of interest.
With a fixed measurement budget $N$, the confidence is directly related to the variance, and in Appendix \ref{app:Performance_guarantees} we will show that bounding confidence leads to precision guarantees on the desired observables. Therefore, we adopt the confidence as our cost function, as was first pioneered in Ref \cite{huang2021efficient}. 
The confidence with which we learn the Pauli string $P$ with precision $\epsilon$, using our learning protocol of $N$ measurements $\{\mathcal{U}_i\}_{i=1}^N$, is as follows:
\begin{equation}
\text{\textsf{conf}}_P\left(\{\mathcal{U}_i\}_{i=1}^N\right) = 2 \prod_{i=1}^{N} \exp{\left[ \hspace{1mm} -\frac{\epsilon^2}{2} p_i(P) \hspace{1mm} \right]},
\end{equation}
    where $p_i(P)$ is the probability that the $i$th measurement circuit diagonalizes $P$. We can express this in terms of our measurement ensembles $\{\mathcal{U}_i\}_{i=1}^N$. See the following definition. 
\begin{definition}[\textbf{Pauli weight for measurement ensembles $\mathcal{U}_i$}]\label{def:PauliWeightUm}
Given the $i$th measurement's ensemble $\mathcal{U}_i$, which is some statistical mixture over Clifford rotations, the Pauli weight of the $n$-qubit Pauli string $P$ is
\begin{equation}
     p_i(P) = \frac{1}{2^n} \Expect_{U\sim \mathcal{U}_i} \sum_{b}  \bra{b} U P U^\dagger \ket{b}^2.
\end{equation}    
The sum over $b$ is a sum over all computational basis states.
\end{definition}
It turns out that this probability is, formally, the Pauli weight \cite{Bu24} (the inverse of the shadow norm \cite{huang2020predicting, jonforthcoming, van2022hardware}), when using the statistical ensemble defined by the $i$th measurement $\mathcal{U}_i$ to perform classical shadow tomography. Again, see appendix subsection Appendix \ref{app:Performance_guarantees} for background on this connection.  
When a measurement circuit has been fully derandomized, it simultaneously diagonalizes some set of $2^n$ Paulis. In other words, it learns those Paulis and no others. Therefore, the probability $p_i(P)$ that the derandomized measurement circuit $U_i^{DSS}$ learns one of our Paulis $P$ of interest must be either 0 or 1. As we will see below, learning a Pauli string $P$ ($p_i(P) = 1$) will give a narrower confidence interval.

We define the cost function of the DSS protocol as the convex combination of the confidences of the desired Pauli strings $\{P\}_P$. 
We use weights $w_P$ to indicate relative importance among our Paulis ($w_P = 1$ for all $P$ in Figure \ref{fig:Figure1}).
With measurement circuits indexed $i = 1$ to $N$, our DSS cost function is defined below. 
\begin{definition} [\textbf{DSS Cost function}]\label{def:DSScost_appendix}
Given $N$ measurement ensembles $\{\mathcal{U}_i\}_{i=1}^N$, where $\mathcal{U}_i = \mathcal{U}_i(\boldsymbol{t}^{(i)}, \boldsymbol{s}^{(i)})$ and  Pauli strings $\{P\}$ we want to learn.  The DSS cost function takes the form 
\begin{eqnarray}
    \text{\textsf{COST}}\left(\{\mathcal{U}_i\}_{i=1}^N\right) 
    &=& \sum_P w_P \hspace{1mm} \text{\textsf{conf}}_P\left(\{\mathcal{U}_i\}_{i=1}^N\right) \\ 
    &=& \sum_P w_P \times 2 \prod_{i=1}^{N} \exp{\left[ \hspace{1mm} -\frac{\epsilon^2}{2} p_i(P) \hspace{1mm} \right]}. \label{eqn:DSScost_appendix} 
\end{eqnarray}
where $w_P$ are weights chosen for each Pauli string $P$, $\epsilon$ is a hyperparameter, and $p_i(P)$ is the Pauli weight of $P$ under the ensemble $\mathcal{U}_i$.
\end{definition}
Here, we consider the precision $\epsilon$ as a hyperparameter to tune.  We can build intuition for this hyperparameter using concentration inequalities -- for example, under a fixed number of measurements $N$, choosing a very small precision ($\epsilon$) results in poor confidence. Whereas, a large precision ($\epsilon$) will allow for high confidence. In other words, this $\epsilon$ hyperparameter controls the sensitivity of our cost function landscape to different gate choices. 
While here our \textsf{COST} function's input is the set of measurement ensembles $\{\mathcal{U}_i\}_i$,
when implementing DSS, we always use the measurement ensembles' vector representation $\{\boldsymbol{t}^{(i)},\boldsymbol{s}^{(i)}\}_i$ because it is considerably more efficient.

In order to evaluate \textsf{COST} for some set of measurement circuits, we need the probabilities $\{p_i(P)\}$ that the measurement circuits diagonalize each Pauli string $P$ of interest. 
Once we have these probabilities, we plug them into Equation~\eqref{eqn:DSScost_appendix} and evaluate. 
However, solving for these probabilities is nontrivial. In order to evaluate $p_i(P)$, we must track how the unitaries in our ensemble $\mathcal{U}_i$ transform $P$, and using direct simulation this requires space exponential in the system size.
In particular, for each $U \sim \mathcal{U}_i$ you would apply $U$: $P \rightarrow U P U^\dagger$ and see how often $U \rightarrow U P U^\dagger$ is diagonal; this evaluation is $\mathcal{O}(\text{exp}(n))$ because we are multiplying matrices of size $2^n \times 2^n$. Moreover, we would need to perform this expensive subroutine for every physical circuit in $\mathcal{U}_i$'s ensemble of possible circuits.

A key contribution of this work is that we are able to avoid this exponential time (and space) complexity. 
Using tensor network techniques, we bypass this exponential overhead and utilize resources only polynomial in the system size when $d$ is polylogarithmic. Our technique mimics random Markovian classical processes, and appendix \ref{app:TensorNetwork_for_cost_function} pedagogically describes how it efficiently calculates the cost function during derandomization.

\vspace{3mm}
\begin{center}
    \textit{Example from the main text: $30$ randomly-chosen Pauli strings}
\end{center}

To round out this discussion of DSS, let's return to the example in the main text. 
In Figure \ref{fig:Figure1}, we considered the problem of estimating $30$ Pauli strings on $8$-qubits, using $N=100$ measurements with depth $d=3$.  We randomly generate these $30$ Pauli strings 
\footnote{These Pauli strings are as follows: \newline $IXIXYXYY$, $XIYIIIZZ$, $XYXZXIXX$, $YZYXIIYZ$, $IYZZIZII$, $ZZYXYXZI$, $ZXIZIYXI$, $IIIIZZYY$, $XIZZXZXY$, $IZYZXYII$, $XZZYXXXX$, $ZIIIZIII$, $IZYXXZYY$, $XIXIXYYY$, $XIYYXZYX$, $IZIXXZIY$, $YYIXYYZI$, $IXIIXIII$, $IZYXYYIZ$, $XYXXIIIY$, $XYZIXZIZ$, $IYYZZXYX$, $XZYYXYIY$, $IZXXXIII$, $IYZYYIYZ$, $IXXXXIZZ$, $ZZYYZYIY$, $YYZXXXYY$, $XIIIXIII$, and $XIYXYYXZ$.}.
Figure \ref{fig:Figure1}(a) demonstrates snapshots of this procedure on the $17$th measurement ensemble, and we perform this derandomization procedure one-at-a-time on each of our $100$ measurement ensembles. Notice the final circuit (circuit iv) no longer contains any random gates.  
As such, for each observable $P$ in our set $\{P\}$, this (now deterministic) measurement circuit either diagonalzes the Pauli string $P$ or it does not because the circuit is Clifford. If it diagonalizes $P$, then measurement in the $Z$ basis will yield information about the expectation value $\expval{P}$. 
In the next appendix (Appendix \ref{app:estimating_exp_vals_with_DSS}), we discuss how to turn the quantum computer's measurement outputs (i.e. the measurements from the derandomized circuits) into estimates of these expectation values.

This example also highlights how DSS's algorithmic structure finds measurement circuits that are effective for the learning task. 
At the beginning each gate is randomly sampled from the Clifford group (Figure \ref{fig:Figure1}(a), circuit i), so the procedure starts in a mixture of depth-$d$ Clifford circuits. 
Each measurement circuit being in a statistical mixture over Clifford rotations allows the algorithm to ``see’’ the entire landscape of possible measurements. Then, as we derandomize each ansatz, we restrict our classical mixture to smaller and smaller regions of the landscape. This continues until each gate in each measurement has been fixed, and we have converged to one point in the landscape.  
Figure \ref{fig:Figure1}(c) supports this intuition. 
Since each ansatz begins as a statistical mixture over Clifford rotations, we can learn any Pauli string with low but nonzero probability (lightest blue line). 
Then, as we derandomize each measurement, we are probabilistically more likely to learn our $30$ Paulis of interest. 
At the end once all gates in all measurements are fixed, each measurement is fully deterministic, and we often learn our chosen Pauli strings (darkest blue line).

\newpage
\section{ESTIMATING EXPECTATION VALUES WITH THE DSS MEASUREMENT CIRCUITS}\label{app:estimating_exp_vals_with_DSS}

Now that we have described how the DSS algorithm determines which $N$ depth-$d$ measurements to make, we will discuss how to construct estimates of our Pauli strings $\{P\}$ of interest.  
Assume that we have already run our DSS algorithm and implemented the $N$ measurements $\{U_i^{DSS}\}_{i=1}^N$ it told us to make. These $U_i^{DSS}$ are the unitaries sampled with probability 1 (because the ensemble becomes fixed) from the ensemble $\mathcal{U}_i(\boldsymbol{t}^{(i)}_{\text{final}}, \boldsymbol{s}^{(i)}_{\text{final}})$, where $\boldsymbol{t}^{(i)}_{\text{final}}, \boldsymbol{s}^{(i)}_{\text{final}}$ are vectors output by the DSS algorithm. 
We can use the bit strings $\{b_i\}_{i=1}^N$ such that $b_i \in \{0,1\}^n$ output by the quantum computer to estimate our Paulis. 
Notice that since the output of the DSS algorithm is some set of deterministic Clifford circuits, every output circuit $U_i^{DSS}$ either diagonalizes or does not diagonalize $P$. 
In the derandomization work of Ref \cite{huang2021efficient}, the authors defined a ``hitting count'' $h(P)$ for each Pauli $P$. In analogy with their work, we define the hitting count (below), which - after the DSS algorithm has finished - is equivalent to the number of these $N$ circuits that diagonalize Pauli string $P$. 
\begin{definition} [\textbf{Hitting Count}]\label{def:hittingcount} 
Given the $N$ measurement ensembles $\mathcal{U}_i$, 
the hitting count $h(P)$ for a Pauli string $P$ is
    \begin{equation}
        h(P) = \sum_i p_i(P).
    \end{equation}
\end{definition}
Again, if the DSS algorithm has already run and $\{U_i^{DSS}\}_{i=1}^N$ are the $N$ unitaries sampled with probability 1 (because the ensemble becomes fixed) from the ensemble $\mathcal{U}_i(\boldsymbol{t}^{(i)}_{\text{final}}, \boldsymbol{s}^{(i)}_{\text{final}})$. Then the hitting count is the number of times we measure $P$. Therefore, we can use it to normalize our estimator. In particular, we estimate $\expval{P} \approx \hat{o}(P)$, where the estimator
\begin{equation}\label{eqn:empirical_average_estimator}
    \hat{o}(P) = \frac{1}{h(P)}  \sum_{i} \bra{b_{i}} U^{DSS}_{i} P U_{i}^{DSS\dagger} \ket{b_{i}}
\end{equation}
for our learning protocol is simply this empirical average. Each $\bra{b_{i}} U_{i} P U_{i}^\dagger \ket{b_{i}}$ can be computed efficiently because our circuits are Clifford rotations, and our observable is a Pauli. Moreover, notice that even though the DSS algorithm -- which identifies the Pauli strings -- is born from randomized measurements, this technique does not require any special post-processing. 

Since each measurement $i$ is deterministic, we no longer sample from a nontrivial distribution of Clifford circuits containing more than a single circuit. Instead we just sample $U_i^{DSS}$ with probability 1, and so $p_i(P) \in \{0,1\}$ serves as an indicator function for whether $U_i^{DSS}$ diagonalizes $P$.
This is the estimator expression we will use below as we derive performance guarantees. 
Notice, however, that if $h(P)=0$ for some $P$, then none of our $N$ measurements ever diagonalize and learn $\expval{P}$. As such, we just set $\hat{o}(P) = 0$ since we have no information on this expectation value.

The above procedure allows us to jointly estimate $|\{P\}|$ Pauli observables using $N$ measurements of depth at most $d$. The quality of our empirical average reconstruction is exponentially suppressed in $h(P)$, the number of times we hit each Pauli string $P$, and below (Appendix~\ref{app:Performance_guarantees}) we will derive the associated guarantees. 
Finally, in the procedure above, we assumed all of the post-processing could be done efficiently. For example, we commented that we already know which circuits diagonalize each Pauli $P$. 
This comes from the DSS algorithm: recall that we compute (lines 7, 11 in Algorithm \hyperref[alg:DSS]{1}) all $p_i(P)$ to calculate the cost function. At the end of the protocol, when each $U_i^{DSS}$ is determined, these probabilities become indicator functions for whether $U_i^{DSS}$ diagonalizes $P$. Therefore, we can just save these mid-algorithm calculations to use when constructing our estimators. That being said, this is also not entirely necessary -- since each $U_i^{DSS}$ is a Clifford circuit, it is also efficient to recalculate this in post-processing. Again, we defer the interested reader to our tensor network section in Appendix \ref{app:TensorNetwork_for_cost_function}.

\newpage
\section{DSS PERFORMANCE GUARANTEES} \label{app:Performance_guarantees}

In the previous sections, we discussed our classical DSS algorithm.
As discussed, the DSS algorithm's derandomization optimizes the measurement circuits to minimize the variance of specific Pauli strings.
We then outlined how to implement these circuits and use the outcomes to estimate expectation values for the Pauli strings of interest. In this appendix, we provide formal performance guarantees on these estimates. 
The main result (Theorem \ref{thm:theorem_performance_guarantee}) shows how our final cost function value, at the end of our DSS algorithm when all gates have been derandomized, can be used to bound the precision of our estimates. 
In particular, in order to use this theorem to obtain a numerical upper bound on the precision with which we estimate every $\expval{P} \in \{P\}$,  one will need to solve a transcendental equation.
We bring the reader to this point and then cite a variety of numerical methods one can use to perform this evaluation.  
Moreover, we will also show how to (formally) tighten this upper bound on precision for each observable $\expval{P}$, by using additional information from the output of the DSS algorithm. 
This is perhaps both a simpler and more practical strategy than solving the aforementioned transcendental equation.
By considering each Pauli observable individually, we can use its variance to solve for a better bound -- and note that our bounds are state-agnostic because we maximize our variance over states. We will also discuss our derandomization \textsf{COST} function's relationship with variance.

Finally, we end with some technical remarks. 
In particular, we show that a relaxed version of our DSS protocol is guaranteed to perform at least as well as shallow shadows with the same depth circuits. 
We also numerically observe 
that depth $d$ DSS consistently performs at least as well lower depth ($< d$) versions. 
The freedom to set deep gates to the identity is a key feature of our algorithm. 
Recovering observables of small weight becomes exponentially costly in depth with fully randomized shallow shadows because the desired information gets scrambled across a larger system from which it has to be recovered probabilistically. 
The fact that our protocol shows improvement with depth indicates that we use the freedom of increased depth only when it actually facilitates measuring observables in parallel.

\vspace{5mm}

\begin{center}
    \textit{Deriving our DSS }\textsf{COST}\textit{ function and its guarantees}
\end{center}

Here we show
that the final value of our cost function bounds the precision across all our Pauli string estimates (Theorem \ref{thm:theorem_performance_guarantee}).
The results we derive here have similar spirit as the results of Ref~\cite{huang2021efficient}; however, we prove everything for measurement circuits with nontrivial - albeit shallow - entanglement. 
Moreover, we also take the results a step further by (1) showing how to obtain tigher precision-guarantees on our estimators via the final hitting counts (Definition~\ref{def:hittingcount}) of our DSS measurement circuits and (2) specifying the confidence cost function's relationship with our protocol's variance. 
As the variance depends on the underlying state, we always consider the maximum variance over states such that this quantity is agnostic of the underlying state prepared on our quantum simulator.
We suggest that a reader, who is simply interested in applications of this scheme, skip over Lemma \ref{lem:lemma_concentration_ineq} and Theorem \ref{thm:theorem_performance_guarantee} below;  start reading again (after Theorem \ref{thm:theorem_performance_guarantee}) where we discuss how to further tighten our estimation error guarantees.

In order to prove Theorem \ref{thm:theorem_performance_guarantee}, we must first choose a concentration inequality for our scheme, which bounds the precision and confidence for our learning task in terms of the measurements $\{U^{DSS}_i\}_i$ we make in our protocol. 
We choose Hoeffding's inequality.
It turns out that, to apply this inequality, the only information we need about DSS's depth-$d$ measurements will be their hitting counts (see Definition~\ref{def:hittingcount}). 

\begin{lemma} [\textbf{Confidence Bound}]\label{lem:lemma_concentration_ineq}
     Suppose we want to estimate Pauli observables $\{P\}_P$ using $N$ depth-$d$ measurement circuits $\{U_i^{DSS}\}_{i=1}^N$. Fix some desired precision $\epsilon \in (0,1)$.
     Then the associated empirical averages $\hat{o}(P)$ (see \equref{eqn:empirical_average_estimator}) constructed from the measurement outcomes all obey
     \begin{equation}
          |\expval{P} - \hat{o}(P)| \leq \epsilon \hspace{5mm} \forall P \in \{P\}
     \end{equation}
     with probability (at least) $1-\delta \in (0,1)$ where $\delta$ takes the form
     \begin{equation}
        2 \sum_P
    \textnormal{exp}\left( \frac{- \epsilon^2}{2} h(P) \right) \leq \delta.
     \end{equation}
\end{lemma}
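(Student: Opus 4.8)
The plan is to prove Lemma \ref{lem:lemma_concentration_ineq} by a standard union bound over the Pauli strings combined with a Hoeffding-type tail estimate for each individual estimator $\hat{o}(P)$. First I would note that, since the DSS algorithm has already finished, each $U_i^{DSS}$ is a fixed Clifford circuit, so $p_i(P) \in \{0,1\}$ is simply the indicator that circuit $i$ diagonalizes $P$, and $h(P) = \sum_i p_i(P)$ is the number of such circuits. For the estimator $\hat{o}(P) = \frac{1}{h(P)} \sum_{i : p_i(P)=1} \bra{b_i} U_i^{DSS} P U_i^{DSS\dagger} \ket{b_i}$, I would observe that each summand is a $\pm 1$-valued random variable (the circuit diagonalizes $P$, so $U_i^{DSS} P U_i^{DSS\dagger}$ is a signed computational-basis projector-like operator whose eigenvalues are $\pm 1$), and crucially these summands are \emph{independent} across $i$ because the measurement outcomes $b_i$ on independent state preparations are independent. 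Moreover each has expectation exactly $\expval{P}$, so $\hat{o}(P)$ is an unbiased estimator and is an average of $h(P)$ i.i.d.\ bounded random variables lying in $[-1,1]$.

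Next I would apply Hoeffding's inequality to the sum $\sum_{i : p_i(P)=1} \bra{b_i} U_i^{DSS} P U_i^{DSS\dagger} \ket{b_i}$: for independent variables each in an interval of width $2$, the probability that the empirical mean deviates from its expectation by more than $\epsilon$ is at most $2\exp\!\left(-\frac{2 (h(P)\epsilon)^2}{h(P)\cdot 4}\right) = 2\exp\!\left(-\frac{\epsilon^2 h(P)}{2}\right)$. This gives, for each fixed $P$,
\begin{equation}
\Prob\big[\, |\expval{P} - \hat{o}(P)| > \epsilon \,\big] \leq 2\exp\!\left(-\frac{\epsilon^2}{2} h(P)\right).
\end{equation}
Then a union bound over all $P \in \{P\}$ yields
\begin{equation}
\Prob\big[\, \exists P : |\expval{P} - \hat{o}(P)| > \epsilon \,\big] \leq 2\sum_P \exp\!\left(-\frac{\epsilon^2}{2} h(P)\right) =: \delta,
\end{equation}
so with probability at least $1-\delta$ all estimates are $\epsilon$-accurate simultaneously, which is exactly the claim. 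I would also handle the edge case $h(P)=0$ separately: there the bound $2\exp(0)=2$ is vacuous (probability $\leq 2$ is trivially true), consistent with the convention $\hat{o}(P)=0$ noted earlier, so it does not break the union-bound argument.

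The main obstacle — and the place that needs the most care rather than the most computation — is justifying the two structural facts that make Hoeffding applicable: (i) that, conditioned on which circuits were selected, the single-shot quantities $\bra{b_i} U_i^{DSS} P U_i^{DSS\dagger} \ket{b_i}$ are genuinely independent and identically distributed with mean $\expval{P}$ when $p_i(P)=1$ (this rests on each $b_i$ coming from an independent copy of the prepared state measured in the basis defined by $U_i^{DSS}$, and on the fact that a Clifford circuit diagonalizing $P$ maps it to a $\pm 1$-valued diagonal Pauli $Z^{\otimes s}$ whose computational-basis expectation is $\langle Z^{\otimes s}\rangle = \langle P\rangle_{U_i^{DSS}\rho U_i^{DSS\dagger}}^{\text{rotated}} = \langle P\rangle$), and (ii) that the multiple appearances of the same physical circuit among the $U_i^{DSS}$ do not introduce correlations — each is still its own experimental shot. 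Once these are in place the inequality is immediate. I would present (i) with a short paragraph recalling the Clifford-conjugation property of Paulis and the definition of the DSS estimator, and relegate (ii) to a one-line remark.
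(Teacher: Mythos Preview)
Your proposal is correct and follows essentially the same approach as the paper: apply Hoeffding's inequality to each individual estimator $\hat{o}(P)$ as an average of $h(P)$ independent $\{-1,1\}$-valued random variables, then take a union bound (Boole's inequality) over all $P$. If anything, your version is more careful than the paper's in explicitly justifying the independence, boundedness, and unbiasedness hypotheses of Hoeffding and in noting the $h(P)=0$ edge case.
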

\begin{proof}
Consider each observable $P \in \{P\}_P$ of interest. Each $P$ is diagonalized by our measurement circuits $\{U^{DSS}_i\}_i$ a number of times equal to $h(P)$, and therefore, we have measured $\expval{P}$ with $h(P)$ shots. We can use a concentration inequality to bound the precision of our empirical average estimator $\hat{o}(P)$ defined in \equref{eqn:empirical_average_estimator} ( see Appendix~\ref{app:estimating_exp_vals_with_DSS}). 
Hoeffding's inequality assumes this estimator is an empirical average and is expressed in terms of the number of shots contributing to this estimator (here: $h(P)$). 
It assumes each shot contributing to the sum is an independent random variable in $\{-1,1\}$ -- just as in \equref{eqn:empirical_average_estimator}.  
As such, Hoeffding's inequality upper bounds the probability that our estimator is within some precision $\epsilon$ and takes the form
\begin{equation}
\label{eqn:hoeffding_each_P}
    \text{Pr}\big[\hspace{1mm} |\expval{P} - \hat{o}(P)| \geq \epsilon \hspace{1mm} \big] 
    \leq 
    2 \hspace{1mm} \text{exp}\left( \frac{- \epsilon^2}{2} h(P) \right).
\end{equation}
This inequality can be defined for every Pauli observable $P$ we want to learn. And therefore, we end up with $|\{P\}_P|$ such inequalities.  As was done in Ref~\cite{huang2021efficient}, we can apply Boole’s inequality (the union bound): remember that for events $A$ and $B$ and $C$ we have $\text{Pr}(A \cup B \cup C) \leq \text{Pr}(A) + \text{Pr}(B) + \text{Pr}(C)$.
Boole’s inequality also extends to many events, and therefore, we can use it to consider the probability of any our Pauli estimators giving precision larger than $\epsilon$. Notice that this is the antithesis of what we want: we ideally want \textit{all} our estimates to be within precision $\epsilon$. We apply Boole’s inequality below.
The right hand side will be a sum over the individual probabilities, each of which we can upper bound using its \equref{eqn:hoeffding_each_P} Hoeffding inequality. See the first line $\rightarrow$ second line below. 
\begin{eqnarray}
    \text{Pr}\left[ \hspace{1mm}\bigcup_P \hspace{1mm} \big( |\expval{P} - \hat{o}(P)| \geq \epsilon \hspace{1mm} \big) \right]
    &\leq&
    \sum_P
    \text{Pr}\big[\hspace{1mm} |\expval{P} - \hat{o}(P)| \geq \epsilon \hspace{1mm} \big], \\
    &\leq&
    2 \sum_P
    \text{exp}\left( \frac{- \epsilon^2}{2} h(P) \right).
\end{eqnarray}
We have upperbounded the probability that \textit{any} our estimates have precision greater than $\epsilon$. Note we can also write the left hand side in terms of the Pauli $P$ with the most imprecise estimate -- the one with the largest $|\expval{P} - \hat{o}(P)| $, as follows:
\begin{equation}\label{eqn:final_eqn_concentration_lemma}
    \text{Pr}\left[\hspace{1mm}\max_P |\expval{P} - \hat{o}(P)| \geq \epsilon \hspace{1mm} \right] 
    \leq
    2 \sum_P
    \text{exp}\left( \frac{- \epsilon^2}{2} h(P) \right).
\end{equation}
At this point we are done. We define our protocol's confidence as the probability that all our estimates have precision $|\expval{P} - \hat{o}(P)| \leq \epsilon$. Therefore, if we upper bound the right hand side of \ref{eqn:final_eqn_concentration_lemma} with $\delta$, then our confidence must be at least $1-\delta$. 
\end{proof}

We can use this lemma to derive our confidence cost function's associated performance guarantees, given that our cost function achieves a certain value. Recall that we defined the upper bound (right side) of \equref{eqn:hoeffding_each_P}: $\text{\textsf{conf}}_P = 2 \hspace{1mm} \text{exp}\left( \frac{- \epsilon^2}{2} h(P) \right)$. As the reader will see, we chose our cost function to be an upper bound on the confidence across all observables. Due to this setup, whatever final value our cost function takes will inform us about the quality of our estimates. 
\setcounter{theorem}{0}
\begin{theorem} [\textbf{Performance guarantee from \textsf{COST} function}]\label{thm:theorem_performance_guarantee}
Given Pauli strings $\{P\}$ we want to learn, fix some desired precision $\epsilon \in (0,1)$.  With the DSS Algorithm \hyperref[alg:DSS]{1}'s hyperparameter set to this precision $\epsilon$, the DSS Algorithm outputs $N$ measurement circuits $\{U_i^{DSS}\}_{i=1}^N$. The final unweighted ($w_P =1$) cost function $\textnormal{\textsf{COST}}\left(\{U_i^{DSS}\}_{i=1}^N\right)$ achieved with these circuits gives the following guarantee: 
the empirical averages $\hat{o}(P)$ (see \equref{eqn:empirical_average_estimator}) constructed from the associated measurement outcomes all obey
     \begin{equation}
          |\expval{P} - \hat{o}(P)| \leq \epsilon \hspace{5mm} \forall P \in \{P\}
     \end{equation}
     with probability (at least) $1-\textnormal{\textsf{COST}}\left(\{U_i^{DSS}\}_{i=1}^N\right)$.

\end{theorem}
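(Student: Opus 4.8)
The plan is to derive Theorem~\ref{thm:theorem_performance_guarantee} as a direct specialization of Lemma~\ref{lem:lemma_concentration_ineq}: I will show that, once the DSS algorithm has terminated, the final unweighted \textsf{COST} value is \emph{exactly} the quantity that Lemma~\ref{lem:lemma_concentration_ineq} lets us call $\delta$. The greedy optimization itself plays no role here; all that matters is the fixed output circuits $\{U_i^{DSS}\}$ and the value \textsf{COST} attains on them, with the algorithm's hyperparameter equal to the target precision $\epsilon$.

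First I would record the structural fact, already noted in the text, that at the end of Algorithm~1 each ensemble $\mathcal{U}_i$ has collapsed to a delta distribution supported on a single fixed Clifford circuit $U_i^{DSS}$. Because a Clifford rotation either diagonalizes a given Pauli string $P$ or it does not, the Pauli weight of Definition~\ref{def:PauliWeightUm} becomes an indicator $p_i(P)\in\{0,1\}$ of the event ``$U_i^{DSS}$ diagonalizes $P$''. Then, since a product of exponentials is the exponential of the sum and $\sum_i p_i(P)=h(P)$ is the hitting count (Definition~\ref{def:hittingcount}),
\begin{equation}
  \prod_{i=1}^{N}\exp\!\left(-\tfrac{\epsilon^2}{2}\,p_i(P)\right)
  =\exp\!\left(-\tfrac{\epsilon^2}{2}\sum_{i=1}^{N}p_i(P)\right)
  =\exp\!\left(-\tfrac{\epsilon^2}{2}\,h(P)\right),
\end{equation}
so substituting into Definition~\ref{def:DSScost_appendix} with all weights $w_P=1$ gives
\begin{equation}
  \textsf{COST}\!\left(\{U_i^{DSS}\}_{i=1}^N\right)=2\sum_P\exp\!\left(-\tfrac{\epsilon^2}{2}\,h(P)\right).
\end{equation}
Next I would apply Lemma~\ref{lem:lemma_concentration_ineq} with $\delta:=\textsf{COST}(\{U_i^{DSS}\}_{i=1}^N)$; its hypothesis $2\sum_P\exp(-\tfrac{\epsilon^2}{2}h(P))\le\delta$ then holds with equality, so the conclusion --- that $|\expval{P}-\hat{o}(P)|\le\epsilon$ for all $P$ simultaneously with probability at least $1-\delta$ --- is precisely the claimed bound. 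In the degenerate case $h(P)=0$ for some $P$, that term contributes $2$ to \textsf{COST}, so the statement becomes vacuous, consistent with the convention $\hat{o}(P)=0$ there.

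The main thing to be careful about --- and the only place where a genuine argument is needed beyond bookkeeping --- is verifying that the hypotheses of Lemma~\ref{lem:lemma_concentration_ineq}, i.e.\ of Hoeffding's inequality, really are met by the deterministic DSS circuits. I would spell out that for each $P$ the $h(P)$ shots contributing nontrivially to $\hat{o}(P)$ in \equref{eqn:empirical_average_estimator} come from distinct, independently executed measurement rounds, each returning a value $\bra{b_i}U_i^{DSS}PU_i^{DSS\dagger}\ket{b_i}\in\{-1,+1\}$ with mean $\expval{P}$ (the remaining rounds contribute $0$, since $U_i^{DSS}PU_i^{DSS\dagger}$ is then off-diagonal and its diagonal matrix elements vanish), so boundedness, independence, and unbiasedness all hold. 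This is the bounded-depth analogue of the single-qubit-rotation argument of Ref.~\cite{huang2021efficient}; no further obstacle arises once $p_i(P)\in\{0,1\}$ is in hand.
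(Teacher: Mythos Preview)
Your proposal is correct and follows essentially the same route as the paper: both reduce the unweighted \textsf{COST} on the derandomized circuits to $2\sum_P\exp(-\tfrac{\epsilon^2}{2}h(P))$ via $p_i(P)\in\{0,1\}$ and $\sum_i p_i(P)=h(P)$, then invoke Lemma~\ref{lem:lemma_concentration_ineq} with $\delta$ equal to this quantity. Your added remarks on the degenerate case $h(P)=0$ and the explicit verification of Hoeffding's hypotheses are welcome but go slightly beyond what the paper includes (the latter properly belongs to the proof of Lemma~\ref{lem:lemma_concentration_ineq} itself).
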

\begin{proof}
The \textit{unweighted} DSS cost function is as follows,
where $p_i(P)$ is the Pauli weight of $P$ under the final measurement ensemble $\mathcal{U}_i$. 
\begin{eqnarray}
    \text{\textsf{COST}}\left(\{U_i^{DSS}\}_{i=1}^N\right) 
    &=& 2 \sum_P \prod_{i=1}^{N} \exp{\left[ \hspace{1mm} -\frac{\epsilon^2}{2} p_i(P) \hspace{1mm} \right]} \\
    &=& 2 \sum_P \exp{\left[ \hspace{1mm} -\frac{\epsilon^2}{2} \sum_{i=1}^{N} p_i(P) \hspace{1mm} \right]} 
\end{eqnarray} 
At the end of the DSS algorithm, each ensemble $\mathcal{U}_i$ has been derandomized into a final circuit $U_i^{DSS}$, which therefore has all fixed gates. As such, it either does or does not diagonalize (and thus measure) the Pauli string $P$ -- recall $p_i(P) \in \{0,1\}$ is an indicator function since $U_i^{DSS}$ is a Clifford rotation. If $U_i^{DSS}$ diagonalizes $P$, then $p_i(P) = 1$, and if not then $p_i(P) = 0$. 
Therefore, the sum $\sum_i p_i(P)$ simply counts how many times we measure $P$ and, by definition \ref{def:hittingcount}, is equal to $P$'s hitting count $h(P)$. We have
\begin{equation} \label{eqn:final_thm_equation}
    \text{\textsf{COST}}\left(\{U_i^{DSS}\}_{i=1}^N\right) 
    = 2 \sum_P \exp{\left[ \hspace{1mm} -\frac{\epsilon^2}{2} h(P) \hspace{1mm} \right]},
\end{equation}
which is equivalent to the confidence upper bound derived in Lemma \ref{lem:lemma_concentration_ineq} (see \equref{eqn:final_eqn_concentration_lemma}). Therefore, after we run our DSS algorithm and obtain the $N$ measurement circuits $\{U_i^{DSS}\}_{i=1}^N$ that achieve some cost function value $\text{\textsf{COST}}\left(\{U_i^{DSS}\}_{i=1}^N\right)$, this final cost function value upper bounds the probability that any one of our estimates is beyond precision $\epsilon$. The claim follows.
\end{proof}

Therefore, when our DSS algorithm's derandomization minimizes the cost function, we are really increasing the probability with which we achieve the guarantee in the Theorem above. 
This result is crucial: we directly relate minimizing on the cost function landscape to improving the probability that we achieve some precision for our estimates. 
Or equivalently, with some fixed probability, we can achieve an upper bound on precision across \textit{all} Pauli strings we care about.
From this setup and given some desired confidence $\delta$, one could also solve for this precision $\epsilon$ across all estimates. 
However, as it stands the final \equref{eqn:final_thm_equation} is transcendental and thus does not have an analytic solution for arbitrary values of $h(P)$ and hence $\delta$. That being said, in practice one can solve it -- using numerical methods such as Newton's method or a numerical solver to approximate $\epsilon$.

Finally, in some cases one might want to learn a complex observable by decomposing it into Pauli strings, estimating the Pauli expectation values, and then recombining the results. In fact, in one of our applications, we do exactly this. We estimate the energy of various quantum chemistry Hamiltonians. If we have a Hamiltonian $H = \sum_P c_P P$, then the corollary below immediately follows via the triangle inequality to give a guarantee on the accuracy of our energy estimate. Notice that here we utilize an unweighted \textsf{COST} function.

\begin{corollary} [\textbf{Estimating complex observables}]
Given some observable $H$, we can decompose it into Pauli strings, $H = \sum_P c_P P$, and therefore, we want to learn Pauli strings $\{P \hspace{1mm} |\hspace{1mm} c_P \neq 0\}$.  Fix some desired precision $\epsilon \in (0,1)$.  With the DSS Algorithm \hyperref[alg:DSS]{1}'s hyperparameter set to this precision $\epsilon$, the DSS Algorithm outputs $N$ derandomized measurement ensembles $\{\mathcal{U}_i\}_{i=1}^N$. The final unweighted ($w_P =1$) cost function $\textnormal{\textsf{COST}}\left(\{\mathcal{U}_i\}_{i=1}^N\right)$ achieved with these ensembles gives the following guarantee: 
if we estimate $\hat{o}(H) = \sum_P c_P \hat{o}(P)$, where
$\hat{o}(P)$ is constructed via \equref{eqn:empirical_average_estimator}, then
    \begin{equation}
        | \expval{H} - \hat{o}(H) | 
        \leq  \epsilon \sum_P  | c_P| 
    \end{equation}
    with probability (at least) $1-\textnormal{\textsf{COST}}\left(\{\mathcal{U}_i\}_{i=1}^N\right)$.
\end{corollary}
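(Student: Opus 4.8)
The plan is to reduce this to Theorem~\ref{thm:theorem_performance_guarantee} via linearity of the estimator and a single triangle inequality, incurring no additional failure probability. First I would apply Theorem~\ref{thm:theorem_performance_guarantee} with the target set taken to be the support $\{P \mid c_P \neq 0\}$ and hyperparameter $\epsilon$ (recall that after derandomization the ensembles $\{\mathcal{U}_i\}_{i=1}^N$ are delta distributions, so this is the same statement as for the fixed circuits $\{U_i^{DSS}\}$). This yields that, with probability at least $1 - \textsf{COST}(\{\mathcal{U}_i\}_{i=1}^N)$, the joint event
\begin{equation}
  \mathcal{G} = \bigl\{\, |\expval{P} - \hat{o}(P)| \leq \epsilon \ \text{ for every } P \text{ with } c_P \neq 0 \,\bigr\}
\end{equation}
occurs. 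The important structural point is that Theorem~\ref{thm:theorem_performance_guarantee} already provides the \emph{simultaneous} bound over all target Paulis — its proof has folded the union bound into the cost function — so everything downstream can be treated deterministically.

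Next, conditioning on $\mathcal{G}$, I would expand the recombined estimator error using $\hat{o}(H) = \sum_P c_P \hat{o}(P)$ and $\expval{H} = \sum_P c_P \expval{P}$ and apply the triangle inequality:
\begin{equation}
  \bigl| \expval{H} - \hat{o}(H) \bigr| = \Bigl| \sum_P c_P \bigl( \expval{P} - \hat{o}(P) \bigr) \Bigr| \leq \sum_P |c_P|\, \bigl| \expval{P} - \hat{o}(P) \bigr| \leq \epsilon \sum_P |c_P|,
\end{equation}
where the last step uses that every summand is bounded by $\epsilon$ on $\mathcal{G}$. Since this chain of inequalities holds on all of $\mathcal{G}$, the bound $|\expval{H} - \hat{o}(H)| \leq \epsilon \sum_P |c_P|$ holds with probability at least $\Pr[\mathcal{G}] \geq 1 - \textsf{COST}(\{\mathcal{U}_i\}_{i=1}^N)$, which is the claim.

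I do not expect a genuine obstacle — the content is entirely bookkeeping — but the one thing to be careful about is not paying for the union bound twice: one should quote the already-simultaneous guarantee of Theorem~\ref{thm:theorem_performance_guarantee} rather than re-deriving per-Pauli tail bounds and re-unioning them, which would at best reproduce the same $\textsf{COST}$ and at worst lose constants. It is also worth a one-line remark that the edge case $h(P) = 0$ for some $P$ (where $\hat{o}(P)$ is set to $0$ by convention) is automatically harmless: such a $P$ contributes a factor $e^{0}=1$ to its term in $\textsf{COST}$, so $\textsf{COST} \geq 2$ there and the guarantee $1 - \textsf{COST} < 0$ is vacuous, exactly as it should be when no information about $\expval{P}$ was collected.
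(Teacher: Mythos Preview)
Your proposal is correct and matches the paper's own argument: the paper states that the corollary ``immediately follows via the triangle inequality'' from Theorem~\ref{thm:theorem_performance_guarantee}, which is exactly what you do. Your additional remarks about not re-paying the union bound and the $h(P)=0$ edge case are sound observations that the paper does not spell out.
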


\vspace{5mm}
\noindent
\begin{center}
    \textit{Bounding the precision of each Pauli string under the DSS protocol}\label{app:bounding_precision_DSS}
\end{center}

We can extend and tighten this result by considering each Pauli observable individually. The only information we will need from the DSS protocol is the hitting count $h(P)$ for each of our $P$ of interest.  This information must be kept track of in our DSS algorithm implementation anyway and can be returned with the measurement circuits (or solved for again post derandomization).
For a single Pauli string $P$ expectation value estimate, $\textnormal{\textsf{conf}}_P\left(\{U^{DSS}_i\}_{i=1}^N\right)$ upper bounds the probability that our protocol \textit{does not} estimate $\expval{P}$ within precision $\epsilon$. 
The probability, therefore, that the protocol learns $\expval{P}$ to within precision $\epsilon$ is $1-\textnormal{\textsf{conf}}_P$.
This quantity $\textnormal{\textsf{conf}}_P$ is defined by the application of Hoeffding's inequality to each individual $P$ in \equref{eqn:hoeffding_each_P} above. Formally, we state this relationship below. 
\begin{corollary}[\textbf{Confidence of learning Pauli $P$}]\label{cor:confidence_of_P}
The confidence with which we learn the Pauli string $P$ with precision $\epsilon$, using our learning protocol of $N$ final measurements $\{U^{DSS}_i\}_{i=1}^N$, is as follows:
\begin{eqnarray}
\textnormal{Pr}\big[\hspace{1mm} |\expval{P} - \hat{o}(P)| \geq \epsilon \hspace{1mm} \big] 
&\leq& 2 \exp{\left[ \hspace{1mm} -\frac{\epsilon^2}{2} h(P) \hspace{1mm} \right]} \\
&=& \textnormal{\textsf{conf}}_P\left(\{U^{DSS}_i\}_{i=1}^N\right) ,
\label{eqn:conf_corrollary_def}
\end{eqnarray}
where $h(P)$ is the hitting count defined by the measurements $\{U^{DSS}_i\}_{i=1}^N$.
\end{corollary}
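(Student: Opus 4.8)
The plan is to recognize this corollary as a repackaging of the single-Pauli concentration bound \equref{eqn:hoeffding_each_P}, which already appeared inside the proof of Lemma~\ref{lem:lemma_concentration_ineq}, together with the definition of $\textnormal{\textsf{conf}}_P$. So essentially no new content is required beyond isolating the ``per-$P$'' step of that lemma and \emph{not} applying the union bound. Concretely, I would fix a single target Pauli $P$ and let $S_P \subseteq \{1,\dots,N\}$ be the set of indices $i$ for which the derandomized circuit $U_i^{DSS}$ diagonalizes $P$. Since after derandomization $p_i(P)\in\{0,1\}$ is exactly the indicator of this event, Definition~\ref{def:hittingcount} gives $|S_P| = \sum_i p_i(P) = h(P)$, and $\hat{o}(P)$ from \equref{eqn:empirical_average_estimator} is the average over $i\in S_P$ of the outcome variables $X_i := \bra{b_i} U_i^{DSS} P U_i^{DSS\dagger}\ket{b_i}$.

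Next I would check the two hypotheses needed for Hoeffding's inequality. First, boundedness: for $i\in S_P$ the operator $U_i^{DSS} P U_i^{DSS\dagger}$ is a Clifford conjugate of a Pauli that is diagonal by assumption, hence diagonal with $\pm1$ entries, so $X_i\in\{-1,+1\}$. Second, unbiasedness: writing $U_i^{DSS} P U_i^{DSS\dagger}=\sum_b \lambda_b \ketbra{b}{b}$ with $\lambda_b\in\{-1,1\}$, the Born rule gives $\Expect[X_i]=\sum_b \lambda_b \bra{b} U_i^{DSS}\rho U_i^{DSS\dagger}\ket{b}=\Tr[U_i^{DSS}PU_i^{DSS\dagger}\,U_i^{DSS}\rho U_i^{DSS\dagger}]=\expval{P}$, and distinct shots use independent measurement randomness. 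Hoeffding's inequality for an average of $m=h(P)$ independent $[-1,1]$-valued variables with common mean then yields $\Pr[\,|\expval{P}-\hat{o}(P)|\ge\epsilon\,]\le 2\exp(-\tfrac{\epsilon^2}{2}h(P))$, which is the middle expression; the final equality is immediate from the definition $\textnormal{\textsf{conf}}_P(\{U_i^{DSS}\})=2\exp[-\tfrac{\epsilon^2}{2}h(P)]$ introduced just above Definition~\ref{def:PauliWeightUm}.

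I would also dispatch the degenerate case $h(P)=0$: then the right-hand side equals $2$, so the bound holds vacuously, consistent with the convention $\hat{o}(P)=0$ from Appendix~\ref{app:estimating_exp_vals_with_DSS}. The main obstacle here is not the concentration step, which is textbook once the setup is in place, but the modeling bookkeeping: one must be careful to argue that after derandomization each relevant shot really is an i.i.d.\ $\pm1$ variable with mean exactly $\expval{P}$ --- i.e.\ that a Clifford rotation followed by a computational-basis measurement is an unbiased one-shot estimator of $\expval{P}$ precisely when $U_i^{DSS}$ diagonalizes $P$ --- and that $S_P$ (equivalently $h(P)$) is read off from quantities already computed inside Algorithm~\hyperref[alg:DSS]{1} rather than requiring new data. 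Once this is pinned down, the corollary follows in one line.
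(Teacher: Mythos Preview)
Your proposal is correct and matches the paper's own proof, which is literally a one-sentence appeal to Lemma~\ref{lem:lemma_concentration_ineq} (specifically the per-$P$ Hoeffding step \equref{eqn:hoeffding_each_P} before the union bound) together with the definition of $\textnormal{\textsf{conf}}_P$ and the substitution $h(P)=\sum_i p_i(P)$. You have simply unpacked that sentence in more detail---verifying the boundedness, unbiasedness, and independence hypotheses for Hoeffding and handling the $h(P)=0$ edge case---none of which the paper spells out but all of which are consistent with its intended argument.
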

\begin{proof}
    This immediately follows from Lemma \ref{lem:lemma_concentration_ineq} and is equivalent to the definition of confidence we presented earlier in section \ref{app:DSSalgorithm}, where we substituted $h(P) = \sum_i p_i(P)$.
\end{proof}
We can use this to individually bound the precision of each observable. Consider a given Pauli string $P$ with hitting count $h(P)$, defined by measurements $\{U^{DSS}_i\}_i$, and set $\delta \in (0,1)$ to be
\begin{eqnarray}
    1- \textnormal{\textsf{conf}}_P\left(\{U^{DSS}_i\}_{i=1}^N\right) 
    &\geq& 1 - 
    2 \exp{\left[ \hspace{1mm} -\frac{\epsilon^2}{2} h(P) \hspace{1mm} \right]}\\
    &\geq&
    1- \delta.
    \label{eqn:individ_P_1minusdelta}
\end{eqnarray}
This comes from \equref{eqn:conf_corrollary_def}. With probability at least $1-\delta$ we can estimate $\expval{P}$ with our empirical average $\hat{o}(P)$ with precision 
    \begin{eqnarray}
        |\expval{P} - \hat{o}(P)| 
        &\leq& \sqrt{\frac{2 \log(2/\delta)}{h(P)}}.
    \end{eqnarray}
We obtain this bound on precision by solving for $\epsilon$ in  \ref{eqn:individ_P_1minusdelta}. 
As a result, the right side of this final expression is an upper bound on the precision $\epsilon$.  We can see how an upper bound on confidence can translate to a bound on precision.  
Finally, 
while we discussed above how to use the final cost function output to upper bound \textit{all} our Pauli string estimation errors, this bound across all Pauli strings is not very tight. Here, since we can consider each Pauli string one-at-a-time, we expect this bound on precision to be tighter than what follows from Theorem \ref{thm:theorem_performance_guarantee}. 
The estimation error bound will not be brought down by other observables with low hitting counts.

\vspace{5mm}
\noindent
\begin{center}
    \textit{The }\textsf{COST}\textit{ function's relationship with variance} \label{app:COSTrelationship_with_VARIANCE}
\end{center}

We will next 
deepen the reader's intuition for our DSS scheme by explicitly deriving our cost function’s relationship with the variance.
A ubiquitous metric for learning protocols, the variance is present - albeit slightly hidden - in our cost function in the form of the hitting count. 
The hitting count entered our Lemma \ref{lem:lemma_concentration_ineq} derivation due to our use of Hoeffding's inequality. 
Hoeffding's inequality can be derived using Hoeffding’s lemma and Chernoff bounds.  While one can go down this path (i.e. trace back to Markov's inequality) to uncover the dependence on variance, we will present a slightly different treatment.

Using the language of classical shadows, we can show how the variance is related to hitting count.
In order to use the language of classical shadows, we will use our $N$ output measurements from the DSS algorithm to define a new ensemble. 
In classical shadows there exists some ensemble $\mathcal{U}$ defining a set of unitaries $\{U\}$  and some associated probability density function $\mu(U)$. Here we will define our new ensemble, which we will call $\mathcal{U}_\text{DSS}$, to be over all measurement circuits $\{U^{DSS}_i\}_{i=1}^N$, and notice that we may measure with the same measurement circuit multiple times (for example, it's possible that $U_2 = U_{17} = U_{35}...$). 
Therefore, the associated probability density function (PDF) takes the form
    \begin{equation}\label{eqn:pdf_DSS}
       \mu(U_i^{DSS}) =  \frac{1}{N}\sum_j \mathds{1}_{U^{DSS}_i = U^{DSS}_j} \hspace{5mm} \textnormal{ where  }  \mathds{1}_{U^{DSS}_i = U^{DSS}_j} = 
           \begin{cases} 
            1 & U^{DSS}_i = U^{DSS}_j \\
             0 & U^{DSS}_i \neq U^{DSS}_j 
             \end{cases}
    \end{equation}
This PDF represents how often, across our $N$ measurements, we measure with the unitary $U_i^{DSS}$. 

We can imagine sampling from this ensemble in a classical shadows protocol -- doing this allows us to easily define our protocol's variance using the machinery of classical shadows. 
A central metric of any learning protocol, the variance of an estimated observable represents how efficiently we learn it. 
A smaller variance indicates that, across the shots taken in a protocol, the observable is often learned, and therefore our estimate is precise.
In classical shadows one often considers a state-agnostic upper bound on variance -- one often maximizes the variance over states in order to obtain a state-independent quantity.
In classical shadows one also drops the second term in the variance (recall: $\text{Var}(\expval{O}) = \expval{O^2} - \expval{O}^2$), and this is called the shadow norm ~\cite{huang2020predicting}:
    \begin{equation}
        \max_\rho \text{Var}_\rho(P) \leq \|P\|^2_\text{shadow}. 
    \end{equation}
For the reader new to the shadow norm, it can be a nice exercise to look up the full definition in Ref~\cite{huang2020predicting} and show that this quantity obeys the properties of a norm. In the proposition below, we will use this relation to upper bound the maximum variance over states with a function dependent on the hitting count. This will allow us to gain intuition on how minimizing our DSS \textsf{COST} function reduces the variances of our desired Pauli strings.

\begin{proposition} \label{prop:variance_and_hittingcount}
Consider the ensemble $\mathcal{U}_\text{DSS}$ defined on the set of $N$ measurement circuits returned from the DSS algorithm (Algorithm \hyperref[alg:DSS]{1}) and associated probability density function $\mu$ defined in equation ~\ref{eqn:pdf_DSS}. Then the max variance over states can be upper bounded by
    \begin{equation}
        \max_\rho \textnormal{Var}_\rho(\expval{P})  
        \leq 
        \frac{N}{h(P)},
    \end{equation}
where $h(P)$ is the hitting count (Definition~\ref{def:hittingcount}) of how many times the DSS protocol measures $P$.
\end{proposition}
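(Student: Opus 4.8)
The plan is to recast the (deterministic) DSS protocol as a single-shot classical shadows protocol with respect to the ensemble $\mathcal{U}_\text{DSS}$ and PDF $\mu$ of \equref{eqn:pdf_DSS}, and then bound the state-maximized single-shot variance by the shadow norm $\|P\|^2_\text{shadow}$. The key simplification is that every $U_i^{DSS}$ is Clifford, so the measurement channel acts diagonally on Pauli strings and the shadow norm of $P$ collapses to $1/p(P)$ with $p(P)=\Pr_{U\sim\mathcal{U}_\text{DSS}}[U\text{ diagonalizes }P]$, which by $p_i(P)\in\{0,1\}$ and Definition~\ref{def:hittingcount} equals $h(P)/N$.

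First I would write the measurement channel $\mathcal{M}(\rho)=\Expect_{U\sim\mathcal{U}_\text{DSS}}\sum_b \bra{b}U\rho U^\dagger\ket{b}\,U^\dagger\ketbra{b}{b}U$ and evaluate it on a Pauli $P$. A single Clifford $U$ either fully diagonalizes $P$ -- in which case $UPU^\dagger=\sum_b\sigma_b\ketbra{b}{b}$ with $\sigma_b=\pm1$, and the corresponding term back-transforms to exactly $P$ -- or it does not, in which case $UPU^\dagger$ is a nonidentity Pauli with vanishing diagonal and contributes zero. Averaging over $\mu$ yields $\mathcal{M}(P)=p(P)\,P$ with $p(P)=\frac1N\sum_i p_i(P)=h(P)/N$. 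Whenever $h(P)>0$ this inverts $\mathcal{M}$ on $\mathrm{span}\{\Id,P\}$, so $\mathcal{M}^{-1}(P)=P/p(P)$ and the single-shot estimator is $\hat{o}=\Tr\!\big(\mathcal{M}^{-1}(P)\,U^\dagger\ketbra{b}{b}U\big)=p(P)^{-1}\bra{b}UPU^\dagger\ket{b}$.

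Next I would compute the second moment $\Expect[\hat{o}^2]=p(P)^{-2}\,\Expect_{U\sim\mathcal{U}_\text{DSS}}\sum_b \bra{b}U\rho U^\dagger\ket{b}\,\bra{b}UPU^\dagger\ket{b}^2$. Since $\bra{b}UPU^\dagger\ket{b}^2$ equals $1$ for every $b$ when $U$ diagonalizes $P$ and $0$ otherwise, it is exactly the indicator $\mathds{1}_{U\text{ diag }P}$; the $b$-sum then gives $\Tr(\rho)=1$ on precisely those $U$, so $\Expect[\hat{o}^2]=p(P)^{-2}\,\Expect_U[\mathds{1}_{U\text{ diag }P}]=1/p(P)$, independent of $\rho$. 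Hence $\max_\rho\Var_\rho(\expval{P})=\max_\rho\big(\Expect[\hat{o}^2]-\expval{P}^2\big)\le 1/p(P)=N/h(P)$ -- i.e.\ the bound $\max_\rho\Var_\rho(\expval{P})\le\|P\|^2_\text{shadow}$ specialized to a Clifford ensemble. (When $h(P)=0$ the statement is vacuous, consistent with the convention $\hat{o}(P)=0$.)

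The only real subtlety is invertibility of $\mathcal{M}$: if some Pauli is never diagonalized by any $U_i^{DSS}$ then $\mathcal{M}$ is singular, but one only needs $\mathcal{M}^{-1}$ on $\mathrm{span}\{\Id,P\}$, which holds exactly when $h(P)>0$. A second point worth stating carefully is that the $\pm1$-valued diagonal of $UPU^\dagger$ for a diagonalizing Clifford is what makes $\bra{b}UPU^\dagger\ket{b}^2$ an exact indicator and forces all $\rho$-dependence to drop out of the second moment; this is special to Pauli targets with Clifford measurements and is what makes the state-agnostic bound tight in the form $N/h(P)$.
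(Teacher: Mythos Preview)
Your proposal is correct and follows essentially the same approach as the paper: both recast the DSS protocol as a classical shadows scheme over $\mathcal{U}_\text{DSS}$, identify that the measurement channel acts diagonally on Paulis with eigenvalue $\lambda_P=p(P)=h(P)/N$, and bound the state-maximized variance by the shadow norm $\|P\|_\text{shadow}^2=1/\lambda_P$.

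The only noteworthy difference is one of presentation rather than substance. The paper cites external results (Lemma~1 of Ref.~\cite{bertoni2022shallow} for the Pauli eigenoperator structure, and Refs.~\cite{bertoni2022shallow,ippoliti2023operator} for $\|P\|_\text{shadow}^2=1/\lambda_P$), whereas you derive both facts directly: you compute $\mathcal{M}(P)=p(P)P$ by casework on whether the Clifford diagonalizes $P$, and you compute the second moment $\Expect[\hat{o}^2]=1/p(P)$ explicitly, noting that $\bra{b}UPU^\dagger\ket{b}^2$ is an exact indicator so the $\rho$-dependence drops out. Your route is slightly more self-contained and makes transparent why the bound is state-independent for Pauli targets under Clifford ensembles; the paper's route is terser and leans on the existing shallow-shadows literature. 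Your handling of the invertibility subtlety (restricting $\mathcal{M}^{-1}$ to $\mathrm{span}\{\Id,P\}$ when $h(P)>0$) matches the paper's parenthetical remark about the Moore--Penrose pseudoinverse.
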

\begin{proof}
For the measurement primitive \cite{huang2020predicting} $\mathcal{U}_\text{DSS}$, we can define a classical shadows protocol that randomly samples unitaries from $\mathcal{U}_\text{DSS}$, which is by definition a subset of the Clifford group. 
As such, by Lemma 1 of Ref.~\cite{bertoni2022shallow}, the eigenoperators of the corresponding classical shadows measurement channel, $\mathcal{M}_\text{DSS}$, are Pauli strings. 
Therefore, we can express the action of the measurement channel as $\mathcal{M}_\text{DSS}(P) = \lambda_P P$, where $\lambda_P$ denotes the eigenvalue corresponding to the Pauli string eigenoperator $P$. As was shown in \cite{bertoni2022shallow,van2022hardware}, the eigenvalue of a classical shadows measurement channel takes the form 
\begin{eqnarray}
    \lambda_P 
        &=& \frac{1}{2^n} \Expect_{U\sim\mathcal{U}_\text{DSS}} \sum_{b}  \bra{b} U P U^\dagger \ket{b}^2 \\
        &=& \frac{1}{2^n} \sum_i  \frac{1}{N} \sum_{b}  \bra{b} U^{DSS}_i P U_i^{DSS\dagger} \ket{b}^2  \label{eqn:eval_lambdaeqn_DSS}
\end{eqnarray}
Notice that the Pauli string $P$ here is not normalized with respect to the Hilbert Schmidt inner product, and therefore, we compensate for this with the $1/2^n$ prefactor out front.
One can also easily derive this relation using the equation $\tr(\mathcal{M}_\text{DSS}(P) P) = \lambda_P \tr(P^2)$ and substituting in the definition of the measurement channel from Ref~\cite{huang2020predicting}.
We move from the first to the second line using the definition of  $\mathcal{U}_\text{DSS}$.

It turns out that this classical shadows measurement channel eigenvalue can also be related to the shadow norm and, therefore, the maximum variance over states. 
Take the shadow norm \cite{huang2020predicting} of the Pauli string $P$ and substitute in our eigenoperator equation $\mathcal{M}_\text{DSS}^{-1}(P) = \frac{1}{\lambda_P} $;  assume $\mathcal{M}_\text{DSS}^{-1}$ is the (Moore-Penrose) pseudoinverse if $\mathcal{M}_\text{DSS}$ has a nontrivial nullspace. 
Following the calculation through (see Refs~\cite{bertoni2022shallow,ippoliti2023operator}), one can show that $\|P\|^2_\textnormal{shadow} = 1/\lambda(P)$. In other words, our maximum variance over states can be upper bounded by the inverse of the classical shadows measurement channel eigenvalue:
    \begin{equation}
        \max_\rho \text{Var}_\rho(P)  
                \leq \|P\|^2_\textnormal{shadow} 
                = \frac{1}{\lambda_P}
    \end{equation}

At this point, it suffices to derive the Pauli string eigenvalue $\lambda_P$'s 
relationship to the hitting count $h(P)$ of that same Pauli string.
Let us return to the formal definition (Definition \ref{def:PauliWeightUm}) of the Pauli weight for measurement ensemble $\mathcal{U}_i$: $p_i(P) = \frac{1}{2^n} \Expect_{U\sim \mathcal{U}_i} \sum_{b}  \bra{b} U P U^\dagger \ket{b}^2$. 
However, since the DSS algorithm has terminated and returned a set of derandomized measurement circuits $\{U_i^{DSS}\}_i$, we find
    \begin{equation}\label{eq:prob_m}
            p_i(P) = 
            \frac{1}{2^n} \sum_{b}  \bra{b} U^{DSS}_i P U_i^{DSS\dagger} \ket{b}^2,
    \end{equation}
where $p_i(P) \in \{0,1\}$ depending on whether $U^{DSS}_i$ diagonalizes $P$. 
We can now substitute this into the eigenvalue \equref{eqn:eval_lambdaeqn_DSS}, and we find
        \begin{equation}
            \lambda_P = \frac{1}{N} \sum_i p_i(P) = \frac{1}{N} h(P).
        \end{equation}
The claim follows. 
\end{proof}

This proposition upper bounds the maximum variance over states with a function of the hitting count, allowing us to gain intuition on how minimizing our DSS \textsf{COST} function affects the variances of our desired Pauli strings $P$. 
In particular, consider the DSS cost function (below). Minimizing this cost function corresponds to choosing circuits that have large hitting counts $h(P)$ across our Paulis of interest. 
    \begin{equation}
        \text{\textsf{COST}}\left(\{\mathcal{U}_i\}_{i=1}^N\right) 
    = 2 \sum_P \exp{\left[ \hspace{1mm} -\frac{\epsilon^2}{2} h(P) \hspace{1mm} \right]}
    \end{equation}
In terms of the variance, the maximum variance over states, for Pauli string $P$, has an upper bound $\sim 1/h(P)$. Therefore, finding measurement circuits that yield a large $h(P)$ on our desired Pauli strings corresponds to small variances for the same $P$. 
This checks with our intuition: intuitively, we want to minimize the variance with which we learn $\expval{P}$ for the Pauli strings $P$ we care about.

We have learned that, in the DSS algorithm, when we minimize our cost function (\equref{eqn:DSScost_appendix}), we find circuits that reduce our upper bound on variance.
As a result, one could ask why we use the confidence as our cost function rather than some collective, convex combination of the upper bounds on the maximum variance over states.
And indeed, for an individual Pauli $P$, choosing either of these two quantities for the cost function is actually equivalent: the quantities' landscapes retain the same locations for maxima and minima.
However,  this relationship is only guaranteed when considering a single Pauli string. 
When learning many Pauli strings, the convex combination of confidences (our \textsf{COST}) is nice because it directly connects to a performance guarantee across \textit{all} observables we want to estimate. 
If we were to use a convex combination of variances as our cost function, this would not improve a guarantee across all Paulis we want to learn. For example, this convex combination of variances cost function would go down if we slightly reduced all the variances, and it would also go down is we substantially reduced one Pauli's variance but kept the other variances the same. 
In other words, the convex combination of variances does not exhibit a preference between reducing all variances versus just bringing down one. 
In our setup for learning many Pauli strings, we want to bring down all variances simultaneously, and therefore, we opt for the confidence.

\vspace{5mm}
\noindent
\begin{center}
    \textit{DSS performance at least as good as equivalent-depth shallow shadows} \label{app:asgoodas_shallow_shadows}
\end{center}
We will conclude this appendix by showing how to make a minor modification to DSS such that it performs at least as well as shallow shadows.
While we numerically observe in our applications that DSS outperforms shallow shadows, we also present a formal proposition (below).
Before we proceed notice that our depth definition is different from that of the shallow shadows literature \cite{hu2023classical,ippoliti2023operator,bertoni2022shallow}. 
Our ``depth'' $d = d_\text{DSS}$ is the number of two-qubit gate layers. See Figure \ref{fig:Figure1}. 
For example, $d_\text{DSS}=1$ represents one layer of two-qubit gates. In the shallow shadows literature, depth $d_\text{shallow}=1$ is \textit{two} layers of two-qubit gates, where the second layer is always offset by 1 qubit from the first layer. Our layers are also always offset by 1 -- again see Figure 1; we just count each layer independently.  We will redefine shallow shadows here in order to be able to make one-to-one comparisons between DSS and shallow shadow performance.

\begin{definition}
 [\textbf{Shallow Shadows}]\label{def:shallow_shadows}
    Given depth $d$ and system size $n$, $N$-shot shallow shadows utilizes an ensemble ansatz  $\mathcal{U}^\text{(shallow)}$ containing $d$ layers of two-qubit random Clifford gates, interleaved with $d+1$ layers of single-qubit random Clifford gates. 
    This ensemble ansatz is the same as the DSS Initial Ansatz of Definition~\ref{def:initial_ansatz} (same depth $d$, system size $n$). 
    The shallow shadows measurement protocol randomly samples $N$ circuits from this ensemble -- implementing them before measuring in the $Z$ basis. 
\end{definition}
Equivalently, we can also say that shallow shadows samples one circuit $U \sim \mathcal{U}_i$ per measurement ensemble in $\{\mathcal{U}_i\}_{i=1}^N$ where, for all $i$, $\mathcal{U}_i = \mathcal{U}^\textnormal{(shallow)}$. Finally notice that while we include single-qubit random Clifford gates, these can equivalently be absorbed by the two-qubit random Clifford rotations, and thus this definition is only different from the original works \cite{ippoliti2023operator, bertoni2022shallow, hu2023classical} in how we define depth.

We now proceed to show that our (slightly-modified) DSS will do at least as well as the equivalent-depth version of shallow shadows.
The slight modification we make, which does not affect the protocol's classical or quantum complexity, is to the gate options during derandomization.
In our main-text and Appendix \ref{app:DSSalgorithm} definitions of DSS, we allow our one- and two-qubit gates to derandomize to options 1-6 and 1-3 in Table \ref{tab:GateValues}, respectively. For example, a two-qubit gates can become (1) \textsf{Identity}, (2) \textsf{CNOT}, and (3) \textsf{SWAP}. 
However, we will now expand the set of options: we will allow our one- and two-qubit gates to derandomize to options 0-6 and 0-3 in Table \ref{tab:GateValues}. In other words, each gate has the option to remain randomly sampled from the local Clifford group (option 0 in Table \ref{tab:GateValues}). 
As such, the derandomization procedure will not always end with a delta distribution for all $N$ measurements -- each measurement circuit could be sampled from a nontrivial ensemble. 
A priori it may seem that this modification makes evaluating the \textsf{COST} function more intensive. But again, as we will see in Appendix \ref{app:TensorNetwork_for_cost_function}, this crucially does not change the asymptotics of the DSS algorithm's classical complexity.

\begin{proposition}\label{prop:better_than_shallow_shadows}
    Given the Pauli strings $\{P\}$ we want to learn, a budget of $N$ measurements of at most depth $d$, and desired precision $\epsilon$.
    The DSS Algorithm, with the modification that each gate in Algorithm \hyperref[alg:DSS]{1} has the option (option 0 in Table \ref{tab:GateValues}) to remain randomly sampled from the local Clifford group, outputs measurement ensembles $\{\mathcal{U}_i\}_{i=1}^N$. The value of the associated \textsf{COST} function will be at most the \textsf{COST} of shallow shadows, as defined in Definition~\ref{def:shallow_shadows}:
    \begin{equation}
        \textnormal{\textsf{COST}}
        \big(\{\mathcal{U}^\textnormal{(DSS)}_i\}_{i=1}^N\big) 
        \leq 
        \textnormal{\textsf{COST}}
        \big(\{U^\textnormal{(shallow)}\}_{i=1}^N\big)
    \end{equation}
\end{proposition}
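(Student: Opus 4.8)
The plan is to reduce the statement to two observations: (i) the \emph{initial} configuration of the modified DSS algorithm is term-by-term the shallow shadows ensemble, so its \textsf{COST} value equals that of shallow shadows; and (ii) the modified greedy update is monotone non-increasing in \textsf{COST}. Chaining (ii) from the starting point identified in (i) then gives the inequality.

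First I would establish (i). By Definition~\ref{def:initial_ansatz}, each measurement ensemble is initialized with $\boldsymbol{t}^{(i)}=\vec{0}$ and $\boldsymbol{s}^{(i)}=\vec{0}$, i.e.\ every one- and two-qubit gate is sampled uniformly from the local Clifford group over the $d$-layer brickwork circuit. This is, gate for gate, the ansatz of Definition~\ref{def:shallow_shadows} at the same depth $d$ and system size $n$, so the Pauli weights $p_i(P)$ of the initial configuration coincide with those of the shallow shadows ensemble for every $i$ and every $P$. Substituting these into Definition~\ref{def:DSScost_appendix} with the same hyperparameter $\epsilon$ and the same weights $w_P$ yields
\begin{equation}
    \textsf{COST}\big(\{\mathcal{U}_i(\vec{0},\vec{0})\}_{i=1}^{N}\big) = \textsf{COST}\big(\{U^{\textnormal{(shallow)}}\}_{i=1}^{N}\big).
\end{equation}

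Next I would establish (ii). In Algorithm~1 each entry of each $\boldsymbol{t}^{(i)}$ and $\boldsymbol{s}^{(i)}$ is visited exactly once, and until it is visited it retains its initial value $0$. With the modification, the candidate set tested when updating the current gate always contains option $0$ (two-qubit options $0$--$3$, single-qubit options $0$--$6$), and option $0$ reproduces the gate's present value; hence the $\operatorname{argmin}$ selected is at least as good as leaving the gate untouched, so each inner-loop iteration can only decrease or leave unchanged \textsf{COST} evaluated on the running configuration. Composing these inequalities over the finite sequence of all gate updates gives $\textsf{COST}(\{\mathcal{U}^{\textnormal{(DSS)}}_i\}_{i=1}^{N}) \le \textsf{COST}(\{\mathcal{U}_i(\vec{0},\vec{0})\}_{i=1}^{N})$, and combining with the previous display proves the proposition.

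The step I expect to need the most care is the monotonicity bookkeeping: one must verify that no gate is revisited after being frozen, and that the configuration passed to \textsf{COST} at step $t+1$ differs from the one at step $t$ only in the single entry being updated, so that option $0$ genuinely reproduces the step-$t$ configuration. It is also worth remarking why the modification is essential: in the unmodified algorithm option $0$ is unavailable, the first update of each gate is forced onto a nonzero Clifford, the previous configuration is not among the tested candidates, and this monotonicity argument breaks down — which is precisely the loophole the relaxed gate set closes while (as noted in Appendix~\ref{app:TensorNetwork_for_cost_function}) leaving the asymptotic classical and quantum complexity unchanged.
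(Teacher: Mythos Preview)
Your proposal is correct and follows essentially the same argument as the paper: the initial ensembles coincide with shallow shadows by construction, and with option~$0$ always available the greedy updates are monotone non-increasing in \textsf{COST}, so the final value is at most the initial one. The paper compresses this into two sentences, while you spell out the bookkeeping and the role of the modification more carefully, but the logic is identical.
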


\begin{proof}
Since the DSS algorithm 
employs a greedy strategy, it will only replace a gate in one of its ensembles when that choice of gate reduces the \textsf{COST} function value. 
Therefore at the end of Algorithm \hyperref[alg:DSS]{1}, the \textsf{COST} function will be at most the same as it was at the start of the algorithm, which by Definition \ref{def:shallow_shadows} is equivalent to \textsf{COST}($\{\mathcal{U}^\textnormal{(shallow)}\}_{i=1}^N$).
\end{proof}

This guarantee has implications on the performance of the DSS algorithm. Following Proposition~\ref{prop:better_than_shallow_shadows}, notice that the DSS protocol will also give a better lower bound on probability 
    \begin{equation}
        1 - \textnormal{\textsf{COST}}
        \big(\{\mathcal{U}^\textnormal{(DSS)}_i\}_{i=1}^N\big) 
        \geq 
        1 - \textnormal{\textsf{COST}}
        \big(\{\mathcal{U}^\textnormal{(shallow)}\}_{i=1}^N\big)
    \end{equation}
that all observables are learned to at least precision $\epsilon$.  This follows immediately from the proposition and the \textsf{COST} functions relationship bound on confidence (see Theorem~\ref{thm:theorem_performance_guarantee}). By theorem \ref{thm:theorem_performance_guarantee}, we can learn all Pauli strings $P$ to at least precision $\epsilon$ with probability at least $1 - \textsf{COST}\big(\{\mathcal{U}^\textnormal{(DSS)}_i\}_{i=1}^N\big)$.

Since our numerics consistently show that DSS outperforms shallow shadows (e.g. Figures \ref{fig:Figure2} and \ref{fig:Figure3}),
one could ask: was it actually necessary to modify the DSS algorithm to show that it would outperform equivalent-depth shallow shadows? 
While we expect our original (unmodified) DSS to always outperform equivalent-depth shallow shadows, we will not give a rigorous guarantee in this work. 
Instead, we will outline some intuition -- which we hope will be helpful scaffolding for showing this in future work.
In particular, consider the $N$ unitaries $\{U^{DSS}_i\}_{i=1}^N$ as defining a PDF --  as was done earlier in this appendix, see \equref{eqn:pdf_DSS}. 
As we derandomize each circuit, making it some deterministic unitary, we are concentrating our measurement budget on some set of Pauli strings. 
The hitting counts for some Pauli strings will go up while others go down. Recall that $h(P) = \sum_i p_i(P)$, and Pauli weight $p_i(P)$, which can be a fraction before derandomization, will change as the circuit is derandomized. 
Therefore, showing that DSS -- as originally formulated -- outperforms shallow shadows reduces to guaranteeing that our choice of unitaries concentrates hits on the Pauli strings we care about. 
Finally, notice that showing this for DSS of arbitrary depth immediately provides the equivalent guarantee for the Random Pauli setup of Ref~\cite{huang2021efficient}, as it is the depth $d=0$ version of DSS.

\newpage
\section{\label{app:TensorNetwork_for_cost_function} TENSOR NETWORK TECHNIQUE FOR EFFICIENT COST FUNCTION EVALUATION}

This appendix discusses how we use tensor network techniques to efficiently calculate our cost function during the derandomization procedure. Parts of this method were used in Ref \cite{2024arXiv240217911H}'s analysis of results, and these ideas rely on formalisms first introduced in Ref~\cite{bertoni2022shallow}. 
We first describe our technique for efficiently evaluating the cost function, and then we quantify its classical complexity. This leads us to a guarantee that the DSS algorithm can be performed efficiently. 
At first glance evaluating our cost function 
\begin{equation}
    \text{\textsf{COST}}\left(\{\mathcal{U}_i\}_{i=1}^N\right) 
    = 2 \sum_P w_P \prod_{i=1}^{N} \exp{\left[ \hspace{1mm} -\frac{\epsilon^2}{2} p_i(P) \hspace{1mm} \right]}
\end{equation} 
can seem prohibitive because we have to calculate the Pauli weights $p_i(P)$. 
To evaluate the Pauli weights, we must track how our (sometimes probabilistic) circuits transform $P$. Recall that our Pauli weight takes the form
\begin{equation}
     p_i(P) = \frac{1}{2^n} \Expect_{U \sim \mathcal{U}_i} \sum_{b}  \bra{b} U P U^\dagger \ket{b}^2,
\end{equation}  
as defined in Definition \ref{def:PauliWeightUm}. Na\"ively, this requires exponential resources: for example, using direct simulation $P \rightarrow UPU^\dagger$ we need space exponential in the system size. However, using our tensor network techniques, we can bypass this overhead and utilize resources only polynomial in the system size.
We map the Pauli weight computation onto a Markov chain, which can be represented with a tensor network. As such, our technique mimics random Markovian classical processes.  
Below we will pedagogically describe and evaluate our technique's complexity in order to guarantee that it efficiently calculates the cost function during derandomization. 
We show that our DSS algorithm is polynomial time when considering measurement circuit depth at most polylog($n$).

\vspace{5mm}
\noindent
\begin{center}
    \textit{Tensor network technique for estimating Pauli weights} 
\end{center}

Evaluating the DSS \textsf{COST} function reduces to evaluating the Pauli weights $p_i(P)$.
Looking at the $\mathcal{U}_i$ measurement ansatz, we can construct a tensor network for calculating $p_i(P)$ for some input Pauli string $P$.
We will show how to do this for a simple example and then discuss how to extend to higher depth and more complicated circuits. 
At a high level, the idea is to represent Pauli operators as vectors in a 
Hilbert space. We use a variant of the \textit{Pauli transfer matrix} (PTM) formalism, a tool commonly used in simulations of quantum dynamics and open quantum systems \cite{chow2012universal, aharonov2023polynomial, cerezo2023does, rall2019simulation}. 
As alluded to above, our technique mimics classical Markovian processes, and after we describe the technique below, we will prove our efficiency claims. 

To explain the Markov chain interpretation, we repackage the Pauli weight as
\begin{align}\label{eq:Markov_interpretation}
     p_i(P) = \frac{1}{2^n} \Expect_{U \sim \mathcal{U}_i} \sum_{b}  \bra{b} U P U^\dagger \ket{b}^2 = \frac{1}{4^n} \Expect_{U \sim \mathcal{U}_i} \sum_{z \in \mathcal{Z} } \tr[  U^{\otimes 2} P^{\otimes 2} U^{\dag \otimes 2}  \, z^{\otimes 2}]
\end{align}
where $\mathcal{Z}$ is the set of $n$-qubit Pauli strings made up of only $\mathbb{1}$s and $Z$s. Since the distribution $\mathcal{U}_i$ only includes Clifford gates, the operator $U^{\otimes 2} P^{\otimes 2} U^{\dag \otimes 2}$ is always a single Pauli string of the form $Q^{\otimes 2}$, with $U P U^{\dag} = \pm Q$. Therefore, we can express $\Expect_{U \sim \mathcal{U}_i}   U^{\otimes 2} P^{\otimes 2} U^{\dag \otimes 2} $ as a sum of doubled Pauli strings,
\begin{align}\label{eq:pauli_basis_expansion}
    \Expect_{U \sim \mathcal{U}_i}   U^{\otimes 2} P^{\otimes 2} U^{\dag \otimes 2} =\sum_{Q \in\{\mathbb{1},X,Y,Z\}^{\otimes n} } c_Q \, Q^{\otimes 2},
\end{align}
with the additional properties
\begin{align}
     \quad c_Q \geq 0 \quad \text{and} \quad \sum_Q c_Q = 1, \quad \forall Q.
\end{align}
The sum to unity implies we can interpret the coefficients $c_Q$ as a probability distribution. 
Looking at the last term in \equref{eq:Markov_interpretation}, we see that only coefficients $c_Q$ for strings made entirely out of $\mathbb{1}$s and $Z$s will contribute to $p_i(P)$, and in fact $p_i(P)$ is exactly the probability that $P^{\otimes 2} $ gets taken to some doubled $\mathbb{1},Z$-string.

\addtocounter{figure}{+1}
\begin{figure}[ht]
\centering
\includegraphics[scale=0.4]{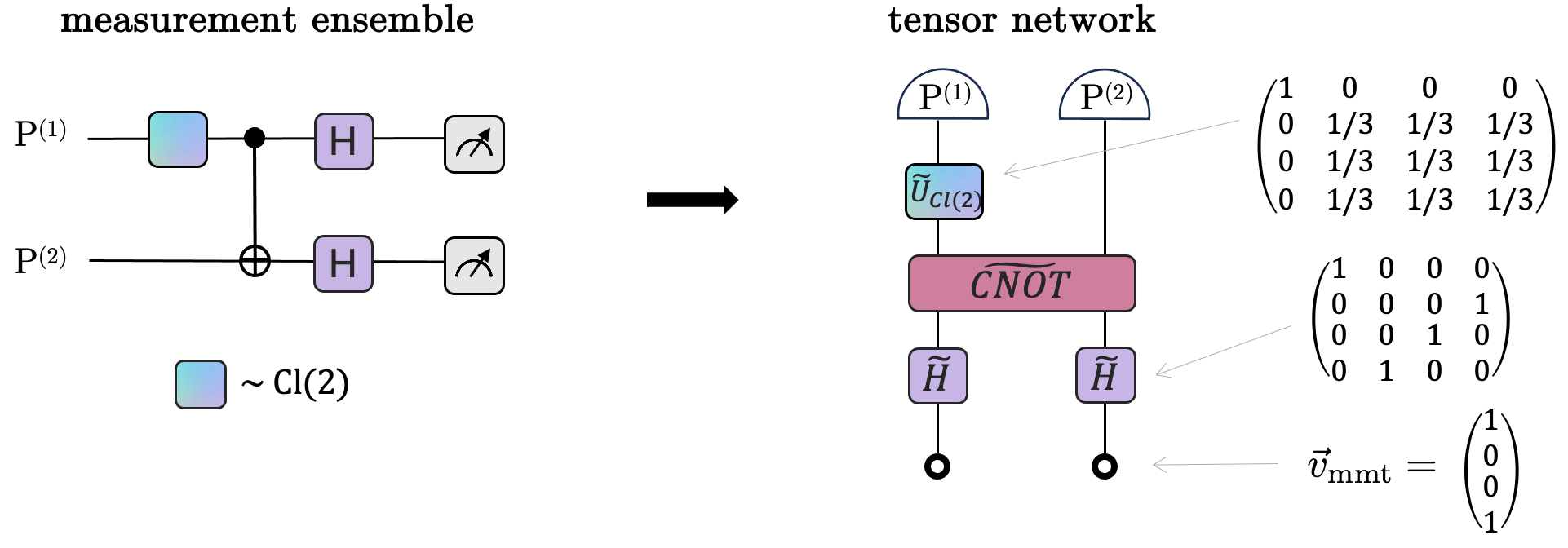}
\caption{
\emph{Example circuit and its associated tensor network.}  
Consider the measurement ensemble $\mathcal{U}$ defined by this example circuit. Using tensor network techniques, we can solve for the probability $p(P)$ that this measurement ensemble measures the Pauli string $P = P^{(1)} \otimes P^{(2)}$.
Looking at the circuit, we can immediately notice that the circuit diagram (left) maps to a network of tensors (right). Each gate, measurement, and input Pauli string becomes a tensor, as defined in the main text. We give examples of what some of the tensors look like on the right (see \equref{eqn:TN_hadamard_tensor}, \equref{eqn:TN_twirl_single_qubit}, and \equref{eq:v_measurement}). This setup allows us to directly evaluate $p(P)$ for this example ensemble.
}
\label{fig:TN_example_circuit}
\end{figure}

Our tensor network algorithm then is equivalent to updating the distribution over doubled Pauli strings through successive layers of gates. Concretely, to every initial Pauli, we associate a probability vector that encodes the initial distribution. To every (possibly twirling) gate, we associate a stochastic matrix that updates the distribution, and to every measurement site we associate a vector that picks out the probabilities of $\mathbb{1},Z$ strings. 
We discuss the form of these tensors below, but for more details on their derivations, we refer the reader to \cite{bertoni2022shallow,2024arXiv240217911H}. 
Once the tensors for each element are assigned, the index contraction scheme is given directly by the circuit diagram, as seen in the correspondence between the two parts of Supplementary Figure \ref{fig:TN_example_circuit}. After we list the forms of the various tensors, we will discuss this Supplementary Figure \ref{fig:TN_example_circuit} example in detail (see \equref{eqn:TN_example_exact_eval}).

\vspace{2mm}\noindent
\textit{1. Initial Pauli string} - Because we work in a basis which simply doubles a Pauli string over two copies, our encoding of the initial Pauli looks just like a single-copy encoding. We represent our initial Pauli with a tensor product of vectors encoding the single-site Paulis -- i.e. for single-site Pauli $P_s$ at site $s$, the corresponding vector $\vec{v}^{(s)}$ to be plugged into the tensor contraction looks like 
\begin{equation}\label{eq:Pauli_basis}
    P_s^{\otimes 2} \to \vec{v}^{(s)}, \quad \vec{v}^{(s)} = \begin{cases}
        \hspace{1mm}\begin{pmatrix}
            1&0&0&0
        \end{pmatrix}^\text{T}, & \vspace{2mm} \hspace{4mm}P_{s}^{\otimes 2} = \mathbb{1}^{\otimes 2}\\
        \hspace{1mm}\begin{pmatrix}
            0&1&0&0
        \end{pmatrix}^\text{T}, & \vspace{2mm}\hspace{4mm}P_{s}^{\otimes 2} =X^{\otimes 2}\\
        \hspace{1mm}\begin{pmatrix}
            0&0&1&0
        \end{pmatrix}^\text{T}, & \vspace{2mm}\hspace{4mm}P_{s}^{\otimes 2} =Y^{\otimes 2}\\
        \hspace{1mm}\begin{pmatrix}
            0&0&0&1
        \end{pmatrix}^\text{T}, & \vspace{2mm}\hspace{4mm}P_{s}^{\otimes 2} =Z^{\otimes 2}\\
    \end{cases}.
\end{equation}
In fact, because our basis is the same size as the basis for single-copy Paulis and the only effect of the doubled copy is to remove the need to track signs, we will often suppress the ``$\otimes 2$'' exponent in what follows. We make statements, like ``the Hadamard preserves $Y$,'' which are only true at the single-copy level if signs are disregarded. Note that although it is inefficient to represent explicitly, the $4^n$ component vector $\vec{v}_P = \bigotimes_s \vec{v}^{(s)}$ is 
our initial (delta) distribution over Paulis.

\vspace{1mm}\noindent
\textit{2. Gates - }We can now represent gates, whether fixed or averaged over the Clifford group, with stochastic matrices determined by how the gate transforms elements of our doubled Pauli basis. When a gate is fixed, the corresponding stochastic matrix is simply a permutation matrix, as a single Clifford gate permutes the Pauli group. For example, the Hadamard gate preserves Identity and $Y$ but swaps $X$ and $Z$ and therefore acts on our basis as the permutation
\begin{align}\label{eqn:TN_hadamard_tensor}
    \widetilde{H} = \begin{pmatrix}
    1 & 0 & 0 & 0 \\
    0 & 0 & 0 & 1\\
    0 & 0 & 1 & 0 \\
    0 & 1 & 0 & 0    
\end{pmatrix},
\end{align}
where again the first column/row represents $\mathbb{1}$, the second represents $X$, the third $Y$, and the fourth $Z$. Here we use the tilde notation $\widetilde{H}$ to represent the gate's stochastic matrix description, or equivalently the action of $H^{\otimes 2} \cdot \hspace{0.3 mm} (H^\dag)^{\otimes 2} $ on the space of doubled Paulis.
The matrix $\widetilde{H}$  is nearly the same as the standard PTM representation of Hadamard \cite{chow2012universal, aharonov2023polynomial, cerezo2023does, rall2019simulation}. The only difference is all signs in the single-copy transfer matrix are dropped to give the two-copy transfer matrix. This follows from the fact that, for example, $H Y H^\dag= - Y$ but $H^{\otimes 2} \,Y^{\otimes 2} H^{\dag \otimes 2} = +  Y^{\otimes 2}$. All 6 permutations of $\{X,Y,Z\}$ can be implemented with single-qubit Cliffords, giving us the (single-qubit) rotations corresponding to entries 1-6 of Table~\ref{tab:GateValues}.

The two-qubit case is similar. A fixed two-qubit gate $U$ will be represented by a $16\times 16$ stochastic (in fact, permutation) matrix whose entry at row $P'$ and column $P$ is 1 when $P'  = \pm {U} P {U}^\dagger$  and 0 otherwise. In general, the matrix element $\widetilde{U}_{P',P}$ for any fixed Clifford $U$ on $q$ qubits is 
\begin{equation}\label{eqn:TN_rule_for_U_construction}
 \widetilde{U}_{P',P} = 
 \frac{1}{2^q}
 |\Tr[P' U P U^\dagger]|.
\end{equation}
This takes care of representing our fixed two-qubit gates \textsf{Identity}, \textsf{SWAP}, and \textsf{CNOT}.

Alternatively, to represent a gate, which is twirled over the Clifford group, we simply average the transfer matrices for each Clifford.  In the single-qubit case, this gives 
\begin{align}\label{eqn:TN_twirl_single_qubit}
    \widetilde{U}_{\text{Cl}(2)}  \to \begin{pmatrix}
    1 & 0 & 0 & 0 \\
    0 & 1/3 & 1/3 & 1/3\\
    0 & 1/3 & 1/3 & 1/3 \\
    0 & 1/3 & 1/3 & 1/3   
\end{pmatrix},
\end{align}
which agrees with the intuitive notion that an average over single-qubit Cliffords should treat all Paulis -- except the identity -- on equal footing. Identity always goes to identity, and this transformation will take any nontrivial Pauli (i.e. $\{X,Y,Z\}$) to any of the others with probability 1/3.
Again, the two-qubit case is similar. 
When we randomly sample from the two-qubit Clifford group $\text{Cl}(2^2)$, we equally permute all 15 non-identity two-qubit Pauli strings. As such, $\widetilde{U}_{\text{Cl}(2^2)}$ looks like \equref{eqn:TN_twirl_single_qubit}, but now we ``twirl'' among 15 Pauli strings rather than 3. Therefore, the matrix will be block diagonal with a $15 \times 15$ block of $\frac{1}{15}$s rather than the $3 \times 3$ block of $\frac{1}{3}$s.

\vspace{1mm}\noindent
\textit{3. Measurement -} We have described how to represent every fixed or twirling gate with its corresponding transfer matrix. Once the probability distribution over Paulis has been updated through all the gates, we simply need to pick out the probabilities associated with strings made entirely out of $\mathbb{1}$s and $Z$s. This can be done using the ``measurement vector'' $\vec{v}_\text{mmt}$, where contraction at every measurement site with
\begin{equation}\label{eq:v_measurement}
    \vec{v}_\text{mmt} = 
    \begin{pmatrix}
        1 \\ 0 
        \\ 0 \\ 1    
    \end{pmatrix}
\end{equation}
precisely picks out the $\mathbb{1}$ and $Z$ components.

\vspace{5mm}
Putting this all together, we can solve for the Pauli weight of the example measurement ensemble posed 
in Supplementary Figure \ref{fig:TN_example_circuit}. This ensemble has one random gate (the upper left gate), and the rest are fixed. Looking at Supplementary Figure \ref{fig:TN_example_circuit}, given our tensor constructions above, we can immediately notice that the circuit diagram (left) maps to a network of tensors (right). Each gate, measurement, and input Pauli string becomes a tensor, as defined above. 
If one were to directly evaluate $p(P)$ for our example, utilizing these specifications, the calculation would look like 
    \begin{equation}
    \label{eqn:TN_example_exact_eval}
        p(P) = \left(\otimes_{s=1}^n \vec{v}_\text{mmt}^\text{T} \right)
        \hspace{1mm} (\widetilde{H} \otimes \widetilde{H}) 
        \hspace{1mm} \widetilde{\textsf{CNOT}} 
        \hspace{1mm} (\widetilde{U}_{\text{Cl}(2)} \otimes \mathbb{1}_4)
        \hspace{1mm} \vec{v}_P,
    \end{equation}
where $\widetilde{H}$ is defined as in \equref{eqn:TN_hadamard_tensor}, $\widetilde{U}_{\text{Cl}(2)}$ is defined by \equref{eqn:TN_twirl_single_qubit}, \textsf{CNOT} is defined by \equref{eqn:TN_rule_for_U_construction}, $\mathbb{1}_4$ is the $4 \times 4$ identity matrix, and the initial Pauli is given by $\vec{v}_P = \bigotimes_{s = 1,2} \vec{v}^{(s)}$.
Notice that we used our circuit to guide how we connect the tensors.

While we have explicitly written out this expression in \equref{eqn:TN_example_exact_eval} and could evaluate it exactly, representing the full $4^n$ size space is exponentially costly. 
Instead, one can use a tensor network package, which can more efficiently perform this calculation when the depth $d$ is bounded.
Specifying the tensors and their connectivity, one can efficiently contraction these tensors using established algorithms in tensor networks~\cite{Gray2021hyperoptimized,Gray2018,fishman2022itensor}.
Note that many packages can both determine a good contraction order and also re-represent some tensors if they have more efficient descriptions during the calculation~\cite{Gray2021hyperoptimized,Gray2018,roberts2019tensornetworklibraryphysicsmachine}. 

At this point we have set up the tensors, discussed how to write down all tensors used in the DSS algorithm, and defined the network to calculate $p(P)$.
While all of this has an eye towards implementation and practicality, this technique is special as it allows us to efficiently\textemdash i.e. in polynomial time in the system size\textemdash evaluate $p(P)$ for bounded-depth measurement ansatzes. The next section will discuss this efficiency, and before we delve into this discussion, below we note a trick one can play during derandomization, using the ``signature basis.''

\vspace{3mm} 
\begin{center}
    \textit{Aside: Improving efficiency with the signature basis }
\end{center}

We have explained how to carry out tensor network contractions in the Pauli basis given by \equref{eq:Pauli_basis}. We could use this basis for all contractions carried out by our algorithm, but in fact when all single-qubit gates are twirled (as is always true during the structure derandomization step of Figure~\ref{fig:Figure1}(a.ii)), we can work in the more compact ``signature'' basis first analyzed in Ref.~\cite{bertoni2022shallow}. This allows us to reduce the bond dimension for structure derandomization from $4^{d-1}$ to $2^{d-1}$. 

As noted earlier, twirling over single-qubit Cliffords washes out the distinction between $X$, $Y$, and $Z$ as, e.g.,
\begin{align}
    \mathbb{E}_{U\sim \text{Cl}(2)}\,[U^{\otimes 2} Z^{\otimes 2} (U^\dag)^{\otimes 2}] = \frac{X^{\otimes 2}+Y^{\otimes 2}+Z^{\otimes 2}}{3},
\end{align}
but twirling of course preserves the distinction between trivial ($\{\mathbb{1}\}$) and nontrivial ($\{X,Y,Z\}$) Paulis. Since for structure derandomization every two-qubit gate is sandwiched between layers of single-qubit twirls, the distinction between $X$, $Y$, and $Z$, is continually washed out, and we only need to track how subsequent layers of two-qubit gates probabilistically change the support of incoming Pauli strings. 

We now list the tensor substitution rules for the initial Pauli string, the two-qubit gates, and the selection on $\mathbb{1},Z$ strings, referring the reader to Ref. \cite{bertoni2022shallow} for more formal discussion of the signature basis.
We encode the support of our initial doubled Pauli as
\begin{align}
P_i^{\otimes 2} \to
\begin{cases}
\begin{pmatrix}
    1 & 0 
\end{pmatrix}^\text{T}, & P_i^{\otimes2} = \mathbb{1}^{\otimes2} \vspace{0.1 in} \\ 
\begin{pmatrix}
    0 & 1 
\end{pmatrix}^\text{T}, & P_i^{\otimes} \in \{X^{\otimes2},Y^{\otimes2},Z^{\otimes2}\}
\end{cases}.
\end{align}
Given a 2-dimensional basis for each site, we can represent each option for a two-qubit gate with a $4 \times 4 $ matrix, and we list these explicitly before providing intuition:  
\begin{align} 
\textsf{Identity} \to \begin{pmatrix}
    1 & 0 & 0 & 0 \\
    0 & 1 & 0 & 0 \\
    0 & 0 & 1 & 0 \\
    0 & 0 & 0 & 1 \\
\end{pmatrix},&\qquad
\textsf{SWAP} \to \begin{pmatrix}
    1 & 0  & 0 & 0 \\
    0 & 0  & 1 & 0 \\
    0 & 1  & 0 & 0 \\
    0 & 0  & 0 & 1
\end{pmatrix}, \vspace{1 in} \\ \label{eq:CNOT_sig}
\textsf{CNOT} \to \begin{pmatrix}
    1 & 0  & 0 & 0 \\
    0 & 1/3  & 0 & 2/9 \\
    0 & 0  & 1/3 & 2/9 \\
    0 & 2/3  & 2/3 & 5/9
\end{pmatrix},&\qquad
\text{Cl}(2^2) \to \begin{pmatrix}
    1 & 0  & 0 & 0 \\
    0 & 1/5  & 1/5 & 1/5 \\
    0 & 1/5  & 1/5 & 1/5 \\
    0 & 3/5  & 3/5 & 3/5
\end{pmatrix}.
\end{align}
Recall that in the Pauli basis, an element of a gate's stochastic matrix gave the probability that a particular input Pauli be taken to a particular output Pauli. In the signature basis, a gate's stochastic matrix gives the probability that a Pauli with a particular input \textit{support} be taken to a Pauli with a particular output \textit{support}. For example, consider the action of $\textsf{CNOT}$ on all Paulis supported only on the second qubit: 
\begin{align}
    \textsf{CNOT}\, I \!\otimes\!  X  \, \textsf{CNOT} = I \!\otimes\!  X,\\
    \textsf{CNOT} \, I \!\otimes\!  Y \, \textsf{CNOT} = Z \!\otimes\!  Y,\\
    \textsf{CNOT} \,I \!\otimes\!  Z \,\textsf{CNOT} = Z \!\otimes\!  Z.
\end{align}
If we take of each of these inputs to be equally likely (because the \textsf{CNOT} is preceded by single-qubit twirls), the output will be supported only on the second qubit with probability $1/3$ and supported on both qubits with probability $2/3$. This gives exactly the second column of the $\textsf{CNOT}$ matrix in \equref{eq:CNOT_sig}. In general, for any fixed or twirled gate the signature basis stochastic matrix can be derived from the Pauli basis stochastic matrix for that gate. A given signature basis matrix element is simply the sum of all matrix elements of the Pauli basis matrix consistent with the given input and output supports, divided by the number of allowed input Paulis. 
Finally, at every measurement site we contract against 
\begin{align} 
\vec{v}_{\rm{mmt,sig}} = \begin{pmatrix}
    1 \\ 1/3
\end{pmatrix}
\end{align}
because if the evolved Pauli is trivial at the measurement site this is always consistent with having an $\mathbb{1},Z$ string, but if it is nontrivial and therefore equally likely to be $X,$ $Y,$ or $Z$ so the probability of measuring $Z$ drops to 1/3.

Contraction for calculation in the signature basis proceeds according to the same circuit diagrams as in the Pauli basis. However, there is no need to explicitly represent single-qubit twirls, as their effect has already been taken into account by switching to the signature basis.

\vspace{5mm}
\noindent
\begin{center}
    \textit{The DSS algorithm is efficient} 
\end{center}

In this section we first provide intuition for why our tensor network calculation can be performed in polynomial time for depth $d = O(\textnormal{poly}\log(n))$ measurement ansatzes. We then provide mathematical background on this result, showing both correctness and efficiency and thus providing a formal performance guarantee.

\begin{theorem}
    Suppose we are given a set $\{P\}_P$ of Pauli strings on $n$ qubits and a measurement budget of $N$ at most polylog-depth circuits, $d = O(\textnormal{poly}\log(n))$. If the number of observables $|\{P\}_P|$ and number of measurements $N$ are not superpolynomial, we can perform our DSS algorithm in time $\textnormal{poly}(n)$.
\end{theorem}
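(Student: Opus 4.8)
\textit{Proof plan.} The plan is to bound the running time of Algorithm~1 as (number of elementary ``fix-a-gate'' steps) $\times$ (cost of the single operation those steps invoke, namely one \textsf{COST} evaluation), and then to reduce a \textsf{COST} evaluation to computing the Pauli weights $p_i(P)$, each of which is handled by the tensor-network contraction described above. First I would count iterations: the outer loop runs over the $N$ measurements; for each it fixes $d\lfloor n/2\rfloor$ two-qubit gates and $(d+1)n$ single-qubit gates, trying at most a constant number of options per gate (the $3$ resp.\ $6$ entries of Table~\ref{tab:GateValues}, or $4$ resp.\ $7$ in the relaxed variant of Proposition~\ref{prop:better_than_shallow_shadows}). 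Hence there are $O(Ndn)$ calls to \textsf{COST}. Next, since fixing a gate in measurement $j$ changes only $\mathcal{U}_j$, it changes only the factors $p_j(P)$ in $\textnormal{\textsf{COST}}=2\sum_P w_P\prod_i\exp[-\tfrac{\epsilon^2}{2}p_i(P)]$; caching the $p_i(P)$ for $i\neq j$, each re-evaluation requires recomputing $p_j(P)$ for the $|\{P\}|$ targets plus $O(N|\{P\}|)$ arithmetic operations (exponentials evaluated to fixed precision) to re-form the sum of products. So the whole algorithm costs $O(Ndn)\cdot\big(|\{P\}|\cdot T_p + O(N|\{P\}|)\big)$, where $T_p$ is the time for one Pauli-weight evaluation; as $N$ and $|\{P\}|$ are assumed not superpolynomial, it remains only to show $T_p=\mathrm{poly}(n)$.

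The main work — and the main obstacle — is establishing $T_p=\mathrm{poly}(n)$, which I would do in two stages. \emph{Correctness.} Using the Markov-chain rewriting of Eqs.~\eqref{eq:Markov_interpretation}--\eqref{eq:pauli_basis_expansion} together with the tensor assignments of Eqs.~\eqref{eq:Pauli_basis}, \eqref{eqn:TN_hadamard_tensor}, \eqref{eqn:TN_rule_for_U_construction}, \eqref{eqn:TN_twirl_single_qubit}, \eqref{eq:v_measurement}, one checks that laying these tensors out along the circuit diagram and contracting along the wires evaluates exactly to $p_i(P)=\tfrac1{4^n}\Expect_{U\sim\mathcal{U}_i}\sum_{z}\tr[U^{\otimes2}P^{\otimes2}U^{\dagger\otimes2}z^{\otimes2}]$: each layer applies the correct fixed-permutation or twirled-stochastic transfer matrix on the doubled-Pauli space, and the measurement vectors project onto $\{\mathbb{1},Z\}$ strings. (When every single-qubit gate on a layer is twirled, one may instead work in the $2$-dimensional signature basis, as in the structure-derandomization step.) \emph{Efficiency.} The resulting object is a one-dimensional tensor network of length $\le n$ with $\le d$ layers; a transfer-matrix sweep from one end to the other maintains a boundary vector whose dimension is the product of the leg dimensions crossing a vertical cut, and the shallow brickwork geometry (light cone / area law) limits the number of gate bonds crossing any vertical cut to $O(d)$, so the boundary has dimension $\le 4^{O(d)}$ in the Pauli basis and $\le 2^{O(d)}$ in the signature basis. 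A sweep therefore costs $O(n)\cdot 2^{O(d)}$ time and memory; when all gates in $\mathcal{U}_i$ are fixed (the terminal state of the algorithm) this collapses further to the Gottesman--Knill bound of tracking one stabilizer. Thus $T_p=O(n)\cdot 2^{O(d)}$, which is $\mathrm{poly}(n)$ for $d=O(\log n)$, and more generally $n\cdot 2^{O(d)}$; assembling, the DSS algorithm runs in $O(Ndn)\cdot\big(|\{P\}|\cdot n\cdot 2^{O(d)}+N|\{P\}|\big)=\mathrm{poly}(n)$ in the stated regime.

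I expect the delicate point to be the bond-dimension bound: one must argue carefully that the shallow geometry caps the dimension of \emph{every} intermediate tensor in the sweep at $2^{O(d)}$ regardless of where $P$ is supported — in particular that the measurement vectors, being contracted at every site, do not force the boundary to grow — and that an off-the-shelf contraction-order optimizer realizes this bound. A secondary subtlety is bookkeeping the $\lfloor n/2\rfloor$-per-layer (odd $n$) and periodic-boundary conventions of Definition~\ref{def:initial_ansatz}, so the geometry is genuinely one-dimensional, and verifying that substituting the $16\times16$ (or $4\times4$, or signature-basis $4\times4$) stochastic matrix for a twirled gate — rather than explicitly averaging over $|\mathrm{Cl}(2^2)|$ Cliffords — is both exact and keeps every tensor of constant size.
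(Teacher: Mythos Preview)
Your proposal is correct and follows essentially the same approach as the paper: count $O(Ndn)$ calls to \textsf{COST}, observe that only the weights $p_j(P)$ for the currently-derandomized measurement need recomputing, and bound the per-weight cost $T_p$ via the Markov-chain tensor-network formalism with bond dimension exponential in $d$. The only difference is presentational: where you invoke a general vertical-cut/light-cone argument for the $2^{O(d)}$ bound, the paper gives an explicit staircase-slicing contraction scheme (their Supplementary Figure~\ref{fig:contraction}) yielding transfer matrices of dimension exactly $4^{d-1}$ (or $2^{d-1}$ in the signature basis), which directly resolves the ``delicate point'' you flag about whether a sweep actually realizes this bound.
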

The proof follows by construction. We propose an algorithm, which formalizes the tensor network technique in the previous section's example and achieves time complexity $O(\textnormal{poly}(n))$ when depth $d = O(\textnormal{poly}\log(n))$.
The cost function must be evaluated for all options of all gates, $O(nd)$, in all measurement ensembles, $O(N)$. Therefore, we require $O(N n d)$ calls to \textsf{COST},
\begin{equation}
    \text{\textsf{COST}}\left(\{\mathcal{U}_i\}_{i=1}^N\right) 
    = 2 \sum_P w_P \prod_{i=1}^{N} \exp{\left[ \hspace{1mm} -\frac{\epsilon^2}{2} p_i(P) \hspace{1mm} \right]},
\end{equation}
and since we only are updating one gate (in one measurement ensemble) at a time, we only need to calculate one set of Pauli weights.
The set we need to calculate is the $j$th set, $\{p_j(P)\}_P$, where $j$ corresponds to the measurement $j$ we are currently derandomizing. Therefore, each of our calls to \textsf{COST} will only need to calculate $|\{P\}_P|$ Pauli weights $\{p_j(P)\}_P$, where  $|\{P\}_P|$  is the size of the set of Pauli strings we want to learn.

To determine the total time complexity of the DSS algorithm, we need to determine the time complexity $t_{PW}$ of calculating a single Pauli weight $p_i(P)$.
The DSS algorithm's time complexity is then $O(N n d \times |\{P\}| \times t_{PW})$. As we will see below, the complexity of $t_{PW}$ grows exponentially in the depth $d$ of our ansatz, and therefore, $t_{PW} = O(\textnormal{poly}(n))$ as long as $d$ is $\textnormal{polylog}(n)$. The theorem then follows assuming both $N$ and $|\{P\}|$ are at most polynomial. 

The remainder of this appendix is dedicated to showing $t_{PW} = O(\textnormal{poly}(n))$ holds when utilizing our tensor network technique. 
We proceed in two steps: correctness and efficiency. 
To establish correctness of the weight calculation, we need to demonstrate that the general formula for $p_i(P)$ (\equref{eq:prob_m}) can be expressed in terms of contractions between the tensors we use in our numerics.
A key property both here and in establishing the efficiency of our algorithm is that whenever we evaluate a Pauli weight $p_i(P)$ for a measurement ensemble $\mathcal{U}_i$, every gate is either completely fixed or averaged over some distribution independent of all other gates in the circuit. As such, expectation values over the ensemble can be factorized into the composition of super-operator-valued expectations of each gate, and these gate-level superoperators can be explicitly calculated.
To establish efficiency, we need to provide a schedule for tensor contraction which ensures we never have to manipulate objects of exponentially large dimension. This is easiest to explain diagramatically, as we show in Supplementary Figure \ref{fig:contraction}. From the contraction strategy, the simulation can be carried out with a cost exponential only in the circuit depth $d$ or equivalently polynomial in $n$ provided the depth is restricted to $O(\textnormal{poly}\log(n))$ or smaller. In practice we are able to make the base of the exponential-in-depth scaling better than the naive estimate by working entirely within reduced subspaces of operator space, which are left invariant by our circuit ensembles.

\vspace{3mm}
\noindent
\textbf{Correctness.} 
First, we derive the tensors used in our network from \equref{eq:prob_m}. We will show how building up our ensemble of circuits from gates, which are independently distributed, leads to an efficient calculation of measurement probabilities using tensor networks. 
As a reminder, our goal is to calculate
\begin{align}
     p_i(P) = \frac{1}{2^n} \Expect_{U \sim \mathcal{U}_i} \sum_{b}  \bra{b} U P U^\dagger \ket{b}^2 = \frac{1}{4^n} \Expect_{U \sim \mathcal{U}_i} \sum_{z \in \mathcal{Z} } \tr[  U^{\otimes 2} P^{\otimes 2} U^{\dag \otimes 2}  \, z^{\otimes 2}]
\end{align}
 where $\mathcal{Z}$ is the set of $n$-qubit Pauli strings made up of only $I$s and $Z$s. It is important to distinguish between the tensor product over sites $\bigotimes_{i=1}^n$ and the tensor product between copies $\cdot^{\otimes 2}$. 
 The tensor product between copies comes from the fact that we are calculating a second moment property, i.e. a property not of $U$ but of $U^{\otimes 2}$. In what follows, a tensor product in an exponent  (e.g. $U^{\otimes 2}$) will always refer to the tensor product between the two copies, whereas a tensor product standing on its own (e.g. $P_1\otimes P_2$ or $\bigotimes_{i=1}^n P_i$) will refer to a tensor product between different physical sites. We can then repackage our formula for $p_i$ as
\begin{align} \label{eq:nearly_TN}
     p_i(P) =  \Expect_{U \sim \mathcal{U}_i}  \tr\left[  U^{\otimes 2} P^{\otimes 2} U^{\dag \otimes 2}  \, \bigotimes_{i=1}^n \frac{\mathbb{1}_i^{\otimes 2}+ Z_i^{\otimes 2}}{4}\right]. 
\end{align}
From the form of \equref{eq:nearly_TN} we begin to see why a tensor network approach may be useful. 
We have an initial operator $P^{\otimes 2}$ which is a tensor product over sites. We evolve it through our ensemble of circuits $\mathbb{E}_U U^{\otimes 2} \cdot U^{\dag \otimes 2}$ and then take the trace against an operator $\bigotimes_i (\mathbb{1}_i^{\otimes 2} + Z_i^{\otimes 2})/4$ which factorizes over \textit{sites} (though not over the two copies).
If $\mathbb{E}_U U^{\otimes } \cdot U^{\dag \otimes 2}$ does not entangle distant sites, we can simulate this evolution for large system sizes with tensor networks. 
\begin{figure}
    \centering
    \includegraphics[width=\linewidth]{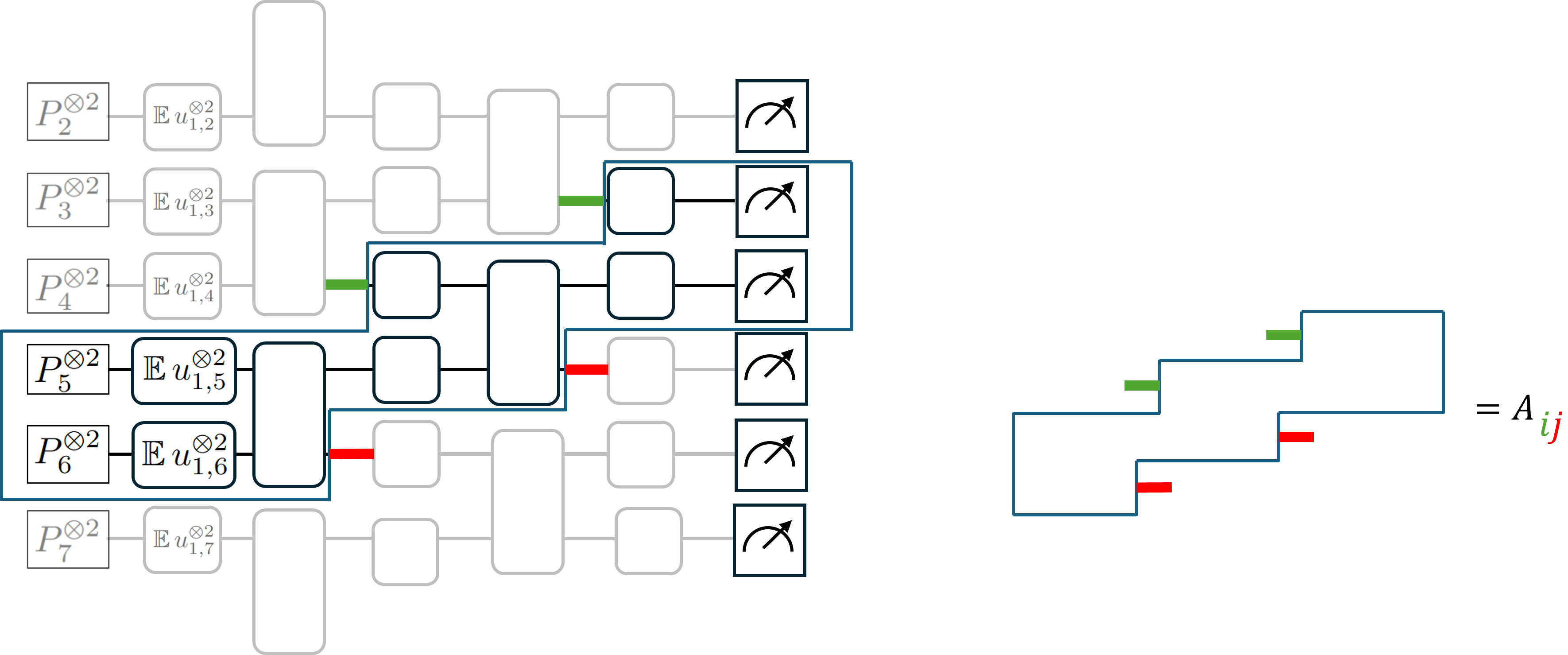}
    \caption{Diagram showing the contraction scheme for our tensor network approach to calculating Eqs. (\ref{eq:nearly_TN}), (\ref{eq:gate_decomp}). Once the internal contractions (contractions which lie entirely within the blue outline) are carried out, the final probability corresponds to the trace of the product of $n/2$ matrices with dimension exponential in depth $d$. The initial sites and gate locations stand for two-copy Paulis and superoperators respectively, as detailed in the text, and each measurement site stands for a trace against the operator $(\mathbb{1}^{\otimes 2}+Z^{\otimes 2})/4$.}
    \label{fig:contraction}
\end{figure}
To this end, it is useful to write out $U$ as a product of individual layers of single-qubit or two-qubit gates. Then averaging over $U$ can be expressed in terms of averaging over the gates, whose distributions are always independent from each other in our protocol. Thus we write
\begin{align}
    U = U^{(s)}_{d+1}&\prod_{l=1}^d U^{(t)}_l U^{(s)}_l\\
    U^{(t)}_l &= \bigotimes_i^{n/2} u^{(t)}_{l,i}\\ 
    U^{(s)}_l &= \bigotimes_i^{n} u^{(s)}_{l,i} 
\end{align}
where $U^{(t)}_l$ ($U^{(s)}_l$) means the $l$th layer of two-qubit (single-qubit) gates making up $U$, and  $u^{(t)}_{l,i}$, ($u^{(s)}_{l,i}$) is the $i$th two-qubit (single-qubit) gate in $U_l$. Recall from Appendix \ref{app:DSSalgorithm} that the two qubit gates are always offset by one in subsquent layers. 

Now we consider again the superoperator which comes from averaging over $U$, $\Expect_{U \sim \mathcal{U}_i}  U^{\otimes 2} \cdot U^{\dag \otimes 2}$. Our distribution over $U$ always comes from a product distribution over gates, i.e. for every two-qubit (single-qubit) gate we have a distribution $\mathfrak{u}^{(t)}_{l,i}$ ($\mathfrak{u}^{(s)}_{l,i}$), and these determine the distribution over circuits according to
\begin{align}
    \Expect_{U \sim \mathcal{U}_i}  U^{\otimes 2} \cdot U^{\dag \otimes 2} = & \bigg(\bigotimes_i^{n} \Expect_{u^{(s)}_{d+1,i} \sim \mathfrak{u}^{(s)}_{d+1,i}} u^{(s) \otimes 2}_{d+1,i} \cdot u^{(s) \dag\otimes 2}_{d+1,i} \bigg) \times\nonumber\\
    \hspace{1in}&\prod_l^d \bigg(\bigotimes_i^{n/2} \Expect_{u^{(t)}_{l,i} \sim \mathfrak{u}^{(t)}_{l,i}} u^{(t)\otimes 2}_{l,i} \cdot u^{(t) \dag \otimes 2}_{l,i} \bigg)\bigg(\bigotimes_i^{n} \Expect_{u^{(s)}_{l,i} \sim \mathfrak{u}^{(s)}_{l,i}} u^{(s) \otimes 2}_{l,i} \cdot u^{(s)  \dag \otimes 2}_{l,i} \bigg). \label{eq:gate_decomp}
\end{align}
Here, multiplication corresponds to superoperator composition.
At this point, 
the calculation reduces to evolving $P^{\otimes 2}$ through a composition of local superoperators and then calculating the overlap of this evolved ``state'' with a (product) distribution over $I^{\otimes 2},Z^{\otimes 2}$-strings. By performing internal contractions as shown in Supplementary Figure \ref{fig:contraction}, this calculation can be reduced to taking the trace of a product of matrices, whose dimension scales exponentially in depth. 
While one could naïvely simulate evolution through these channels using matrix product state of bond dimension $16^{d-1}$, in practice the unitary ensembles' structure allows us to work in more compact bases (described above). This results in matrices of bond dimension $4^{d-1}$ when some single-qubit gates are fixed and bond dimension $2^{d-1}$ when all single-qubit gates are twirled \cite{bertoni2022shallow,2024arXiv240217911H}.

\vspace{3mm}
\noindent
\textbf{Efficiency.} 
Once we represent the action of (possibly randomized) gates with tensors of fixed size, we need a schedule for carrying out the tensor contraction.
This can be done efficiently following the strategy in Supplementary Figure \ref{fig:contraction}, see also Ref.~\cite{bertoni2022shallow}. We first pair up the qubits and then divide the circuit into slices connected to each pair. This can be done by slicing the circuit into staircase shapes (again, see Supplementary Figure \ref{fig:contraction}), where each two-qubit gate in the staircase is shifted up one site from the preceding two-qubit gate. 
Without loss of generality, assume an even number of qubits (we can always add one if $n$ is odd, and this won't affect the scaling).  Since we have periodic boundary conditions and assume an even number of qubits, this slicing strategy is invariant under two-qubit translation. As shown in the figure, slicing the circuit and performing contractions internal to a given slice still leaves uncontracted indices which connect different slices. There are $d-1$ of these above a given slice and $d-1$ of them below. We group these $d-1$ indices above and these $d-1$ indices below into one single row index and column index, respectively. Then, the remaining contraction is equivalent to multiplying the matrices corresponding to each slice and taking the trace. These matrices are $ 4^{d-1} \times 4^{d-1}$ for the case where some single-qubit gates are fixed and $2^{d-1} \times 2^{d-1} $ when all are twirled, thanks to the reduced bases we employ. With the depth restricted to at most $O(\textnormal{poly}\log(n))$, the calculation requires multiplying $O(n)$ matrices of dimension $\textnormal{poly}(n)$, and is therefore efficient.

\newpage

\section{\label{app:QuantumChemistry} QUANTUM CHEMISTRY NUMERICAL SIMULATIONS}

In this appendix, we discuss our quantum chemistry application: using DSS we estimate the ground state energy of various molecules. 
This is a common application for learning protocols that estimate Pauli strings, and as such it has become a ubiquitous benchmark:  authors of new learning protocols often compare to previous state-of-the-art techniques by performing their protocol on a common set of ``test'' molecules. 
We gave these test molecules to DSS, and as we saw in the main text, at depth $d=1$ DSS already beats all previous protocols using bounded depth measurements. 
We compared all schemes' estimation errors when using the same, fixed number of measurements ($N =1000$), and DSS's estimates of the ground state energies were the most precise. 
We can determine DSS's estimation error on these molecules because they are small -- we can compute the true ground state energy exactly and, therefore, compute the difference between DSS's estimate and the true ground state energy.

In this work we compare depth $d=1$ DSS to previous state of the art techniques, such as locally-biased classical shadows (“LCBS”) \cite{hadfield2022measurements}, Random Pauli derandomization  \cite{huang2021efficient}, shallow shadows (“Shallow”) \cite{bertoni2022shallow, ippoliti2023operator, hu2023classical}, and overlap grouping measurement (“OGM”) \cite{wu2023overlapped}. 
We will give a high-level, one-sentence description of each, but we encourage interested readers to reference the original works. 
First, LBCS is classical shadows but with a new estimator, which is locally optimized using knowledge of the observable we want to learn (here, the hamiltonian $H$) and the best classical description of the state. 
Next, random Pauli derandomization is effectively just the depth-$0$ version of DSS.  
We also compare to shallow shadows, which have a single layer of entangling gates (in our language, depth $d = 1$). We choose to compare with depth 1 shallow shadows in order to allow the same entanglement generation as in DSS.
Finally, we also compare to OGM, which reduces the learning problem to a graph coloring problem \cite{veltheim2024multiset} and only allows single-qubit gates in their measurement circuits. 

For the remainder of this appendix, we provide background on the numerical simulations performed in Figure~\ref{fig:Figure2}, comparing depth $1$ DSS to previous learning strategies.  
First, we discuss how we use DSS to estimate the ground state energy of quantum chemistry molecules, and we show a plot of how the estimation error decreases with more measurements. 
In order to make this plot and obtain a good estimate of the estimation error we had to perform many simulations because the variance is nontrivial. 
Therefore, in the subsequent subsection we discuss how we determine the necessary number of simulations. In the results presented in the main text, we always compare estimation error averaged over at least 500 simulations. 
Finally, in the remainder of this appendix,
we discuss an example regime in which DSS gives optimal results. This is $H_2$ on 4 qubits and was featured in Figure~\ref{fig:Figure2}(b).

\vspace{5mm}
\noindent
\begin{center}
    \textit{Using DSS to estimate the ground state energy of quantum chemistry molecules} 
\end{center}

We estimate ground state energy of various quantum chemistry molecules by decomposing their Hamiltonians into Pauli strings, estimating the Pauli string expectation values, and recombining the estimates. We will write this out in detail below. 
First, we decompose the qubitized Hamiltonian into Pauli strings:
    \begin{equation}
        H = \sum_P c_P P.
    \end{equation}
The Hamiltonians we consider are transformed into qubit Hamiltonians by the Jordan Wigner transformation, and as such the Pauli strings with nontrivial coeffcients often exhibit a highly non-local structure. Although this information is not necessary for (nor used by) the DSS algorithm, it's helpful background for the reader.
Next, assuming we have prepared the ground state on our quantum simulator, we estimate the ground state energy
    \begin{equation} \label{eqn:decompose_expval_H_intoPaulis}
        \expval{H} = \sum_P c_P \expval{P}.
    \end{equation}
by estimating the expectation values $\hat{o}_N(P) \approx \expval{P}$ of the relevant Pauli strings ($P$ such that $c_P \neq 0$). Then we can feed those estimates into \equref{eqn:decompose_expval_H_intoPaulis} and estimate $\expval{H}$. 
We run the DSS algorithm ($N=1000$, $d=1$) on the Pauli strings $P$ such that $c_P \neq 0$, and once we measure with the output measurement circuits, we can construct our ground state estimate $\hat{o}(H) \approx \expval{H}$. It takes the form
    \begin{equation}
        \hat{o}_N(H) = \sum_P c_P \hat{o}_N(P),
    \end{equation}
where each Pauli observable estimate $\hat{o}_N(P)$ is constructed using the procedure outlined in appendix \ref{app:estimating_exp_vals_with_DSS}. Our entire protocol for estimating the ground state energy has estimation error $|\expval{H} - \hat{o}_N(H)|$.

When we run the DSS algorithm for this task, we employ a slightly modified cost function compared to the one used when proving our guarantees in Appendix \ref{app:Performance_guarantees}. 
In that appendix our learning problem was to estimate our desired Pauli strings to the best precision possible, with the given measurement budget $N$. 
However, for this application, we care about learning some Pauli strings more than others -- in particular, we want to be more precise when estimating Paulis the contribute more to the sum in \equref{eqn:decompose_expval_H_intoPaulis}. 
Our Pauli strings relative importance is determined by the strength of the coefficients $c_P$. 
Therefore, we use the Pauli's coefficients to weight the \textsf{COST} function.
Here, setting the weights to be $w_P = |c_P|$, our \textsf{COST} function takes the form...
\begin{equation}
    \textsf{COST}_\textnormal{weighted} \left(\{\mathcal{U}_i\}_{i=1}^N\right) 
    = 2 \sum_P |c_P| \prod_{i=1}^{N} \exp{\left[ \hspace{1mm} -\frac{\epsilon^2}{2} p_i(P) \hspace{1mm} \right]}.
\end{equation}
Weighting
the cost function allows us to signpost to our algorithm the relative importance of our Pauli strings.  
Using this weighted cost function is crucial for our quantum chemistry application -- consider a Hamiltonian with a large set of coefficients on some Paulis and then a long tail of small coefficients.
A cost function that has more emphasis on the large-$c_P$ Pauli estimates will do a better job at estimating the corresponding ground state energy. 
Finally, while here we just use the absolute value of $c_P$, we emphasize that there are many ways one can fix the weights. For example, one could also consider using the square of the coefficients -- this would amplify the dependence on the relative strength of the coefficients. 
In general, the best strategy will depend on the application.

\addtocounter{figure}{+3}
\begin{figure*}[h]
\centering
\includegraphics[scale = 0.5]{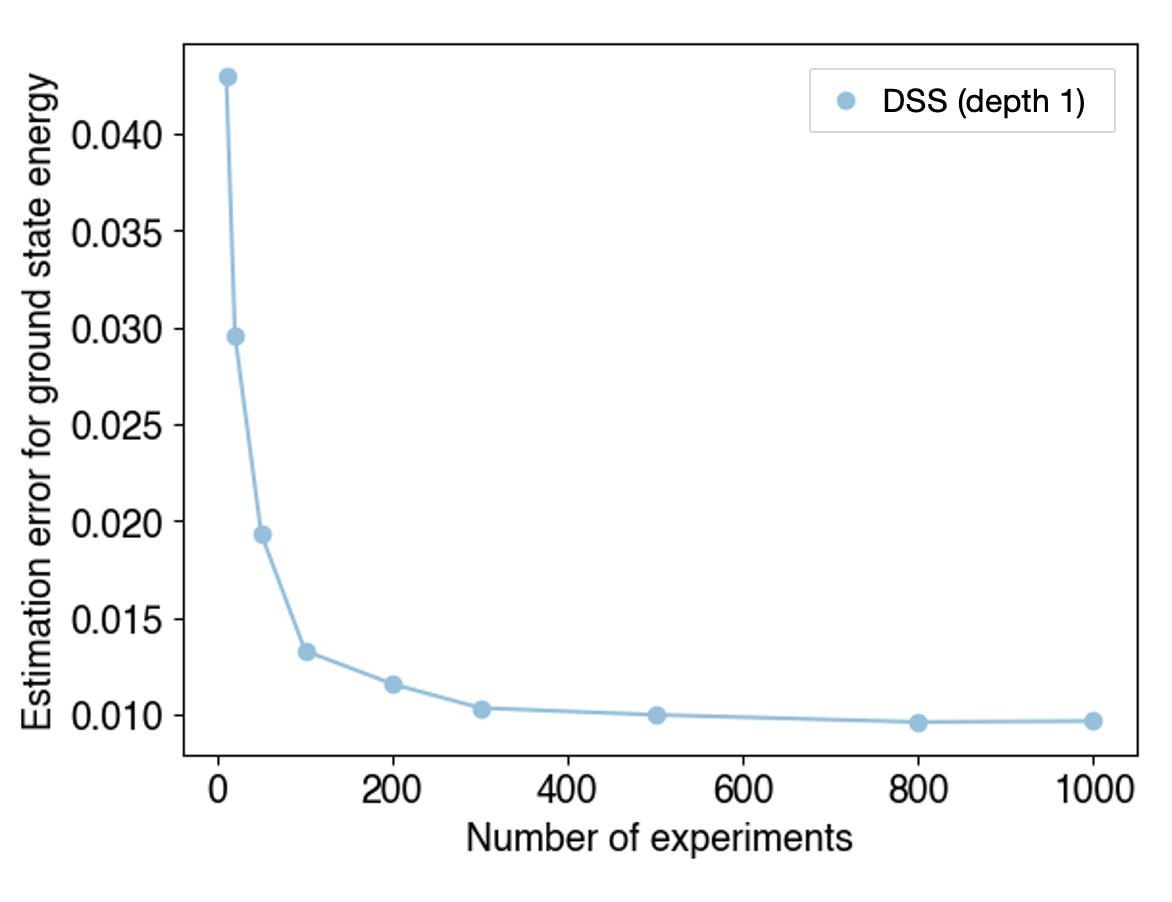}
\caption{
\emph{Estimation error decreases with more measurements.}  
We display the estimation error of depth-1 DSS, when estimating the ground state energy of the molecule $H_2$ on 4-qubits. As predicted by concentration inequalities (Appendix \ref{app:Performance_guarantees}), we find the estimation error decreases as we take more measurements. Each data point is average over 1000 simulations.
}
\label{fig:H2_estimation_error}
\end{figure*}

While the $N=1000$ results across all our test molecules can be found in the main text, let's consider an example molecule: $H_2$. 
Depending on how accurately we want to represent this molecule upon qubitization, we can represent it with more or less qubits. 
Here, consider the 4-qubit representation.
Using depth $d = 1$ DSS we can calculate the ground state energy estimation error using a variety of measurement budgets. See Supplementary Figure~\ref{fig:H2_estimation_error}. 
As expected, the estimation error decreases as a function of number of measurements.
 In this figure, for each measurement budget $N$ (i.e. number of experiments), we simulate many sets of measurements using the DSS output circuits $\{U_i^{DSS}\}_{i=1}^N$. 
This is because the variance of our protocol is nonzero. Simulating the $\{U_i^{DSS}\}_{i=1}^N$ measurement outputs gives some estimate for the ground state energy, but each of these measurements will not always give the same output. Therefore, with a finite measurement budget $N$, the quality of our energy estimate will fluctuate from simulation to simulation. 
As such, in order to get a good sense of the estimation error (how much our estimates fluctuate around the true value on average), we need to perform many simulations of taking the $\{U_i^{DSS}\}_{i=1}^N$ measurements. 
The next section discusses this and determines how many simulations we need to run to get our estimation error estimate to converge.

\vspace{5mm}
\noindent
\begin{center}
    \textit{Repeated simulations to show converging estimation error} 
\end{center}\label{app:converging_est_error}

In this section we discuss how many times we need to simulate the measurement procedure, using the set of circuits $\{U_i^{DSS}\}_{i=1}^N$ found by our DSS algorithm, in order to report a fair estimation error.
When estimating the ground state energy of our quantum chemistry Hamiltonians, we use $N = 1000$ measurements. DSS determines 1000 good measurement circuits for this task, reporting these bounded-depth measurement circuits at the output of the classical step (the DSS algorithm of appendix \ref{app:DSSalgorithm}). 
Upon simulating measurement with these $N = 1000$ measurements, our measurement outcomes will yield some estimate $\hat{o}(H)$ of the ground state energy $\expval{H}$. The estimation error is the difference between this estimate and the true energy, which we know because these molecules are small enough for full classical simulation. 
However, because this procedure has some non-trivial variance, sometimes our simulations may -- by chance -- give a very precise energy estimate. Other times, we may be far off from the true value.
In other words, when estimating the ground state energy using the measurement outputs of our DSS-given circuits, the estimation error will fluctuate due to our protocol's nontrivial variance. 
Therefore, we need to simulate taking measurements with our measurement circuits $\{U_i^{DSS}\}^{1000}_{i=1}$ many times in order to get a sense of what the \textit{average estimation error} looks like. 
The average estimation error is what we report in Figure \ref{fig:Figure2}.

\begin{figure*}[h]
\centering
\includegraphics[scale = 0.5]{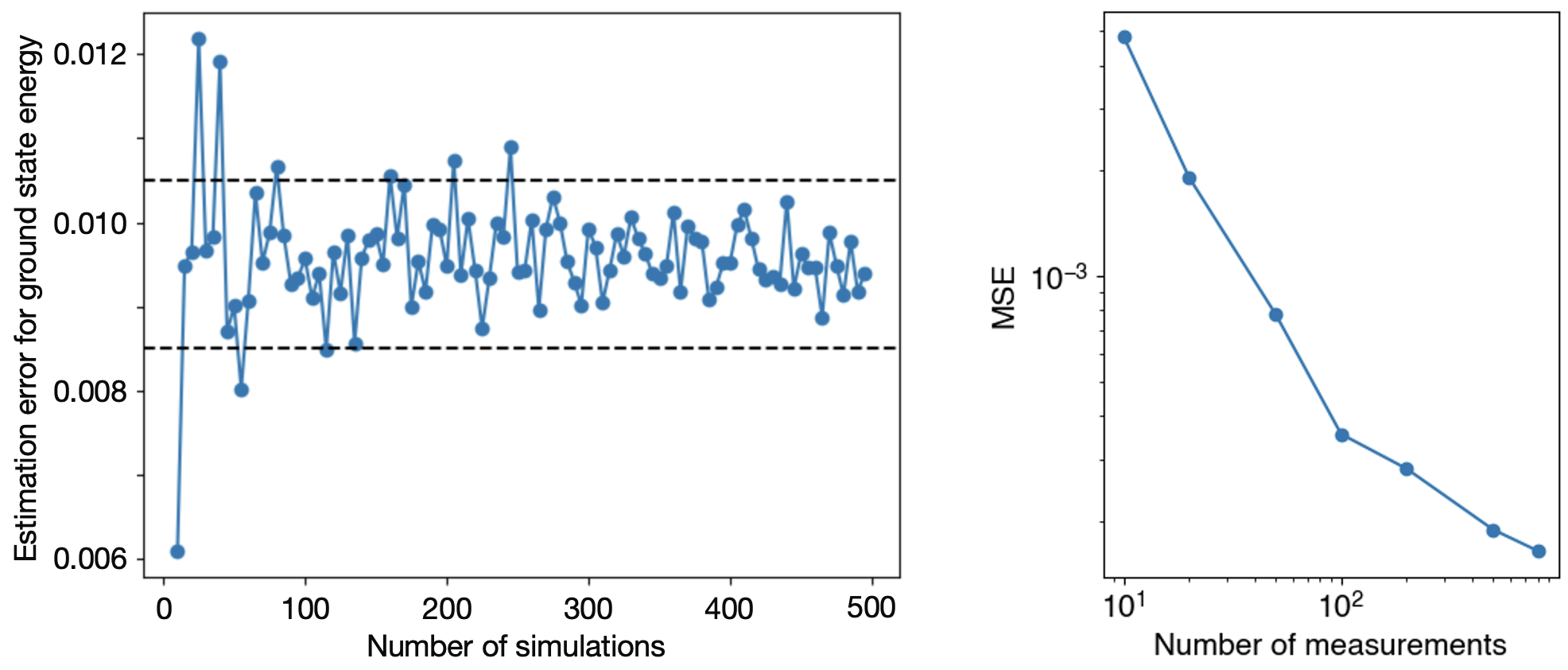}
\caption{
\emph{Characterizing the error of our ground state energy estimates of $H_2$ on 4 qubits.} 
(Left)
As we increase the number of simulations, we find that the estimation error (defined in \equref{eq:est_error}) converges towards an average value of $\sim 0.0096$ using the $N=1000$ DSS-specified circuits. 
The black dashed lines represent values that are $\pm 0.001$ away from our reported average estimation error $0.0096$.
(Right)  We can also perform a sanity check by plotting the mean squared error (MSE)  as a function of the number of measurements. 
As expected, MSE decreases as we ask our DSS algorithm to specify more measurement circuits for the estimation task. Here, we calculate the MSE using $S = 100$ simulations.
}
\label{fig:H2_simulations_converge}
\end{figure*}

Now let's determine how many simulations are needed to report a fair estimation error for DSS in Figure \ref{fig:Figure2}. 
In particular, for $S$ simulations of measurement circuits $\{U_i^{DSS}\}_{i=1}^N$, the average estimation error takes the form
    \begin{equation}\label{eq:est_error}
        \frac{1}{S} \sum_{s=1}^S \left| \expval{H} -\hat{o}_N^{(s)}(H) \right|.
    \end{equation}
Here $\hat{o}_N^{(s)}(H)$ is the the $s$th energy estimate, created from the $s$th simulation of measurements $\{U_i^{DSS}\}_{i=1}^N$.
For the molecules in Figure \ref{fig:Figure2}, we use $S = 500$ simulations when the number of qubits $n < 16$ and $S = 1000$ simulations otherwise. 
These numbers were chosen to guarantee estimation error convergence. 
For example, consider the molecule $H_2$ on 4 qubits, for which we report DSS has  
an average estimation error of 0.0096 under $S = 500$ simulations.
See Supplementary Figure \ref{fig:H2_simulations_converge}.
Notice that for $S \in [100,500]$ simulations the average estimation error oscillates in a small window (+/-0.001, represented by black dashed lines) around $0.0096$. 
In the data represented in main text Figure~\ref{fig:Figure2}, we plot the average estimation error with $S$ large enough that the the fluctuations have become small like this. This is how we decided to use $S = 500$ simulations when the number of qubits $n < 16$ and $S = 1000$ simulations otherwise.  
Finally, before we move onto the next section, we would be remiss to not mention that we considered an alternative way to quantify fluctuations in the estimation error: the mean squared error, defined as
    \begin{equation}\label{eq:mse}
        MSE = \frac{1}{S} \sum_{s=1}^S \left( \expval{H} -\hat{o}_N^{(s)}(H) \right)^2.
    \end{equation}
In Supplementary Figure \ref{fig:H2_simulations_converge}, we plot the mean squared error to look at how the average error trends, as a function of the number of measurements. These data points are averaged over $S = 100$ simulations.

\vspace{5mm}
\noindent
\begin{center}
    \textit{An example were DSS is optimal: 100 measurements on $H_2$} 
\end{center}\label{app:optimal_h2}

Here we discuss an example regime in which DSS gives optimal results. Consider $H_2$ on 4-qubits. For a budget of $N = 100$ depth $d = 1$ measurements, the 100 measurement settings chosen by DSS turn out to be optimal.
As shown in Figure \ref{fig:Figure2}(b), the DSS algorithm prescribes taking $91$ measurements in the all $Z$-basis and $9$ measurements in the double-Bell basis. For the latter, we rotate into the Bell basis on the first two qubits and on the last two qubits.
The first basis learns all strings with Identities and $Z$s, while the second learns Pauli strings with $XX$, $YY$, $ZZ$, and $II$ on either the first or last pairs.
Thus, these two measurement circuits together allow us to probe all Pauli strings, and the breakdown into $91$ versus $9$ measurements arises from the relative strength of the coefficients of the given Pauli strings. For the decomposition of our $H_2$ Hamiltonian, see below for the Pauli strings (and their corresponding coefficients). 
\begin{align*}
&ZIII \hspace{1mm} (0.172183) 
&\hspace{5mm}&  ZIZI \hspace{1mm} (0.168927)
&\hspace{5mm}&  YYXX \hspace{1mm} (0.045232)
\\
&IZII \hspace{1mm}  (-0.225753)   
&\hspace{5mm}&  ZIIZ \hspace{1mm} (0.166145)
&\hspace{5mm}&  YYYY \hspace{1mm} (0.045232)
\\
&IIZI \hspace{1mm}  (-0.172183)   
&\hspace{5mm}&  IZIZ \hspace{1mm} (0.174643)
&\hspace{5mm}&  XXXX \hspace{1mm} (0.045232)
\\
&IIIZ \hspace{1mm} (-0.225753)  
&\hspace{5mm}&   IZZI \hspace{1mm} (0.166145)
&\hspace{5mm}&   XXYY \hspace{1mm} (0.045232)
\\
&  
&\hspace{5mm}&   ZZII \hspace{1mm} (0.120912)
&\hspace{5mm}&   
\\
& 
&\hspace{5mm}&   IIZZ \hspace{1mm} (0.120912)
&\hspace{5mm}&   
\end{align*}

These measurement settings achieve a global minimum of the \textsf{COST} function landscape, when only allowing one layer $d=1$ of entangling gates. We will build some intuition for why.  
For the $d=1$ measurement ansatz, we have the option to add an entangling gate (\textsf{CNOT}) between qubits 1 and 2 and an entangling gate (\textsf{CNOT}) between qubits 3 and 4. Note that, since there's only one layer of entangling gates in this ansatz, a \textsf{SWAP} gate would be trivial and thus equivalent to  \textsf{Identity}.
We can now utilize our nonzero depth to simultaneously learn as many strings as possible. 
All strings commute with each other \textit{except} strings $ZIII$, $IZII$, $IIZI$, and $IIIZ$ (first column) do not commute with $XXXX$, $XXYY$, $YYXX$, and $YYYY$ (last column). The remaining strings (middle column) commute with both the first column and the last column.  
For the set in the last column, we can use two \textsf{CNOT} gates to rotate into a simultaneous eigenbasis. One option for combining single-qubit rotations with the \textsf{CNOT} is the double-Bell basis, which crucially also diagonalizes other Pauli strings we care about ($IIZZ$ and $ZZII$, from the middle column). Notice that because the two-qubit gate is on the first two (and, separately, the last two) qubits, we could never use our depth $1$ ansatz to simultaneously diagonalize $XX$, $YY$, and $ZI$ on the first two qubits, for example.
To find a basis for the remaining Pauli strings -- outside of those simultaneously diagonalized by the double-Bell basis -- we can simply notice that all remaining strings have $I$s and $Z$s. 
As such these two bases together can learn all Pauli strings in $H_2$.

Now that we have argued for the measurement bases, let's revisit the number of times we measure in each basis. Recall we have $N = 100$ measurements in our budget. 
Since the $XXXX$, $XXYY$, $YYXX$, and $YYYY$ strings have substantially smaller coefficients than the terms with $I$s and $Z$s, we only have to probe the double-Bell basis when the precision on the terms with $I$s and $Z$s are small enough such that the $X$ and $Y$ terms become relevant for our \textsf{COST} function.  
For a budget of $100$ measurements, 
we numerically investigated the optimal decomposition into these two measurements settings. We computed the $\textsf{COST}$ function associated with all combinations of these measurement bases, and we set the $\textsf{COST}$ hyperparameters to be the same as the ones we gave DSS. 
We found that 91 all $Z$ basis measurements and $9$ double-Bell measurements are a global minimum for the \textsf{COST} function. 


\newpage
\section{\label{app:HubbardModel}HUBBARD MODEL NUMERICAL SIMULATIONS}

This appendix discusses the simulations of Figure \ref{fig:Figure3}. 
Imagine the setting where we are trying to variationally prepare, for example, the ground state of a given Hamiltonian $H$. Here, we consider the Hamiltonian $H$ for the Fermi-Hubbard chain. 
We map the Hamiltonian Eq.~(\ref{eq:hubbard}) to a qubit lattice via a Jordan-Wigner transformation \cite{Sachdev_1999}, where a single site in the fermion lattice is mapped to a unit cell of two qubits. Our qubitized Hamiltonian then takes the form
\begin{equation}
\label{eqn:FERMIHUBBARDhamiltonian}
    H = -\frac{J}{2} \sum_{i=2}^{N - 1} \left( X_{i-1} Z_i X_{i + 1}+ Y_{i-1} Z_iY_{i+1}\right) + \frac{U}{4} \sum_{i=1}^{N/2} \left( \mathbb{1} - Z_{2i-1} \right) \left( \mathbb{1} - Z_{2i} \right) 
\end{equation}
where $N = 2L$ (see Eq.~\ref{eq:hubbard}). 
To determine whether we are in an eigenstate of this Hamiltonian, we want to estimate the variance.
The variance bounds the spectral distance to the closest energy eigenstate $\ket{\psi_{\ell}}$: $|\braket{H} - \varepsilon_{\ell}|^2 \le \text{var}[H]$ and thus serves for verification, when measured to a precision below the energy gap.
In order to estimate the variance, which takes the form $\expval{H^2} - \expval{H}^2$, we decompose it into Pauli strings as was done in Appendix \ref{app:QuantumChemistry} with the quantum chemistry Hamiltonians. Here,
    \begin{equation}
        \text{var}[H] = \expval{H^2} - \expval{H}^2
        = 
        \sum_P c_P \expval{P}.
    \end{equation}
Using \equref{eqn:FERMIHUBBARDhamiltonian}, we can determine the coefficients $c_P$ and the associated Pauli strings $P$ that we want to estimate. 
Once we have this set of Paulis $\{P\}$ we want to learn, we plot the number of measurements needed to learn each of these Pauli strings at least 25 times. 
As a result, for this application we do not use weights in our \textsf{COST} function, and so we only need to know which Pauli strings we care about -- i.e. those with $c_P \neq 0$.
Moreover, since one of the Pauli strings in our (qubitized) Hubbard $H$ is the all-Identity string, we only need to decompose $\expval{H^2}$ to obtain the set of Paulis we want to learn. The decomposition of $\expval{H^2}$ into strings will already contain all Pauli strings in  $\expval{H}$.

Since our learning problem has changed, we must also modify our DSS algorithm.
Recall that, given some measurement budget $N$, Algorithm \hyperref[alg:DSS]{1} (greedily) finds the best $N$ measurements. 
However, here we want to learn all of our Paulis $P$ at least $25$ times. In other words, our DSS algorithm should keep running until for all $P$, the hitting count $h(P) \geq 25$. See Appendix \ref{app:DSSalgorithm}, Definition \ref{def:hittingcount} for a formal definition of the hitting count. 
We want to know how many measurements $N$ we must make until this condition is satisfied. As such,  this requires a modification to the DSS algorithm as stated in Algorithm \hyperref[alg:DSS]{1}. 
In this appendix, we first discuss how to modify our DSS algorithm to accommodate this new learning problem, and while we chose the number 25, this number is simply an input. We set $N_O = 25$, where $N_O$ is the number of measurements \textit{per observable}.

After this appendix subsection, we next describe all methods plotted in Figure \ref{fig:Figure3}. 
In particular, we provide detailed background on the shallow shadows strategy and the naieve grouping strategy -- and why they scale as they do. 
Note that we also plotted the direct measurement strategy. 
However, this is trivial: this strategy simply measures each Pauli 25 times, requiring a number of measurements linear in the number of Paulis we want to learn (here, we need $N = 25 * |\{P\}_P|$ measurements).

\vspace{5mm}
\noindent
\begin{center}\label{app:ModifiedDSS_25mmts_perPauli}
    \textit{DSS Setup: 25 measurements per Pauli} 
\end{center}

In this subsection we discuss how to modify our DSS Algorithm, as stated in Algorithm \hyperref[alg:DSS]{1}, to accommodate the following setup: we want to measure each Pauli string in our set of interest $\{P\}$ at least $N_O$ times. And in particular, for Figure \ref{fig:Figure3}, we want to determine how many measurements are needed for this task. 
For protocols that effectively utilize each measurement, this will take fewer total shots than protocols that are ineffective -- for example, directly measuring each Pauli string requires $N = N_O * |\{P\}_P|$ measurements. However, this is the most inefficient strategy. 
To determine how many (and which) measurements to make for DSS, we have to modify DSS so it continues finding the next best measurement until we have learned each Pauli string at least $N_O$ times. 
We do so with the following two algorithmic updates...
    \begin{enumerate}
        \item We replace the for loop (line 4) by a while statement: while $\exists P \in \{P\}$ such that $h(P) < N_O$.
        \item  We also update our \textsf{COST} function: Once we have achieved $N_O$ measurements for some Pauli string $P$, we no longer care about learning $P$ and should remove it from our set of interest. As a result, we no longer include it in our \textsf{COST} function -- we only include the Pauli strings which we still need to learn.  
    \end{enumerate}

The first update to the DSS Algorithm is a straightforward change to line 4, but modifying the cost function requires more finesse. 
We update our \textsf{COST} function (see the original \textsf{COST} in Definition \ref{def:DSScost_appendix}) to only consider the Paulis of interest that also have not already been measured $N_O$ times, and therefore, our function becomes 
\begin{equation}
    \textsf{COST}_{N_O} \left(\{\mathcal{U}_i\}_{i}\right) = 2 \sum_{\substack{P \textnormal{ s.t. }\\ h(P) < N_O}} 
    \prod_{i=1}^{N_O * |\{P\}|} \exp{\left[ \hspace{1mm} -\frac{\epsilon^2}{2} p_i(P) \hspace{1mm} \right]}.
\end{equation}  
Here, again $\epsilon$ is a hyperparameter and $p_i(P)$ is the Pauli weight of $P$ under the ensemble defined by measurement ensemble $\mathcal{U}_i$. 
Moreover, notice the other change to this function -- we have replaced $N$ in the product to be $N_O * |\{P\}|$. This is because we do not actually know how many measurements are needed to learn all our strings $N_O$ times, and $N_O * |\{P\}|$ measurements is an upper bound because this is the number of direct measurements. 
One could also imagine this number to be another hyperparameter that can be tuned -- for example, if we expect the number of DSS measurements to be close to but below some other, tighter value, we can replace $N_O * |\{P\}|$ with this value.


\vspace{5mm}
\noindent
\begin{center}\label{app:ShallowShadowsBackground}
    \textit{Background on Shallow Shadows}
\end{center}

Randomized measurements, or classical shadows, enable the prediction of many properties of arbitrary quantum states using few measurements. 
While random single-qubit measurements are experimentally friendly and suitable for learning low-weight Pauli observables, they perform poorly for nonlocal observables \cite{huang2020predicting}. 
It has been shown that the shadow norm for predicting nonlocal Pauli observables scales as $3^k$, where $k$ is the number of non-trivial Pauli operators in the Pauli string \cite{huang2020predicting}. 
It can be understood using simple probabilistic arguments: for random single-qubit Clifford gates, it has an equal probability to rotate the measurement basis from $Z$ to $X$, $Y$, or $Z$. Therefore, for each qubit, we measure the $X,Y, \text{ and } Z$ basis with equal probability. And for a Pauli string operator with $k$ non-trivial Pauli operators, the probability of directly measuring it is $1/3^k$. Since Pauli operators have bounded norm, the shadow norm \cite{huang2020predicting} (maximum variance over states) is inversely proportional to this probability, i.e. $3^k$. Since Clifford circuits in general will map Pauli string operators to Pauli string operators, we can generalize this probabilistic intuition to other random Clifford circuit measurements. 

If one directly measures each qubit in the $Z$ basis, then all Pauli strings consisting solely of $I$ and $Z$ operators will be measured with probability one, while all others will have zero probability. There are $2^N$ total such operators consisting solely of $I$ and $Z$. If one applies a Clifford circuit before the $Z$ basis measurement, this unitary will transfer the $2^N$ operators only containing $I$ and $Z$ to a general set of Pauli strings that mutually commute with each other. This set is also called a \emph{stabilizer group}. Therefore, one can measure $2^N$ mutually commuting Pauli strings simultaneously with probability one with a certain Clifford circuit transformation. And for the other $4^N-2^N$ Pauli strings, this probability is zero, as they not measured.
Now imagine multiple Clifford circuits could be applied with some probability before the $Z$ basis measurement, the probability of being measured associated with each Pauli string will become a number between $0$ and $1$, i.e. $p(P)\in [0,1]$. Notice this is different than only being either $0$ or $1$ for a single Clifford transformation. And the shadow norm becomes $\norm{P}_{\text{shadow}}\propto 1/p(P)$. 

In addition to random Pauli and Clifford measurements, classical shadows generated by shallow brickwork circuits, \emph{shallow shadows}, have recently garnered significant attention. 
As one scales the circuit depth $d$ from 0 to $\infty$, the shallow shadow can extrapolate between the random Pauli measurement limit and the random Clifford measurement limit.
However, they are popular due to the intermediate regime between these two extremes: at the shallow circuit region, when circuit depth is logarithmic $d = \mathcal{O}(\log k)$, they perform almost optimally for learning all $k-$local random Pauli operators \cite{hu2023classical,ippoliti2023operator,bertoni2022shallow}. 
At this depth the shadow norm scaling is upper bounded by $\mathcal{O}(2.28^k)$, which is much better than the scaling $\mathcal{O}(3^k)$ of the random Pauli measurements \cite{ippoliti2023operator}.

One can understand this phenomenon with the following physical picture: consider the operator size under the shallow unitary transformation.
In this picture there are two competing forces: (1) information scrambling and (2) cancellation of Pauli operators. Let's first discuss information scrambling. When $d=0$, we randomly measure the single-qubit $X,Y,Z$ basis. However, when there is a brickwork circuit, the size of the single-qubit $X,Y,Z$ operator will grow because of the information scrambling. This light cone is linear in terms of circuit depth, and so the probability of directly measuring a particular Pauli operator $\hat{P}$ will decay. However, there is also a chance that there are some operators in the middle will be mapped to identity operators -- i.e. not all qubits within the light cone have non-trivial Pauli operators. This is the second competing force -- the ``cancellation of Pauli operators.'' This cancellation effect will increase the probability of directly measuring $k-$local Pauli operators. Interestingly, in \cite{ippoliti2023operator}, the authors show in the shallow circuit region, the cancellation effect will dominate such that the shadow norm will decay first before increasing for larger circuit depth. In addition, it has been recently shown that the lower bound for learning all $k-$local Pauli operators is $\Omega(2^k)$ \cite{2024arXiv240419105C}.

Finally, in Figure \ref{fig:Figure3}(b) we compare the performance of DSS to optimal-depth shallow shadows, where the optimal depth is a single layer of entangling gates. We know this is the optimal depth for the task of measuring each Pauli of interest $25$ times because we simulated shallow shadows with up to four layers of entangling gates. We found that beyond one layer of entangling gates, the shots -- that shallow shadows required for the task -- started increasing.

\vspace{5mm}
\noindent
\begin{center}\label{app:NaieveGroupingBackground}
    \textit{Naive grouping strategy for measuring the energy variance of the Hubbard model}
\end{center}

In this section, we provide additional information on the naive Pauli grouping scheme for measuring the energy variance of the Fermi-Hubbard chain as presented in Fig.~\ref{fig:Figure3} (b). The naive grouping approach involves organizing the Pauli strings into sets that can be simultaneously diagonalized using single-qubit rotations. As we will show below, within this scheme, the number of sets scales linearly with increasing number of particles.
We start by considering the composition of the transformed, qubitized Hamiltonian (from \equref{eqn:FERMIHUBBARDhamiltonian} above), which reads
\begin{align}
\begin{split}
    H &= -\frac{J}{2} \sum_{i=2}^{N - 1} \left( X_{i-1} Z_i X_{i + 1}+ Y_{i-1} Z_iY_{i+1}\right) + \frac{U}{4} \sum_{i=1}^{N/2} \left( \mathbb{1} - Z_{2i-1} \right) \left( \mathbb{1} - Z_{2i} \right) \\ &= H_J + H_U.
    \end{split}
\end{align}
Evaluating the energy variance involves estimation of expectation values of Pauli strings contained in the squared Hamiltonian $H^2 = H_J^2 + H_U^2 + \{ H_J, H_U\}$. While naively this would require a number of measurement bases that scales quadratically with the system size, the structure of the problem allows for more efficient grouping schemes. For example, the operator $H_U^2$ is diagonal in the Z-basis, enabling its evaluation through sampling in a single basis.

\addtocounter{figure}{+5}
\begin{figure*}[h]
\centering
\includegraphics[scale = 0.3]{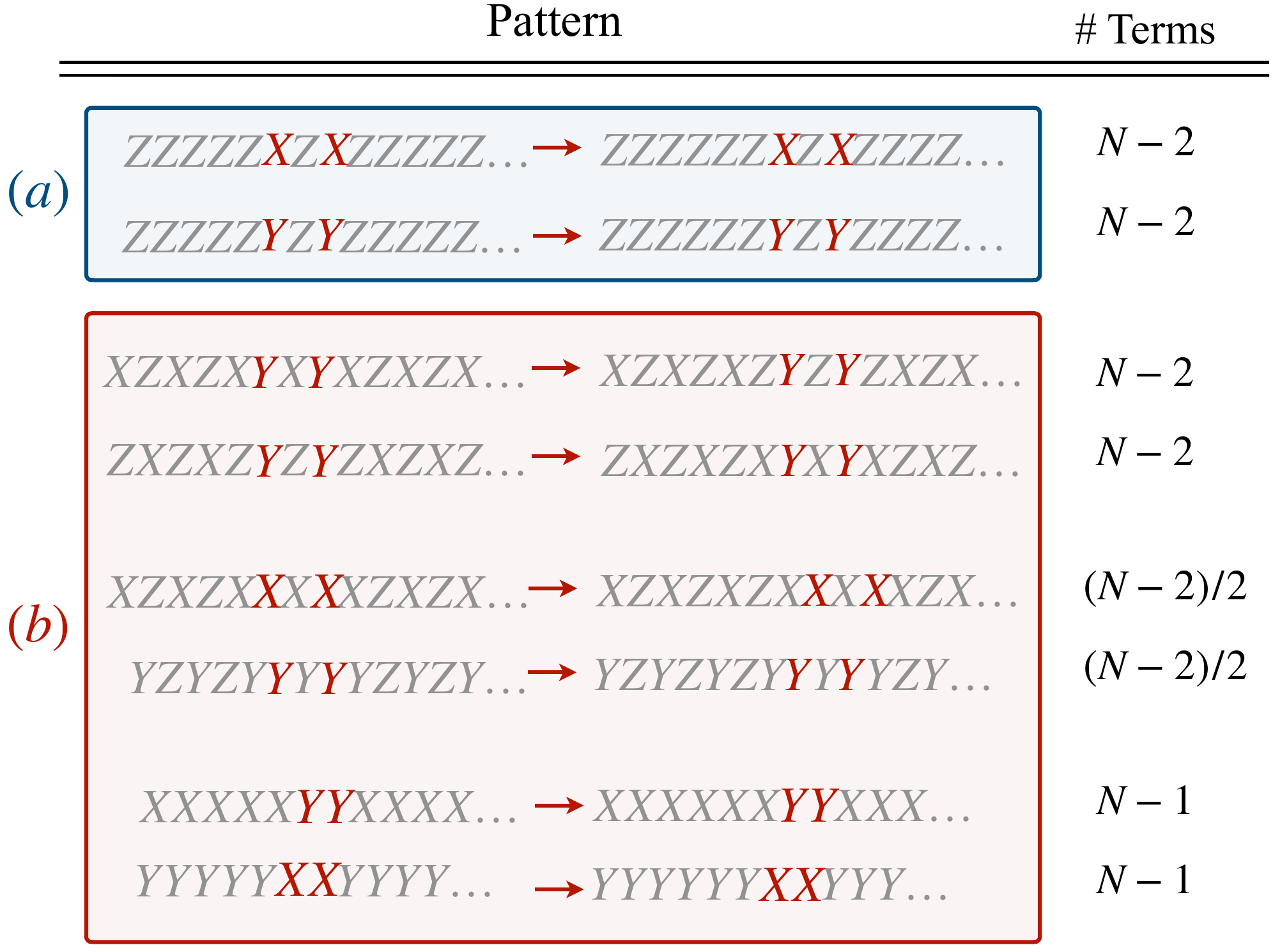}
\caption{
\emph{Naive grouping scheme for measuring the Pauli strings of the squared Hubbard Hamiltonian} (a) The $2(N-2)$ bases for measuring all terms of the anticommutator $\{H_J, H_U\}$ (see main text). The next-nearest-neighbor XX and YY operators are swept through the chain, resulting in $2(N-1)$ bases. (b) The required bases for measuring the square of the kinetic part of the Hubbard Hamiltonian: $H_J^2$.
}
\label{fig:naivegrouping}
\end{figure*}

We now focus on measuring expectation values of Pauli strings contained in the anticommutator $\{ H_J, H_U \}$. This operator is composed of correlation functions between tunneling terms $\left( X_{i-1} Z_i X_{i + 1}+ Y_{i-1} Z_iY_{i+1}\right)$ and doublon-densities $\left( \mathbb{1} - Z_{2i-1} \right) \left( \mathbb{1} - Z_{2i} \right)$, with their number scaling quadratically with system size. However, by transforming into the eigenbasis of
XZX (or YZY) for a single bond and measuring the remaining qubits in the Z-basis, we can measure a linear number of these correlation functions in parallel. As indicated in Fig.~\ref{fig:naivegrouping} (a), we can measure all correlation functions appearing in $\{ H_J, H_U \}$, by sweeping XZX (YZY) bases through the chain, resulting in a number $2(N-2)$ measurement bases.    

A similar scheme can be applied for measuring all correlation functions contained in $H_J^2$, which includes Pauli strings of weight up to 6. The patterns for the different basis are summarized in Fig.~\ref{fig:naivegrouping} (b). In conclusion, a total number of $7N - 11$ bases are required to measure the variance of the Hubbard Hamiltonian. We note that the prefactor of the linear scaling may be reduced further by removing additional redundancies in the patterns depicted in Supplementary Figure ~\ref{fig:naivegrouping}.

\end{appendix}

\end{document}